\newcommand{\distance}{\operatorname{Dist}}
\newcommand{\overlap}{\operatorname{Overlap}}
\newcommand{\pt}{\operatorname{PT}}
\newcommand{\piprod}{\Pi_{\mathrm{Prod}}}
\newcommand{\piswap}{\Pi_{\mathrm{SWAP}}}
\newcommand{\E}{\mathop{\bf E\/}}
\newif\ifnotes\notestrue
\definecolor{mygrey}{gray}{0.50}
\newcommand{\notename}[2]{{\textcolor{mygrey}{\footnotesize{\bf (#1:} {#2}{\bf ) }}}}
\newcommand{\pnote}[1]{{\endnote{#1}}}
\newcommand{\notename}[2]{{}}
\newcommand{\pnote}[1]{}
\begin{document}

\title{Testing matrix product states}

\author{Mehdi Soleimanifar \thanks{Center for Theoretical Physics, MIT. \href{mailto:mehdis@mit.edu}{mehdis@mit.edu}}\and John Wright \thanks{Department of Electrical Engineering and Computer Sciences, University of California, Berkeley. Most of this work was done while a postdoc at the Center for Theoretical Physics, MIT and a professor in the Department of Computer Science, University of Texas at Austin.  \href{mailto:wright@cs.utexas.edu}{wright@cs.utexas.edu}}\vspace{5mm} }

\date{\today}
\maketitle
\begin{abstract}

Matrix product states (MPS)
are a class of physically-relevant quantum states
which arise in the study of quantum many-body systems.
A quantum state $\ket{\psi_{1, \ldots, n}} \in \mathbb{C}^{d_1} \otimes \cdots \otimes \mathbb{C}^{d_n}$
comprised of $n$ qudits is said to be an MPS of bond dimension~$r$
if the reduced density matrix $\psi_{1, \ldots, k}$ has rank~$r$ for each $k \in \{1, \ldots, n\}$.
When $r = 1$, this corresponds to the set of product states,
i.e.\ states of the form $\ket{\psi_1} \otimes \cdots \otimes \ket{\psi_n}$,
which possess no entanglement.
For larger values of~$r$,
this yields a more expressive class of quantum states,
which are allowed to possess limited amounts of entanglement. 

Devising schemes for testing the amount of entanglement in quantum systems has played a crucial role in quantum computing and information theory.
In this work, we study the problem of testing whether an unknown state~$\ket{\psi}$
is an MPS in the property testing model.
In this model, one is given $m$ identical copies of $\ket{\psi}$,
and the goal is to determine whether $\ket{\psi}$ is an MPS of bond dimension~$r$
or whether $\ket{\psi}$ is far from all such states.
For the case of product states,
we study the product test,
a simple two-copy test previously analyzed by Harrow and Montanaro~\cite{Harrow2013_product_testing},
and a key ingredient in their proof that $\QMA(2) = \QMA(k)$ for $k \geq 2$.
We give a new and simpler analysis of the product test which achieves an optimal bound for a wide range of parameters,
answering open problems in~\cite{Harrow2013_product_testing} and \cite{montanaro2013survey}.
For the case of $r \geq 2$,
we give an efficient algorithm for testing whether $\ket{\psi}$ is an MPS of bond dimension~$r$
using $m = O(n r^2)$ copies, independent of the dimensions of the qudits,
and we show that  $\Omega(n^{1/2})$ copies are necessary for this task.
This lower bound shows that a dependence on the number of qudits~$n$
is necessary, in sharp contrast to the case of product states where a constant number of copies suffices.

 \end{abstract}
\newpage
\section{Introduction}

This paper is about matrix product states (MPS).
\begin{definition}[Matrix product states]
A quantum state $\ket{\psi} \in \mathbb{C}^{d_1} \otimes \cdots \otimes \mathbb{C}^{d_n}$
consisting of $n$ qudits
is a \emph{matrix product state with bond dimension~$r$}
if it can be written as
\begin{equation*}
\ket{\psi_{1, \ldots, n}} 
= \sum_{i_1 \in [d_1], \ldots, i_n \in [d_n]} \Tr[A^{(1)}_{i_1} \cdots A^{(n)}_{i_n}] \cdot \ket{i_1 \cdots i_n},
\end{equation*}
where each matrix $A^{(i)}_j$ is an $r \times r$ complex matrix, for $i \in [n]$ and $j \in [d_i]$.
We write $\mps_n(r)$ for the set of such states, or more simply $\mps(r)$ when the dependency on $n$ is clear from the context. 
\end{definition}
The parameter~$r$ controls the amount of entanglement $\ket{\psi}$ is allowed to possess,
and as it increases, the set of MPS grows larger and more expressive.
On one extreme, when $r = 1$ this corresponds to the set of product states, i.e.\ state of the form $\ket{\psi_1} \otimes \cdots \otimes \ket{\psi_n}$,
which possess no entanglement between different qudits. 
On the other extreme, every state $\ket{\psi}$, even a highly entangled one, is an MPS of bond dimension $r = d_1 \cdots d_n$.
Between these two extremes,
MPS allow for nonzero though still limited entanglement,
which grows with~$r$.
This can be seen more readily in the following alternative characterization of MPS,
which states that $\ket{\psi_{1, \ldots, n}}$ is an MPS of bond dimension~$r$
if and only if $\psi_{1, \ldots, k}$ has rank $r$ for each $1 \leq k \leq n$,
where $\psi_{1, \ldots, k}$ is the reduced density matrix on the first~$k$ qudits.
Here, we say that a Hermitian matrix has rank~$r$ if it has at most $r$ nonzero eigenvalues.
This implies, for example,
that the entanglement entropy between the first~$k$ and the last $n-k$ qudits is always at most $\log(r)$, for each $k$.
We will prefer this alternative characterization in this paper.

MPS feature prominently in the study of quantum many-body physics,
with a particular emphasis on one-dimensional quantum systems.
In a typical one-dimensional quantum system,
$n$ qudits are arranged on a line,
and their interactions are governed by a local Hamiltonian~$H$
which only contains local terms between neighboring qudits,
i.e. terms of the form $H_{i, i+1}$.
The \emph{one-dimensional area law} of Hastings~\cite{hastings2007area_law},
as well as further refinements in~\cite{AradFrustrationFreeAreaLaw,Arad2013AreaLaw,Landau2015polynomialTimeAlgorithm},
implies that if~$H$ is a gapped Hamiltonian,
then its ground state $\ket{\psi_{1, \ldots, n}}$,
is well-approximated by an MPS of ``small'' bond dimension.
One-dimensional quantum systems are an important class of physically-motivated systems,
and this characterization in terms of MPS means they are tractable to analyze with computers.
For example,~\cite{AradRigorousRG} have developed rigorous algorithms for approximating the ground state of a one-dimensional gapped Hamiltonian.
And~\cite{cramer2010efficientTomographyMPS} have suggested using MPS tomography to efficiently learn the state of a one-dimensional system using a small number of copies,
motivated by the fact that an MPS only has $(d_1 + \cdots + d_n) r^2$ parameters to ``learn'',
exponentially fewer than the $d_1 \cdots d_n$ parameters of a general quantum state. The classical tractability of matrix product states has also resulted in their widespread application as a computational method in the classical simulation of quantum circuits, both in one and higher dimensions. This includes the simulation of shallow quantum circuits~\cite{napp2019efficient2d,bravyi2021classicalMeanValue,Coudron2020Shallow}, slightly entangled quantum circuits~\cite{VidalClassicalSim}, and noisy quantum circuits~\cite{ZhouLimitsSimulationofQuantum}.

In this work, we study the problem of ``testing'' whether an unknown state $\ket{\psi}$ is an MPS.
We will study this in the model of property testing.
In this model, an algorithm is given access to multiple copies of $\ket{\psi}$ which it is allowed to measure;
its goal is to determine if $\ket{\psi}$ is an MPS using as few copies as possible.
This problem has been previously studied for the $r = 1$ case of product states by Harrow and Montanaro~\cite{Harrow2013_product_testing},
and studying the case of general~$r$ was suggested as an open direction by Montanaro and de Wolf~\cite{montanaro2013survey}.
To define this model, we begin by formally defining what it means for a state to be ``far'' from being an MPS.

\begin{definition}[Distance to $\mps(r)$]
Given $n\geq 1$ and a state $\ket{\psi} \in \bbC^{d_1} \ot \cdots \otimes \bbC^{d_n}$,
the distance of $\ket{\psi}$ to the set $\mps(r)$ is defined as
\begin{equation*}
\distance_r(\ket{\psi})
= \min_{\ket{\phi} \in \mps(r)}\mathrm{D}_{\mathrm{tr}}(\psi, \phi)
= \min_{\ket{\phi} \in \mps(r)}\sqrt{1 - |\braket{\psi}{\phi}|^2},
\end{equation*}
where $\mathrm{D}_{\mathrm{tr}}(\cdot, \cdot)$ denotes the standard \emph{trace distance},
and $\psi$ and $\phi$ denote the mixed states corresponding to $\ket{\psi}$ and $\ket{\phi}$, respectively.
Sometimes we will prefer to work with the maximum squared overlap of $\ket{\psi}$ with $\mps(r)$, defined as
\begin{equation*}
\overlap_r(\ket{\psi}) = \max_{\ket{\phi} \in \mps(r)} |\braket{\psi}{\phi}|^2.
\end{equation*}
When referring to the distance, we will typically use the variable name $\delta = \distance_r(\ket{\psi})$,
and when referring to the overlap, we will typically use $\omega = \overlap_r(\ket{\psi})$
or, alternatively, $1- \eps = \omega$.
Note that
\begin{equation*}
\delta
= \sqrt{1 - \omega}
=\sqrt{\eps}.
\end{equation*}
\end{definition}

Now we define the problem we consider, that of property testing MPS. 

\begin{definition}[$\mps(r)$ tester]\label{def:tester}
An algorithm $\mathcal{A}$ is a \emph{property tester for $\mps(r)$ using $m = m(n, r, \d)$ copies}
if, given $\d >0$ and $m$ copies of $\ket{\psi} \in \bbC^{d_1} \ot \cdots \otimes \bbC^{d_n}$, it acts as follows.
\begin{itemize}
\item[$\circ$](Completeness): If $\ket{\psi} \in \mps(r)$, then
		\begin{equation*}
		\Pr[\text{$\mathcal{A}$ accepts given $\ket{\psi}^{\otimes m}$}] \geq\tfrac{2}{3}.
		\end{equation*}
		If instead it accepts with probability exactly~$1$ in this case, we say that it has \emph{perfect completeness}.
\item[$\circ$] (Soundness): If $\distance_r(\ket{\psi}) \geq \d$, then 
		\begin{equation*}
		\Pr[\text{$\mathcal{A}$ accepts given $\ket{\psi}^{\otimes m}$}] \leq\tfrac{1}{3}.
		\end{equation*}
\end{itemize}
\end{definition}
\noindent
All property testers considered in this work have perfect completeness,
whereas our lower bounds will apply to property testers even with imperfect completeness.

Previous works have considered testing a variety of properties of quantum states.
Perhaps the most relevant is that of O'Donnell and Wright~\cite{Wright2015_spectrum_testing},
which considered testing properties of a mixed state $\rho$'s spectrum,
such as testing whether its rank is at most~$r$---we will revisit this later.
Another relevant work is that of Harrow, Montanaro, and Lin~\cite{HarrowSequentialMeasurements},
which considers the problem of testing whether $\ket{\psi_{1, \ldots, n}}$ is a product state across \emph{some} cut,
meaning there exists an $S \subseteq \{1, \ldots, n\}$ such that $\ket{\psi_{1, \ldots, n}} = \ket{\psi_S} \otimes \ket{\psi_{\overline{S}}}$.
If not, they say that $\ket{\psi_{1, \ldots, n}}$ possesses ``genuine $n$-partite entanglement''.
(In contrast, in $r = 1$ case of product testing, we want to verify that $\ket{\psi_{1, \ldots, n}}$ is a product state across \emph{every} cut~$S$.)
They give a tester for this problem which uses $m = O(n/\epsilon^2)$ copies of the state.
For more on quantum property testing, see the survey of Montanaro and de Wolf~\cite{montanaro2013survey}.

More broadly, testing and characterizing the entanglement of quantum systems
has been an important theme running throughout quantum computation,
even outside the model of property testing.
This includes the study of nonlocal games,
where the CHSH game~\cite{CHCH} allows one to verify that two parties share an EPR state, with applications in delegation of quantum computation~\cite{MahadevVerification,ColadangeloVerifierLeash,reichardt2013classicalLeash}, device-independent quantum cryptography~\cite{VaziraniQKD}, and interactive proof systems~\cite{MIPeqRE}.
Moreover, the communication complexity of two-party protocols for testing shared entangled states, including EPR states, has been used to reveal the properties of entanglement in ground states of local Hamiltonians~\cite{AharonovCounterExample,anshu2020spread}.  

We emphasize that we are specifically considering property testing of \emph{pure} states.
In particular, we assume that the state the algorithm $\mathcal{A}$ is given~$m$ copies of is pure, not mixed.
There are, however, problems related to ours in the property testing of mixed states, although we do not cover these in this work.
One example is the question of testing whether a \emph{mixed} state $\rho_{\mathrm{AB}}$ on two $d$-dimensional subsystems is separable (i.e.\ not entangled),
which is both fascinating and still very much open.
The best known algorithm for this problem is the trivial one:
simply use $O(d^4)$ copies to ``learn'' $\rho_{\mathrm{AB}}$ and classically compute whether it is entangled.
On the other hand, the best known lower bound is~$\Omega(d^2)$.
Another example is the problem of testing whether $\rho_{\mathrm{AB}}$ is a tensor product,
i.e.\ whether $\rho_{\mathrm{AB}} = \rho_A \otimes \rho_B$.
For this problem, we \emph{do} 
know the optimal bound: $\Theta(d^2)$ copies,
given by the algorithm of~\cite{NengkunYuIndependenceTesting}.
One convenience of pure states is that these two problems coincide for this case,
since a pure state is a product state if and only if it is unentangled.
For mixed states, this is not true.

While in this work, we focus primarily on MPS. We note that these states are a special example of the more general class of tensor network states. Devising learning and testing algorithms for these states is an interesting future direction to explore. 
\subsection{The product test}

We begin with the simplest case of MPS testing,
when the bond dimension $r = 1$,
which corresponds to testing whether $\ket{\psi} \in \mathbb{C}^{d_1} \otimes \cdots \otimes \mathbb{C}^{d_n}$ is a product state.
We study a simple two-copy property tester for this problem
known as the \emph{product test}
which was introduced by Mintert, Ku\'{s}, and Buchleitner~\cite{mintert2005}
and later studied by Harrow and Montanaro~\cite{Harrow2013_product_testing}.
The product test is itself built out of a simpler subroutine known as the \emph{SWAP test} due to Buhrman, Cleve, Watrous, and de Wolf~\cite{BuhrmanQuantumFingerprinting},
which measures the similarity between two qudit states $\ket{a}, \ket{b} \in \mathbb{C}^d$.

\begin{definition}[The SWAP test]
Given two qudit states $\ket{a}, \ket{b} \in \mathbb{C}^d$,
the SWAP test applies the two-outcome projective measurement $\{\piswap, \iden - \piswap\}$ to $\ket{a} \otimes \ket{b}$,
where $\piswap = (\iden + \mathsf{SWAP})/2$.
Here, $\mathsf{SWAP}$ is the two-qudit swap operator, defined as
\begin{equation*}
\mathsf{SWAP} \ket{i} \otimes \ket{j} = \ket{j} \otimes \ket{i}
\end{equation*}
for all $i, j \in [d]$.
The test accepts if it observes the first outcome,
and it rejects otherwise.
\end{definition}

It can be checked that the SWAP test succeeds with probability $\tfrac{1}{2} + \tfrac{1}{2} |\braket{a}{b}|^2$.
In particular, it succeeds with probability~$1$ if and only if, modulo a phase factor, $\ket{a} = \ket{b}$.
Having defined the SWAP test, we can now define the product test.

\begin{definition}[The product test]
Given two copies of a state $\ket{\psi} \in \mathbb{C}^{d_1} \otimes \cdots \otimes \mathbb{C}^{d_n}$,
the product test performs the SWAP test on the $i$-th qudit in each copy of $\ket{\psi}$, simultaneously over all $i \in [n]$,
and accepts if they all accept.
Equivalently, it performs the two-outcome projective measurement $\{\piprod, \iden-\piprod\}$,
where
$
\piprod = \piswap^{\otimes n}.
$
and the $i$-th $\piswap$ applies to the $i$-th qudits in both copies of $\ket{\psi}$.
We include an illustration of the product test in~\fig{product-test}.
\end{definition}

\begin{figure}
\centering
\begin{subfigure}[b]{0.49\textwidth}
\centering
\begin{tikzpicture}[scale=1]
\filldraw[rounded corners][fill=blue!10] (-0.6,-0.4) rectangle (4.6,0.4);
\filldraw[rounded corners][fill=blue!10] (-0.6,0.6) rectangle (4.6,1.4);

\draw[rounded corners] (-0.4,-0.6) rectangle (0.4,1.6);
\draw[rounded corners] (0.6,-0.6) rectangle (1.4,1.6);
\draw[rounded corners] (1.6,-0.6) rectangle (2.4,1.6);
\draw[rounded corners] (3.6,-0.6) rectangle (4.4,1.6);

\foreach \x in {1,...,3} {
\filldraw[fill=purple!10] (\x-1,0) node {\x} circle (0.25);
\filldraw[fill=purple!10] (\x-1,1) node {\x} circle (0.25);
 }
 \node at (3,0) {\Large ...};
 \node at (3,1) {\Large ...};
 \filldraw[fill=purple!10] (4,0) node {$n$} circle (0.25);
 \filldraw[fill=purple!10] (4,1) node {$n$} circle (0.25);
 \node[anchor=east] at (-0.8,1) {$\ket{\psi}$};
 \node[anchor=east] at (-0.8,0) {$\ket{\psi}$};
 \end{tikzpicture}
 \caption{The product test performs a SWAP test on each of the $n$ pairs of subsystems of the two copies of $\ket{\psi}.$
 Figure taken from~\cite{Harrow2013_product_testing}.}
 \label{fig:product-test}
 \end{subfigure}
\hfill
\begin{subfigure}[b]{0.49\textwidth}
\begin{tikzpicture}[scale=1]
 \filldraw[rounded corners][fill=blue!10] (-0.6+9,-1.4) rectangle (5.6+9,-0.6);
 \filldraw[rounded corners][fill=blue!10] (-0.6+9,0.6) rectangle (5.6+9,1.4);
 \draw[rounded corners] (-1 +0.6+9,-1.5) rectangle (-1 +0.6+9+0.8,1.5);
 \draw[rounded corners] (-1 +0.6+9,-1.5) rectangle (-1 +1.6+9+0.8,1.6);
 \draw[rounded corners] (-1 +0.6+9,-1.5) rectangle (-1 +2.6+9+0.8,1.7);
 \draw[rounded corners] (-1 +0.6+9,-1.5) rectangle (-1 +4.6+9+0.8,1.8);
\foreach \x in {1,...,3} {
   \filldraw[fill=purple!10] (\x-1+9,-1) node {\x} circle (0.27);
 \filldraw[fill=purple!10] (\x-1+9,1) node {\x} circle (0.27);
 }
 \node at (3+9,-1) {\Large ...};
 \node at (3+9,1) {\Large ...};
 \node at (3+9,0.1) {\Large $\vdots$};
 \node at (9,0.1) {\Large $\vdots$};
 \node at (5+9,0.1) {\Large $\vdots$};
 \node at (-1.2+9,0.1) {\Large $\vdots$};
 \filldraw[fill=purple!10] (4+9,-1) node {\small$n$-$1$} circle (0.27);
 \filldraw[fill=purple!10] (4+9,1) node {\small$n$-$1$} circle (0.27);
 \filldraw[fill=purple!10] (5+9,-1) node {$n$} circle (0.27);
 \filldraw[fill=purple!10] (5+9,1) node {$n$} circle (0.27);
 \node[anchor=east] at (-0.8+9,1) {$\ket{\psi}$};
 \node[anchor=east] at (-0.8+9,-1) {$\ket{\psi}$};
 \end{tikzpicture}
  \caption{The MPS tester simultaneously performs the rank tester on each of the $n-1$ contiguous cuts across the multiple copies of $\ket{\psi}$.}
  \label{fig:mps-tester}
 \end{subfigure}

 \caption{The product test and MPS tester.}\label{fig:product_test}
 \end{figure}
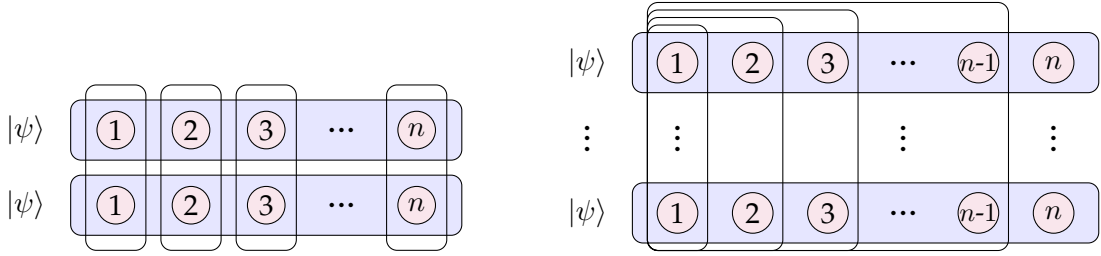
 
In the case when $\ket{\psi}$ is a product state,
i.e.\ $\ket{\psi} = \ket{\psi_1} \otimes \cdots \otimes \ket{\psi_n}$,
the product test passes with probability~$1$,
because for each $i \in [n]$
the $i$-th SWAP test is applied to $\ket{\psi_i}\otimes \ket{\psi_i}$,
and so it always succeeds.
This property of always accepting product states is known as \emph{perfect completeness}.
In fact, Harrow and Montanaro~\cite[Section 5]{Harrow2013_product_testing}
show that the product test is the optimal two-copy test for product states with perfect completeness,
in the sense that any other two-copy test with perfect completeness will reject any non-product state~$\ket{\psi}$
with at most the probability the product test rejects it.

We are interested in the maximum probability a state passes the product test, defined as follows.

\begin{definition}\label{def:PT}
 Let $n \geq 1$ and $\omega \in [0,1]$. Given a state $\ket{\psi} \in \bbC^{d_1} \ot \cdots \otimes \bbC^{d_n}$ we define $\operatorname{PT}_n(\ket{\psi})$ to be the probability the product test succeeds on $\ket{\psi}$. In addition, we define $\operatorname{PT}_n(\omega)$ to be the supremum of $\operatorname{PT}_n(\ket{\psi})$ over all $n$-partite states $\ket{\psi}$ such that $\overlap_1(\ket{\psi})= \omega$.
\end{definition}

The main result of Harrow and Montanaro~\cite{Harrow2013_product_testing}
is the following upper-bound on $\pt_n(\omega)$.
It will be more convenient to parameterize their result by $\eps$, where $1 - \eps = \omega$.

\begin{thm}[{\cite[Theorem 1]{Harrow2013_product_testing}}]\label{thm:hm}
For all $n \geq 1$ and $0 < \eps < 1$,
\begin{equation*}
\pt_n(1-\epsilon) \leq \min\{1- \epsilon + \eps^2 + \eps^{3/2}, 1 - \tfrac{11}{512} \eps\}.
\end{equation*}
Equivalently, we may write 
\begin{equation*}
\operatorname{PT}_{n}(1-\eps)\leq
\left\{\begin{array}{cl}
1- \epsilon + \eps^2 + \eps^{3/2} & \text{if $\eps \leq \eps_0$},\\
1 - \tfrac{11}{512} \eps & \text{if $\eps \geq \eps_0$,}
\end{array}
\right.
\end{equation*}
where $\eps_0 = \tfrac{1}{512} (757 - 16 \sqrt{1258}) \approx 0.37$.
We include a plot of this upper-bound in Figure~\ref{fig:hm-vs-us}.
\end{thm}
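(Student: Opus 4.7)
The plan is to reduce $\pt_n(\ket\psi)$ to a combinatorial sum over subsets, perform a local expansion around the nearest product state to obtain the small-$\eps$ bound, and handle the large-$\eps$ regime with a separate global argument.

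First, expanding $\piswap = (I+\mathsf{SWAP})/2$ and using the swap-trick identity
\[
\mathrm{Tr}\!\Bigl[\bigl(\prod_{i\in S}\mathsf{SWAP}_i\bigr)(\psi\otimes\psi)\Bigr] = \mathrm{Tr}(\psi_S^2),
\]
where $\psi_S$ is the reduced density matrix on subsystems $S$, one gets the central identity
\[
\pt_n(\ket\psi) = \frac{1}{2^n}\sum_{S\subseteq[n]}\mathrm{Tr}(\psi_S^2).
\]
For the small-$\eps$ bound, let $\ket\phi = \ket{\phi_1}\otimes\cdots\otimes\ket{\phi_n}$ be a product state attaining $|\braket{\phi}{\psi}|^2 = 1-\eps$. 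Define the orthogonal projectors $\Pi_S = \bigotimes_{i\in S}(I - \ket{\phi_i}\bra{\phi_i})\otimes\bigotimes_{i\notin S}\ket{\phi_i}\bra{\phi_i}$, set $\ket{\psi_S} = \Pi_S\ket\psi$, and $\alpha_S = \|\ket{\psi_S}\|^2$. Then $\alpha_\emptyset = 1-\eps$, while the first-order optimality of $\ket\phi$ forces $\alpha_{\{i\}} = 0$ for every $i$, so in fact $\sum_{|S|\geq 2}\alpha_S = \eps$.

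Now expand $\|\piprod\ket{\psi}^{\otimes 2}\|^2$ using $\ket\psi = \sum_S\ket{\psi_S}$. The tensor structure of $\piprod$ together with that of the $\Pi_S$ forces each qudit-pair where exactly one copy is ``perpendicular'' (lies in $\ket{\phi_i}^\perp$) to contribute a damping factor of $\sqrt{1/2}$, since $\piswap\ket{a}\ket{b}$ has norm squared $\tfrac{1}{2}(1+|\braket{a}{b}|^2)$. Accounting: the diagonal contribution from $\alpha_\emptyset$ is $(1-\eps)^2$; the diagonal contributions from $|S|\geq 2$ components accumulate $\sum_{|S|\geq 2}\alpha_S\cdot 2^{-|S|}\leq \eps/4$; cross terms between $\ket{\psi_\emptyset}$ and higher-order $\ket{\psi_S}$ are bounded by Cauchy--Schwarz combined with the $2^{-|S|/2}\leq 1/2$ damping, yielding the $\eps^{3/2}$ correction; and cross terms purely among $|S|\geq 2$ components contribute $O(\eps^2)$. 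Summing these pieces gives $\pt_n(\ket\psi) \leq 1-\eps+\eps^2+\eps^{3/2}$.

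For $\eps\geq \eps_0$ the local expansion is no longer tight, so the large-$\eps$ bound $1-\tfrac{11}{512}\eps$ requires a separate argument. A natural route is to identify a bipartition $T\subseteq[n]$ along which $\mathrm{Tr}(\psi_T^2)$ is bounded away from $1$, exploiting that $\ket\psi$ is far from every product state and hence in particular from the set of states that factor across the cut $T$; this shaves a linear-in-$\eps$ amount off the sum $2^{-n}\sum_T\mathrm{Tr}(\psi_T^2)$, and the constant $11/512$ is then tuned to match the two regimes at the crossover $\eps_0 \approx 0.37$. The hardest step is the local analysis: a naive Cauchy--Schwarz on the cross terms would give only an $O(\eps)$ correction, destroying the leading $1-\eps$ behavior, so one must carefully extract the additional $\sqrt{\eps}$ saving from the $2^{-|S|/2}$ damping together with the vanishing of $\alpha_{\{i\}}$ imposed by optimality of $\ket\phi$.
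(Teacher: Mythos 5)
The statement you are asked to prove is not proved in this paper at all: it is Theorem~1 of Harrow and Montanaro, cited as prior work. Your proposal is therefore a reconstruction of the Harrow--Montanaro argument, and its skeleton is indeed theirs: the identity $\pt_n(\ket\psi) = 2^{-n}\sum_{S\subseteq[n]}\Tr(\psi_S^2)$, the orthogonal decomposition of $\ket\psi$ into weight classes relative to the optimal product state $\ket\phi$, and the observation that first-order optimality of $\ket\phi$ kills the weight-one component ($\alpha_{\{i\}}=0$). That much is correct.

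What you have, though, is a sketch with two real gaps. For the small-$\eps$ bound, the cross-term bookkeeping is the entire content of the argument --- you acknowledge a naive Cauchy--Schwarz yields only an $O(\eps)$ error and destroys the leading term, but you do not actually extract the $\sqrt\eps$ saving. Making this precise requires tracking which pairs $(S,T)$ in the double sum survive the $\piprod$ projection, how the $2^{-|S\triangle T|}$ damping interacts with the vanishing of $\alpha_{\{i\}}$, and why the residue is exactly $\eps^{3/2}+\eps^2$ rather than something worse. For the large-$\eps$ bound, your description is not an argument: saying that some bipartition $T$ must have $\Tr(\psi_T^2)$ bounded away from $1$, and that the constant $11/512$ is then ``tuned,'' does not explain where $11/512$ comes from. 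In the actual Harrow--Montanaro proof this regime is a separate and rather delicate case analysis, not a tuning step.

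Finally, note that this paper deliberately avoids the entire approach you are reconstructing. Theorem~\ref{thm:product-test-main} improves on Theorem~\ref{thm:hm} for every $\omega<1$ and is proved in Sections~\ref{sec:prod-simple}--\ref{sec:prod-tight} by a short induction on~$n$: peel off the first qudit via its Schmidt decomposition, bound the probability the remaining $(n-1)$-qudit product test passes by the inductive hypothesis, and balance. That argument never invokes the trace-squared identity, never decomposes around the globally optimal product state, and never needs the local-optimality condition $\alpha_{\{i\}}=0$; it also has no separate case for large $\eps$. If your goal is to understand the bound in this theorem, reconstructing Harrow--Montanaro is a reasonable exercise, but you should be aware the paper's own route to a strictly stronger bound is both different and considerably simpler.
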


The most important regime of parameters is when $\eps$ is a constant,
in which case this result states that the product test rejects with constant probability.
This implies that two copies are sufficient to test if $\ket{\psi}$
is constantly far from being product.
\thmref{hm} is a key ingredient in Harrow and Montanaro's proof that $\QMA(2) = \QMA(k)$ for $k \geq 2$~\cite{Harrow2013_product_testing}.
Here, $\QMA(k)$ refers to \emph{Quantum Merlin Arthur with multiple certificates},
the complexity class which contains all problems solvable by a quantum polynomial-time verifier
with the help of $k$ unentangled proofs.
Their result shows that a verifier can use two unentangled copies of a proof $\ket{\psi}$
to simulate $k$ unentangled proofs
by running the product test to enforce that it is of the form $\ket{\psi_1} \otimes \cdots \otimes \ket{\psi_k}$.
As further applications of \thmref{hm},
they are able to derive hardness results for numerous (19, in fact!)
problems both in and out of quantum information theory related to entanglement, tensor optimization, and other topics.
For example, one of their applications is to the problem of detecting separability,
in which the goal is to compute whether a mixed state $\rho$ on two subsystems of dimension~$d$
(described by a $d^2 \times d^2$ complex matrix)
is separable or entangled.
They show that there exists a constant $\delta > 0$
such that if $K$ is a convex set in which every element has trace distance~$\delta$ to a separable state,
then there is no polynomial time algorithm for computing whether $\rho \in K$
unless $\text{3-SAT} \in \mathsf{DTIME}(\mathrm{exp}(\sqrt{n} \log^{O(1)}(n)))$.
See~\cite[Section 4.2]{Harrow2013_product_testing} for further details
and descriptions of the 18 other applications.

Our first result is a new and simpler analysis of the product test which yields an improved bound.
We show the following.

\begin{thm}[Product test upper-bound]\label{thm:product-test-main}
For all $n\geq 1$,
\begin{equation*}
\operatorname{PT}_{n}(\omega)\leq
\left\{\begin{array}{cl}
\omega^2 - \omega + 1 & \text{if $\omega \geq \tfrac{1}{2}$},\\
\tfrac{1}{3} \omega^2 + \tfrac{2}{3} & \text{otherwise.}
\end{array}
\right.
\end{equation*}
We include a plot of this upper-bound in Figure~\ref{fig:hm-vs-us}.
\end{thm}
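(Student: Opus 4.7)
The plan is to start from the identity $\pt_n(\ket{\psi}) = 2^{-n}\sum_{S\subseteq[n]}\Tr(\psi_S^2)$, which follows by expanding $\piprod = \piswap^{\otimes n} = 2^{-n}\sum_S\mathsf{SWAP}_S$ and using the standard relation $\bra{\psi}^{\otimes 2}\mathsf{SWAP}_S\ket{\psi}^{\otimes 2} = \Tr(\psi_S^2)$. This reduces the task to bounding an average of reduced-state purities over subsets.

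For the base case $n=2$, I would argue directly from the Schmidt decomposition. Letting $\lambda_1\geq\lambda_2\geq\cdots$ be the Schmidt coefficients across the sole bipartition, one has $\omega=\lambda_1$ and $\pt_2(\ket{\psi}) = \tfrac12(1+\sum_k\lambda_k^2)$. In the regime $\omega\geq 1/2$ the estimate $\sum_{k\geq 2}\lambda_k^2\leq(\sum_{k\geq 2}\lambda_k)^2 = (1-\omega)^2$ gives $\sum_k\lambda_k^2\leq 2\omega^2-2\omega+1$ and hence $\pt_2\leq\omega^2-\omega+1$. In the regime $\omega<1/2$ the estimate $\sum_k\lambda_k^2\leq\lambda_1\cdot\sum_k\lambda_k=\omega$ gives $\pt_2\leq(1+\omega)/2$, and an elementary check using the factorization $(2\omega-1)(\omega-1)\geq 0$ on $[0,1/2]$ confirms $(1+\omega)/2\leq\omega^2/3+2/3$.

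For $n\geq 3$ the plan is to decompose $\ket{\psi} = \sqrt{\omega}\ket{\phi} + \sqrt{1-\omega}\ket{\phi^\perp}$ with $\ket{\phi} = \ket{\phi_1}\otimes\cdots\otimes\ket{\phi_n}$ the closest product state, and expand $\pt_n$ in the basis $\{\ket{\phi\phi},\ket{\phi\phi^\perp},\ket{\phi^\perp\phi},\ket{\phi^\perp\phi^\perp}\}$. Because $\piprod\ket{\phi\phi}=\ket{\phi\phi}$ and $\braket{\phi}{\phi^\perp}=0$, several of the sixteen expansion terms vanish, leaving $\pt_n = \omega^2 + 4\omega(1-\omega)D + 4\sqrt{\omega(1-\omega)}(1-\omega)\mathrm{Re}(F) + (1-\omega)^2 G$, where $D,F,G$ are the matrix elements of $\piprod$ among the non-$\ket{\phi\phi}$ blocks. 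After applying local unitaries so that $\ket{\phi}=\ket{0^n}$, first-order stationarity of $\omega$ as a maximum over product states forces $\ket{\phi^\perp}$ to have zero amplitude on every Hamming-weight-$1$ basis string; expanding $\piprod = 2^{-n}\sum_S\mathsf{SWAP}_S$ then yields $D = \E_{x\sim|\phi^\perp|^2}[2^{-w(x)}]\leq 2^{-2}=1/4$, so the first correction contributes at most $\omega(1-\omega)$.

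The main obstacle is controlling the contributions of $F$ and $G$: the quantity $G = \pt_n(\ket{\phi^\perp})$ can itself be close to $1$ when $\ket{\phi^\perp}$ happens to sit near some other product state, so a naive Cauchy--Schwarz bound $|F|\leq\sqrt{DG}$ is not sharp enough to close the inequality (and indeed is seen to be loose in the tight example $\sqrt{\omega}\ket{0^n}+\sqrt{1-\omega}\ket{1^n}$, where $F=0$). A finer matrix-level estimate for the restriction of $\piprod$ to the four-dimensional $\{\ket{\phi},\ket{\phi^\perp}\}^{\otimes 2}$ subspace is required, exploiting the specific coefficients with which $F$ and $G$ enter, to recover $\pt_n\leq\omega^2-\omega+1$ for $\omega\geq 1/2$. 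For $\omega<1/2$ the looser bound $\omega^2/3+2/3$ should follow from a coarser accounting, for example via a uniform bound on $\Tr(\psi_S^2)$ combined with the averaging formula $\pt_n=\E_S\Tr(\psi_S^2)$ and the observation that the trivial subsets $S=\emptyset,[n]$ carry weight only $2/2^n\leq 1/2$ for $n\geq 2$.
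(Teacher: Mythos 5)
Your $n=2$ base case is correct, clean, and different in spirit from the paper: you get $\pt_2 = \tfrac{1}{2}(1 + \sum_k\lambda_k^2)$ with $\omega = \lambda_1$, then use $\sum_{k\geq 2}\lambda_k^2 \leq (1-\omega)^2$ for $\omega\geq\tfrac12$ and $\sum_k\lambda_k^2\leq\lambda_1$ for $\omega<\tfrac12$. Both estimates and the final algebraic check $(2\omega-1)(\omega-1)\geq 0$ on $[0,\tfrac12]$ are right.

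For $n\geq 3$, however, there is a genuine gap, and you already identify it: after expanding $\pt_n$ in the $\{\ket{\phi},\ket{\phi^\perp}\}^{\otimes 2}$ block, the pieces $F$ and $G$ are not controlled. Your stationarity argument (first-order optimality of $\ket{\phi}$ forces $\ket{\phi^\perp}$ to vanish on Hamming-weight-$1$ strings) and the resulting $D = \E_{x\sim|\phi^\perp|^2}[2^{-w(x)}]\leq\tfrac14$ are both correct, but $G = \pt_n(\ket{\phi^\perp})$ can approach $1$ when $\ket{\phi^\perp}$ is itself nearly a (different) product state, so no estimate of the form $|F|\leq\sqrt{DG}$ suffices. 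The difficulty is structural: expanding globally around the closest product state forgets the \emph{relative} product structure of $\ket{\phi^\perp}$, which is exactly the information needed. The paper avoids this by inducting on $n$: it cuts off the first qudit, Schmidt-decomposes $\ket\psi = \sum_i\sqrt{\lambda_i}\ket{a_i}\ket{b_i}$, runs the SWAP test on subsystem $1$, and applies the inductive hypothesis to the residual state $\ket{b_1}$, using $\lambda_1\cdot\overlap(\ket{b_1})\leq\omega$ to relate the residual overlap to $\omega$. The induction carries along precisely the quantity $\overlap(\ket{b_1})$ that your global expansion loses.

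A second, independent gap is in the suggested ``coarser accounting'' for $\omega<\tfrac12$. You propose bounding $\Tr(\psi_S^2)$ per cut and averaging via $\pt_n = \E_S\Tr(\psi_S^2)$, but the $n=2$ estimate $\Tr(\psi_S^2)\leq\lambda_1^{(S)}$ does not transfer: the top Schmidt coefficient $\lambda_1^{(S)}$ across a bipartition $S$ equals the best overlap with a \emph{bipartite} product state across that cut, which is \emph{at least} $\omega$ (any $n$-partite product state is bipartite product across $S$), not at most $\omega$. So $\Tr(\psi_S^2)\leq\lambda_1^{(S)}$ gives only the trivial bound. To recover $\omega^2/3 + 2/3$ for $n\geq 3$ you would still need the inductive (or some other global) argument.
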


To compare this with \thmref{hm},
if we set $1-\eps = \omega$ then we can rewrite this bound as
\begin{equation*}
\operatorname{PT}_{n}(1-\eps)\leq
\left\{\begin{array}{cl}
1 - \eps + \eps^2 & \text{if $\eps \leq \tfrac{1}{2}$},\\
1 - \tfrac{2}{3} \eps + \tfrac{1}{3} \eps^2 & \text{otherwise.}
\end{array}
\right.
\end{equation*}
This improves upon \thmref{hm} for all choices of $\eps > 0$, i.e.\ all $\omega < 1$,
which answers open problem no.\ 2 from~\cite{Harrow2013_product_testing}
and question no.\ 5 from~\cite{montanaro2013survey}.
In addition, the bound we achieve when $\omega \geq \tfrac{1}{2}$ is optimal, as the following well-known example shows
(cf.\ \cite[Page 31]{Harrow2013_product_testing}).

\begin{prop}[Product test lower-bound]\label{prop:product-test-lower}
For $n = 2$ and $\omega \geq \tfrac{1}{2}$, consider the state $\ket{\psi} = \sqrt{\omega}\ket{11} + \sqrt{1-\omega}\ket{22}$.
Then $\operatorname{Overlap}_1(\ket{\psi}) = \omega$ and
\begin{equation*}
\operatorname{PT}_2(\ket{\psi}) = \omega^2 - \omega + 1.
\end{equation*}
In addition, for $n > 2$, consider $\ket{\psi} \otimes \ket{\phi}$,
where $\ket{\phi}$ is any product state in $\bbC^{d_3} \ot \cdots \otimes \bbC^{d_n}$.
Then this has the same overlap and probability of success as $\ket{\psi}$.
\end{prop}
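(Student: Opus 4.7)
The plan is to do two short calculations and then tensorize. For the overlap, observe that $\ket{\psi} = \sqrt{\omega}\ket{11} + \sqrt{1-\omega}\ket{22}$ is already in Schmidt form across the cut $\{1\}\mid\{2\}$, with squared Schmidt coefficients $\omega$ and $1-\omega$. Since the maximum squared overlap of a bipartite pure state with a product state equals its largest squared Schmidt coefficient, the hypothesis $\omega \geq \tfrac{1}{2}$ gives $\overlap_1(\ket{\psi}) = \omega$.

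For the product-test probability, I would expand $\piprod = \tfrac{1}{4}(\iden + \mathsf{SWAP}_1)(\iden + \mathsf{SWAP}_2)$, where $\mathsf{SWAP}_i$ denotes the swap of the $i$-th qudit between the two copies of $\ket{\psi}$, and apply the standard identity $\bra{\psi}\bra{\psi}\,\mathsf{SWAP}_S\,\ket{\psi}\ket{\psi} = \Tr[\psi_S^2]$ for each $S \subseteq \{1,2\}$, where $\psi_S$ is the reduced density matrix on subsystem $S$ (with $\psi_\emptyset = 1$). This yields $\pt_2(\ket{\psi}) = \tfrac{1}{4}\bigl(1 + \Tr[\psi_1^2] + \Tr[\psi_2^2] + \Tr[\psi^2]\bigr)$. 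Since $\psi$ is pure, $\Tr[\psi^2] = 1$, and $\psi_1 = \psi_2 = \omega\ket{1}\bra{1} + (1-\omega)\ket{2}\bra{2}$, so each single-qudit purity equals $\omega^2 + (1-\omega)^2$. Combining and simplifying gives $\omega^2 - \omega + 1$.

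Finally, for $n > 2$, tensoring with an arbitrary product state $\ket{\phi} \in \bbC^{d_3} \otimes \cdots \otimes \bbC^{d_n}$ preserves both quantities: the maximum product-state overlap is multiplicative across tensor products and $\overlap_1(\ket{\phi}) = 1$, leaving the overlap unchanged; and $\piprod$ factorizes across qudits, so the extra SWAP tests accept with probability exactly~$1$ on $\ket{\phi}^{\otimes 2}$. Therefore $\pt_n(\ket{\psi} \otimes \ket{\phi}) = \pt_2(\ket{\psi})$. There is no real obstacle here — the proposition is an elementary verification once the standard swap-test identity is in hand.
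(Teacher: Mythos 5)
Your proof is correct, and it takes a genuinely more streamlined route than the one in the paper. For the overlap, the paper expands $|\braket{\psi}{ab}|^2$ in coordinates and applies Cauchy--Schwarz to conclude that the maximum is $\omega$; you instead invoke the Schmidt-coefficient characterization (the $r=1$ case of the Young--Eckart lemma), which is cleaner and makes the role of the hypothesis $\omega \geq \tfrac12$ transparent, since it is exactly what makes $\omega$ the top Schmidt coefficient. For the product-test probability, the paper explicitly computes the vector $\piswap^{\otimes 2}\ket{\psi}^{\otimes 2}$ term by term and takes its squared norm; you instead expand $\piprod = \tfrac14\sum_{S\subseteq\{1,2\}}\mathsf{SWAP}_S$ and apply the standard swap-trick identity $\bra{\psi}^{\otimes 2}\mathsf{SWAP}_S\ket{\psi}^{\otimes 2} = \Tr[\psi_S^2]$, reducing the whole thing to a purity computation: $\tfrac14\bigl(1 + 2(\omega^2+(1-\omega)^2) + 1\bigr) = \omega^2 - \omega + 1$. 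This is shorter, less error-prone, and generalizes more readily. The tensor-product step at the end is the same in both proofs and is immediate. Your approach trades away self-containedness (it assumes the reader knows the swap-trick identity and the Schmidt-coefficient fact) for brevity, whereas the paper's computation is entirely bare-hands; both are valid, and yours is arguably the one a practitioner would reach for first.
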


The proof of \propref{product-test-lower} is standard and we include it in~\secref{prod-tight}.
Combining \thmref{product-test-main} and \propref{product-test-lower}
allows us to exactly compute $\operatorname{PT}(\omega)$ for $\omega \geq \tfrac{1}{2}$.

\begin{cor}[Product test, tight bound]
For all $n \geq 2$ and $\omega \geq \frac{1}{2}$, $\operatorname{PT}_n(\omega) = \omega^2 - \omega + 1$.
\end{cor}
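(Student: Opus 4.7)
The plan is to simply sandwich $\operatorname{PT}_n(\omega)$ between the upper bound already proved in \thmref{product-test-main} and the lower bound constructed in \propref{product-test-lower}, for which no further ideas are needed.

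First, I would invoke \thmref{product-test-main} in the regime $\omega \geq \tfrac{1}{2}$ to obtain the inequality
\begin{equation*}
\operatorname{PT}_n(\omega) \leq \omega^2 - \omega + 1.
\end{equation*}

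Next, I would invoke \propref{product-test-lower} for the reverse inequality. For $n = 2$ the proposition exhibits the explicit two-qudit state $\ket{\psi} = \sqrt{\omega}\ket{11} + \sqrt{1-\omega}\ket{22}$, which satisfies $\operatorname{Overlap}_1(\ket{\psi}) = \omega$ and $\operatorname{PT}_2(\ket{\psi}) = \omega^2 - \omega + 1$; by \defref{PT}, taking the supremum over states of fixed overlap gives $\operatorname{PT}_2(\omega) \geq \omega^2 - \omega + 1$. For $n > 2$, the same proposition notes that tensoring with any product state $\ket{\phi} \in \mathbb{C}^{d_3} \otimes \cdots \otimes \mathbb{C}^{d_n}$ preserves both the overlap with $\mps(1)$ and the product-test success probability, since the SWAP tests on qudits $3,\ldots,n$ accept with probability $1$ on a product state, and the maximum-overlap product approximant to $\ket{\psi} \otimes \ket{\phi}$ is obtained by combining the best approximant to $\ket{\psi}$ with $\ket{\phi}$ itself. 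Thus the same lower bound $\omega^2 - \omega + 1$ holds for every $n \geq 2$.

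Combining the two inequalities yields the stated equality. There is no genuine obstacle here: the real work has already been done in \thmref{product-test-main} (the nontrivial upper bound) and in the construction of the extremal example in \propref{product-test-lower}, and the corollary is simply the observation that these two bounds meet in the regime $\omega \geq \tfrac{1}{2}$.
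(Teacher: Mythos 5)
Your proposal is correct and is exactly the argument the paper uses: the corollary follows immediately by sandwiching $\operatorname{PT}_n(\omega)$ between the upper bound of \thmref{product-test-main} for $\omega \geq \tfrac{1}{2}$ and the matching lower bound from the explicit state in \propref{product-test-lower}, extended to $n > 2$ by tensoring with a product state.
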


This settles the performance of the product test when $\omega \geq \tfrac{1}{2}$.
The regime of $\omega < \tfrac{1}{2}$ remains open, however.
As~\cite{Harrow2013_product_testing} points out,
this regime ``is generally somewhat mysterious'',
and getting a better understanding of this case is part of open problem no.\ 2 in their work.
One possible starting point is to understand the behavior of $\operatorname{PT}_n(\omega)$ as $\omega \rightarrow 0$.
For example, as~\cite{Harrow2013_product_testing} show on page~$32$,
the $d$-dimensional maximally entangled state
$\ket{\psi} = \tfrac{1}{\sqrt{d}}\sum_{i=1}^d \ket{ii}$
has
\begin{equation*}
\omega = 1/d \quad \text{and} \quad \operatorname{PT}_2(\ket{\psi}) = \tfrac{1}{2}(1 + \tfrac{1}{d}).
\end{equation*}
This suggests the following question: does $\operatorname{PT}_n(\omega)\rightarrow \tfrac{1}{2}$ as $\omega \rightarrow 0$?

\begin{figure}[h]
\centering
\definecolor{lightblue}{RGB}{203,192,255}
\begin{tikzpicture}[
declare function={
    func(\x)=
    (\x<=1/2) * (1/3*\x*\x +2/3)   +
     (\x>1/2) * (\x*\x -  \x + 1);
  gunc(\x)=
  (\x >= 1-0.37013) * (1 - (1-\x) + (1-\x)^2 + (1-\x)^(3/2)) +
  (\x < 1-0.37013) * (1 - 11/512 * (1-\x));
  }]
\begin{axis}[
    	axis line style = thick,
	axis lines = left,
	axis line style={-},
	width=15cm,height=8cm,
	ymax = 1,
	xtick={0,0.125,0.25,0.5,1},
	xticklabels={$0$,$1/8$,$1/4$,$1/2$,1},
	xlabel={value of $\omega$},
	ytick={0.66667,1},
	yticklabels={$2/3$,$1$},
	ylabel={bounds on $\pt(\omega)$},
	every tick/.style=thick,
	every axis plot/.style={thick}
]
\addplot [
	domain=0:1,
	samples=100,
	color=lightblue,
	line width=4pt
	]
{gunc(x)};
\addplot [
	domain=0:1,
	samples=100,
	color=cyan
	]
{1-11/512*(1-x)};
\addplot [
	domain=0:1,
	samples=100,
	color=blue
	]
{1-(1-x)+(1-x)^2+(1-x)^(3/2)};
\addplot [
	domain=0:1,
	samples=100,
	color=pink,
	line width=4pt
	]
{func(x)};
\addplot [
	domain=0:1,
	samples=100,
	color=magenta
	]
{x^2-x+1};
\addplot [
	domain=0:1,
	samples=100,
	color=red
	]
{1/3*x^2+2/3};
\end{axis}
\end{tikzpicture}
  \captionsetup{singlelinecheck=off}
\caption[hello]{Upper bounds on $\pt(\omega)$
		as a function of $\omega = 1-\eps$.
	\begin{itemize}
	\item[$\circ$] The \textcolor{red}{red} line is the function $\tfrac{1}{3} \omega^2 + \tfrac{2}{3}$
			and the \textcolor{magenta}{magenta} line is the function $\omega^2 - \omega+1$.
			The thick \textcolor{pink}{pink} line is the minimum of the two.
			This is the upper bound we prove.
	\item[$\circ$] The \textcolor{blue}{blue} line is the function $1-\eps + \eps^2 + \eps^{3/2}$
			and the \textcolor{cyan}{cyan} line is the function $1 - \tfrac{11}{512} \eps$.
			The thick \textcolor{lightblue}{light blue} line is the minimum of the two.
			This is the upper bound of Harrow and Montanaro~\cite{Harrow2013_product_testing}.
	\end{itemize}}
\label{fig:hm-vs-us}
\end{figure}
 
Our proof of \thmref{product-test-main} is a simple inductive argument.
Decomposing the product test measurement as 
$\piprod = (\iden \otimes \piswap^{\otimes n-1}) \cdot (\piswap \otimes \iden)$,
we can view it as first performing the SWAP test on the first qudit register of $\ket{\psi}$ and then, if it succeeds,
performing the $(n-1)$-qudit product test on the remaining qudit registers.
Supposing that $\ket{\psi}$ is far from being a product state, 
either the first qudit of $\ket{\psi}$ is highly entangled with the remaining qudits,
or the other qudits are far from being a product state (even conditioned on the first SWAP test succeeding).
In the first case, the SWAP test rejects with good probability,
and in the second case, the $(n-1)$-qubit products test rejects with good probability, by induction.
Balancing between these two cases gives our bound.

The proof of the bound 
$\pt_{n}(\omega)\leq \tfrac{1}{3} \omega^2 + \tfrac{2}{3}$
is especially simple and fits in a page.
Though weaker than our general bound when $\omega \geq \tfrac{1}{2}$,
this bound is still sufficient to recover all the applications
of the product test in~\cite{Harrow2013_product_testing},
including the proof that $\QMA(2) = \QMA(k)$ for $k \geq 2$.
We include it as a separate argument in \secref{prod-simple}.
The proof of the general bound from \thmref{product-test-main} is contained in \secref{prod-tight}.

So far we have considered the case of product testing where the number of copies~$m$ is exactly two,
but the property testing model requires us to take~$m$ sufficiently large to detect non-product states with constant probability.
For even $m$, a simple strategy is to run $m/2$ parallel copies of the product test and reject if any of them rejects.
If $\overlap_1(\ket{\psi}) = 1-\eps$, then this will accept with probability at most
$
(1-\tfrac{2}{3}\eps + \tfrac{1}{3} \eps^2)^{m/2}
$.
Making this probability smaller than $\tfrac{1}{3}$ as required by \defref{tester} entails setting $m = O(1/\eps)$.
Using the distance $\delta = \sqrt{\eps}$,
this can be stated as follows.

\begin{prop}[Copy complexity of testing product states]
Following the language of \defref{tester}, testing whether a state $\ket{\psi} \in \bbC^{d_1} \otimes \cdots \otimes \bbC^{d_n}$ is a product state can be done using $m = O(1/\delta^2)$ copies and with prefect completeness.

\end{prop}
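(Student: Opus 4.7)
The plan is to design the tester directly from the product test of \thmref{product-test-main} and argue completeness and soundness separately. Given $m$ (even) copies of $\ket{\psi}$, I would partition them into $m/2$ disjoint pairs, run the product test independently on each pair, and accept if and only if all $m/2$ invocations accept. Since the product test itself is a single two-outcome projective measurement on two copies, the overall tester is a valid property testing algorithm in the sense of \defref{tester}.

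For perfect completeness, if $\ket{\psi} = \ket{\psi_1} \otimes \cdots \otimes \ket{\psi_n}$ is a product state, then, as noted immediately after the definition of the product test, each individual product test accepts with probability exactly $1$; by independence of the $m/2$ invocations on disjoint pairs of copies, the overall tester accepts with probability $1$. So perfect completeness is immediate.

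For soundness, suppose $\distance_1(\ket{\psi}) \geq \delta$, which is equivalent to $\overlap_1(\ket{\psi}) \leq 1 - \delta^2$, so that $\eps \geq \delta^2$ in the parameterization $\omega = 1-\eps$. By \thmref{product-test-main}, a single run of the product test accepts with probability at most
\begin{equation*}
\pt_n(1-\eps) \leq 1 - \tfrac{2}{3}\eps + \tfrac{1}{3}\eps^2 \leq 1 - \tfrac{1}{3}\eps,
\end{equation*}
where the last inequality uses $\eps \leq 1$. By the independence of the $m/2$ runs, the tester accepts the state $\ket{\psi}$ with probability at most
\begin{equation*}
\bigl(1 - \tfrac{1}{3}\eps\bigr)^{m/2} \leq \exp(-m\eps/6) \leq \exp(-m\delta^2/6).
\end{equation*}
Choosing $m = \lceil 6 \ln 3 / \delta^2 \rceil = O(1/\delta^2)$ makes this at most $1/3$, as required.

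The argument is essentially a routine Chernoff-style amplification on top of \thmref{product-test-main}, so there is no real obstacle; the one point to be careful about is that the quadratic bound $1 - \tfrac{2}{3}\eps + \tfrac{1}{3}\eps^2$ must be relaxed to a linear-in-$\eps$ upper bound before applying the $(1-x)^k \leq e^{-kx}$ inequality, which I handle by dropping the positive $\tfrac{1}{3}\eps^2$ term after bounding it by $\tfrac{1}{3}\eps$.
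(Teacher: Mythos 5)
Your proposal is correct and is essentially the paper's own argument: the paper likewise runs $m/2$ parallel product tests on disjoint pairs, bounds the acceptance probability by $(1-\tfrac{2}{3}\eps+\tfrac{1}{3}\eps^2)^{m/2}$ via \thmref{product-test-main}, and concludes $m = O(1/\eps) = O(1/\delta^2)$ suffices. Your extra care in relaxing the quadratic bound to $1-\tfrac{1}{3}\eps$ before applying the exponential inequality is a fine (and valid) way to make the final step explicit.
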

\noindent
This is optimal, as $\Omega(1/\delta^2)$ copies are always required to distinguish between two states which are $\delta$-far from each other in trace distance.
We note that the same copy complexity follows from \thmref{hm}, the bound given by Harrow and Montanaro~\cite{Harrow2013_product_testing}.

\subsection{Testing matrix product states}

Having already considered the case of MPS testing with bond dimension $r = 1$,
we now consider the case of bond dimension $r > 1$.
To our knowledge, there is no prior work on this problem.

One idea for testing MPS is to use the general ``test-by-learning'' framework from property testing.
In our case, given a state $\ket{\psi}$,
this entails performing MPS tomography on $\ket{\psi}$ to learn an $\mps(r)$ approximation $\ket{\phi}$
and then applying the SWAP test on $\ket{\psi}$ and $\ket{\phi}$.
If $\ket{\psi}$ is in $\mps(r)$, then $\ket{\phi}$ will be a good approximation, and so the SWAP test will usually succeed,
but if $\ket{\psi}$ is far from $\mps(r)$, then $\ket{\phi}$ will be a bad approximation, and so the SWAP test will usually fail.
Various algorithms for MPS tomography have been proposed in the literature,
for example those in the works~\cite{cramer2010efficientTomographyMPS,lanyon2017efficientTomographyMPS}.
One would expect that since states in $\mps(r)$ can be described using $n d r^2$ parameters,
where $d$ is the largest subsystem dimension,
the optimal algorithm for $\mps(r)$ tomography should use $O(n d r^2/\delta^2)$ copies, 
though this precise bound is not yet known to our knowledge.
We propose and analyze a more direct MPS testing algorithm that improves on this ``test-by-learning'' method by a factor of $O(d)$.  

 We begin by designing an algorithm for this problem which we call the \emph{MPS tester}.
The MPS tester is motivated by the fact that $\ket{\psi_{1, \ldots, n}}$ is in $\mps(r)$
if and only if $\psi_{1, \ldots, k}$ has rank~$r$ for each $1 \leq k \leq n$.
This relates the problem of MPS testing
to the problem of \emph{rank testing},
i.e.\ of testing whether a mixed state~$\rho$ has rank~$r$,
which was previously considered in the work of O'Donnell and Wright~\cite{Wright2015_spectrum_testing}.
They designed an algorithm called the \emph{rank tester}
which can test whether~$\rho$ is rank~$r$ using $m = \Theta(r^2/\delta)$ copies of~$\rho$. 
When the $r = 1$ rank tester is run with $m = 2$ copies of~$\rho$,
it is equivalent to the SWAP test,
and for larger values of~$r$ and $m$ it uses a generalization of the SWAP measurement
known as \emph{weak Schur sampling}.
It has perfect completeness,
meaning that it always accepts states of rank~$r$,
and in fact it is the optimal test for states of rank~$r$ with perfect completeness,
as shown in~\cite[Proposition 6.1]{Wright2015_spectrum_testing}.

With the rank tester in hand,
we define the MPS tester to be the algorithm which
simultaneously performs a separate instance of the rank tester on $\psi_{1, \ldots, k}$ for each $1 \leq k \leq n$
and accepts if each instance of the rank tester accepts.
We include an illustration of the MPS tester in~\fig{mps-tester}.
We show that this test has perfect completeness,
meaning that it accepts every state in $\mps(r)$ with probability~$1$,
although we are not sure if it is the optimal algorithm with perfect completeness;
we view this as an interesting open direction.
We show the following bound on its copy complexity.

\begin{thm}[Copy complexity of the MPS tester]\label{thm:mps-tester}
Given $m = O(n r^2/\delta^2)$ copies of a state $\ket{\psi} \in \bbC^{d_1} \otimes \cdots \otimes \bbC^{d_n}$,
the MPS tester tests whether $\ket{\psi}$ is in $\mps(r)$ with perfect completeness.
\end{thm}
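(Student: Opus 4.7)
My plan is to reduce the analysis of the MPS tester to that of a single rank tester applied at the ``worst'' cut. This reduction needs three ingredients: (i) the rank-tester accept projectors at different cuts commute, so the MPS tester is a well-defined projective measurement with accept projector $\Pi := \Pi_1 \Pi_2 \cdots \Pi_{n-1}$; (ii) perfect completeness follows immediately from rank-at-most-$r$ of each reduced density matrix; and (iii) a Verstraete--Cirac-style MPS approximation lemma that relates $\overlap_r(\ket{\psi})$ to the tail eigenvalue weights of the reduced density matrices at every cut, letting us pinpoint a cut at which the rank tester rejects with constant probability.

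For commutativity, let $\Pi_k$ denote the accept projector of the rank-$r$ tester applied to the $m$ copies of $\psi_{1,\ldots,k}$. Schur--Weyl duality on $(\mathbb{C}^{d_1 \cdots d_k})^{\otimes m}$ identifies $\Pi_k$ with the projector onto the isotypic components of $S_m$ indexed by Young diagrams with at most $r$ rows; as such, $\Pi_k$ is the image of a central idempotent of $\mathbb{C}[S_m]$ under the representation permuting the $m$ copies of the first $k$ qudits. Hence $\Pi_k$ commutes with every copy-permutation operator $\tau_\sigma^{(k)}$ on the first $k$ qudits. For $k' > k$, the copy-permutation operator on the first $k'$ qudits factors as $\tau_\sigma^{(k')} = \tau_\sigma^{(k)} \otimes \tau_\sigma^{(k,k')}$ under the split between qudits $1,\ldots,k$ and $k{+}1,\ldots,k'$, so $\Pi_k$ commutes with each $\tau_\sigma^{(k')}$ and therefore with $\Pi_{k'}$. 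Perfect completeness is then immediate: if $\ket{\psi} \in \mps(r)$ then each $\psi_{1,\ldots,k}$ has rank at most $r$, so $\ket{\psi}^{\otimes m}$ is supported on an $m$-fold tensor power of an $r$-dimensional subspace (in the first $k$ qudits), which by Schur--Weyl carries only Young diagrams with at most $r$ rows; thus $\Pi_k \ket{\psi}^{\otimes m} = \ket{\psi}^{\otimes m}$ for every $k$, and $\Pi \ket{\psi}^{\otimes m} = \ket{\psi}^{\otimes m}$.

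For soundness, set $\eta_k := \sum_{j > r} \lambda_j(\psi_{1,\ldots,k})$, which equals both the trace distance of $\psi_{1,\ldots,k}$ from the nearest rank-$r$ state and the tail weight to be detected by the $k$th rank tester. I would prove the MPS approximation lemma
$$\overlap_r(\ket{\psi}) \ge 1 - C \sum_{k=1}^{n-1} \eta_k$$
for an absolute constant $C$ via a successive-truncation construction: iterating $k = 1, \ldots, n-1$, project onto the top-$r$ eigenspace of the current state's reduction on $\{1,\ldots,k\}$ and renormalize; each truncation costs at most $O(\eta_k)$ in squared overlap, and the resulting state is in $\mps(r)$ by construction. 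Assuming $\distance_r(\ket{\psi}) \ge \delta$, this gives $\sum_k \eta_k \ge \delta^2/C$, so some $k^*$ has $\eta_{k^*} \ge \delta^2/(Cn)$. O'Donnell--Wright's rank-tester soundness analysis then yields $\|\Pi_{k^*}\ket{\psi}^{\otimes m}\|^2 \le 1/3$ once $m = O(r^2/\eta_{k^*}) = O(nr^2/\delta^2)$, and by commutativity $\|\Pi \ket{\psi}^{\otimes m}\|^2 \le \|\Pi_{k^*}\ket{\psi}^{\otimes m}\|^2 \le 1/3$.

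The main obstacle is the MPS approximation lemma. Although morally standard, the successive-truncation projectors at different cuts do not commute, so controlling the cumulative fidelity loss requires tracking how a truncation at one cut perturbs the spectra at the remaining cuts. I would handle this by induction on $n$: truncate first at cut $k=1$ (losing at most $\eta_1$ in squared overlap) and then recursively apply the lemma to the resulting $(n-1)$-qudit state, using that compressing the first qudit to an $r$-dimensional subspace can only decrease (or leave unchanged) the tail weights at the subsequent cuts.
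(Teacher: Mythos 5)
Your proposal is correct and follows essentially the same route as the paper: commuting weak-Schur-sampling projectors across nested cuts, perfect completeness from the rank characterization of $\mps(r)$, and a reduction via the Verstraete--Cirac low-rank approximation lemma to a single cut with tail weight at least $\Omega(\delta^2/n)$, on which the O'Donnell--Wright rank tester is invoked. The only difference is that the paper simply cites the approximation lemma (Lemma~1 of~\cite{VerstraeteTruncationMPS}) rather than reproving it by successive truncation as you sketch.
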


To prove this result, we first show that if $\ket{\psi}$ is $\delta$-far from the set $\mps(r)$,
then there exists $1 \leq k \leq n$ such that $\psi_{1, \ldots, k}$ is $\delta' = (\delta^2/2n)$-far from being rank-$r$.
Then the probability that the MPS tester accepts $\ket{\psi}$
is at most the probability that the rank tester accepts $\psi_{1, \ldots, k}$,
and this is at most $1/3$ given that we are using $O(r^2/\delta') = O(n r^2/\delta^2)$ copies of~$\ket{\psi}$.
One minor technicality that arises is checking that the MPS tester does indeed perform a valid measurement,
which entails showing that the rank testers for each $\psi_{1, \ldots, k}$ can all be simultaneously measured.

\begin{rem}[Time complexity of the MPS tester]
The $m$-copy rank tester of~\cite{Wright2015_spectrum_testing} can be performed efficiently with a quantum circuit of size $\poly\left(m,\log(d)\right)$ 
using the algorithm of~\cite{Krovi2019efficienthigh} or \cite[Page 160]{harrow2005thesis} that implements weak Schur sampling.
Here $d$ is the dimension of the state $\rho$  whose $m$ copies $\rho^{\ot m}$ are input to the rank tester.
The MPS tester performs the rank tester on $m=nr^2/\d^2$ copies of the reduced states $\psi_{1,\dots,k}$ for $1\leq k\leq n$.  
The maximum dimension of these reduced states is less than $\log(d_1\dots d_n)\leq n\log(d)$ where $d=\max_{i\in[n]}d_i$. 
Hence, the MPS tester can be implemented with a quantum circuit of size~$\poly\left(n,r,1/\d,\log(d)\right)$.
\end{rem}

We believe that the bound in \thmref{mps-tester} is not tight,
and that an inductive argument similar to our analysis of the product tester
should be able to improve it.
As an example, consider the ``bunny state''
\begin{equation*}
\ket{b_n} = \tfrac{1}{\sqrt{n - 1}}(\ket{110 \cdots 0} + \ket{0 11 \cdots 0} + \cdots + \ket{0 \cdots 0 11}).
\end{equation*}
We can show that this state, which is in $\mps(3)$, has $\overlap_2(\ket{b_n}) \leq \tfrac{2}{3}$. But does the $r = 2$ MPS tester detect this?
The above analysis suggests we should find the reduced density matrix $(b_n)_{1, \ldots, i}$
which is farthest from being rank~$2$.
It can be checked that $(b_n)_1$ and $(b_n)_{1, \ldots, n-1}$ are both rank~$2$.
Otherwise, for $i \in \{2, \ldots, n-2\}$,
the Schmidt decomposition of $\ket{b_n}$
into subsystems $\{1, \ldots, i\}$ and $\{i+1, \ldots, n\}$ is
\begin{equation*}
\ket{b_n} = \sqrt{\tfrac{i-1}{n-1}} \ket{b_i} \otimes \ket{0 \cdots 0}
	+ \tfrac{1}{\sqrt{n-1}} \ket{0 \cdots 0 1} \otimes \ket{1 0 \cdots 0}
	+ \sqrt{\tfrac{n-i-1}{n-1}} \ket{b_{n-i}} \otimes \ket{0 \cdots 0}
\end{equation*}
Hence, $(b_n)_{1, \ldots, i}$ has eigenvalues $\tfrac{i-1}{n-1}$, $\tfrac{1}{n-1}$, $\tfrac{n-i-1}{n-1}$,
and so it is distance $\tfrac{1}{n-1}$ from rank-$2$.
As a result, the rank tester needs $O(n)$ copies of $(b_n)_{1, \ldots, i}$ to detect this,
and therefore the MPS tester needs $O(n)$ copies of $\ket{b_n}$ if we use our above analysis.
However, we have done a more careful analysis of the bunny state
in line with the inductive argument for the product test,
and we can show that the MPS tester only needs 3 copies of $\ket{b_n}$ to detect that it is not in $\mps(2)$.
In particular, the MPS tester rejects $\ket{b_n}^{\otimes 3}$ with probability at least~$\tfrac{1}{6}$.
This means that the above analysis is too pessimistic, at least for the bunny state.

Unfortunately, we were unable to carry this proof strategy out in general.
One difficulty is that we are not even sure what upper bound to conjecture for this problem.
Originally, we had guessed that the MPS tester only needed $m = O(r^2/\delta^2)$ copies, or perhaps some other copy complexity which is independent of~$n$,
but we now know this is false, due to the following lower bound.

\begin{thm}[MPS testing lower bound]\label{thm:mps-lower}
For $r \geq 2$ and $\delta\leq 1/\sqrt{2}$,
testing whether a state $\ket{\psi} \in \bbC^{d_1} \otimes \cdots \otimes \bbC^{d_n}$, is in $\mps(r)$
requires $\Omega(n^{1/2}/\delta^2)$ copies of $\ket{\psi}$.
\end{thm}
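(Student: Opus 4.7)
The plan is to establish the lower bound via Le Cam's two-point method adapted to the quantum setting: I will exhibit a distribution $\mathcal{D}_0$ over states in $\mps(r)$ and a distribution $\mathcal{D}_1$ over states at trace distance at least $\delta$ from $\mps(r)$, and show that the averaged $m$-copy density matrices $\rho_j = \E_{\ket{\psi}\sim\mathcal{D}_j}\bigl[\ket{\psi}\!\bra{\psi}^{\otimes m}\bigr]$ satisfy $\|\rho_0-\rho_1\|_1 < 1/3$ whenever $m < c\sqrt n/\delta^2$ for a small constant $c$. Since any property tester achieving the soundness/completeness guarantees of Definition~\ref{def:tester} must have distinguishing probability at least $1/3$, this yields the stated lower bound.

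For the construction, I plan to take $\mathcal{D}_0$ to be the singleton $\{\ket{\psi_0}\}$ with $\ket{\psi_0} = \ket{0}^{\otimes n}$, which is a product state and hence in $\mps(r)$ for every $r\geq 1$. For $\mathcal{D}_1$, I sample a cut $X$ uniformly from $[n-1]$ and return $\ket{\psi_X} = \sqrt{1-\epsilon^2}\ket{\psi_0} + \epsilon\ket{\eta_X}$, where each $\ket{\eta_X}$ is a unit vector orthogonal to $\ket{\psi_0}$, supported only on sites $X$ and $X+1$, and whose Schmidt decomposition across cut $X$ has rank exactly $r+1$. Choosing the $\ket{\eta_X}$ to be mutually orthogonal (for instance, by using disjoint local basis states on those two sites) and tuning $\epsilon = \Theta(\delta)$ so that each $\ket{\psi_X}$ is exactly $\delta$-far from $\mps(r)$---which is where the hypotheses $r \geq 2$ and $\delta \leq 1/\sqrt 2$ will enter---gives an ensemble in which every realization is a valid ``no''-instance.

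The trace-distance analysis then proceeds by bounding $\|\rho_0-\rho_1\|_1 \leq \sqrt{n}\,\|\rho_0-\rho_1\|_2$, using that $\rho_0$ has rank $1$ and $\rho_1$ is a mixture of $n-1$ pure states so that $\rho_0-\rho_1$ has rank at most $n$. Writing $\|\rho_0-\rho_1\|_2^2 = \mathrm{Tr}[\rho_0^2] - 2\,\mathrm{Tr}[\rho_0\rho_1] + \mathrm{Tr}[\rho_1^2]$ and evaluating it using the pairwise inner products $\langle \psi_0 \mid \psi_X\rangle = \sqrt{1-\epsilon^2}$ and $\langle \psi_X\mid\psi_Y\rangle = 1-\epsilon^2$ for $X\neq Y$, the calculation reduces to the single scalar $b := (1-\epsilon^2)^m$, and the copy-complexity bound then follows by solving the resulting inequality in $m$.

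The main obstacle is ensuring the resulting bound is tight at $\Omega(\sqrt n/\delta^2)$ rather than the weaker $\Omega(1/\delta^2)$ that a direct fidelity-based argument on any single ``no''-instance would give. The delicate point is that the pairwise overlaps $\langle \psi_X\mid\psi_Y\rangle$ must not collapse to the value $1-\epsilon^2$ contributed by the reference state alone; to extract the full $\sqrt n$ gain, the perturbations $\ket{\eta_X}$ and possibly the reference ensemble $\mathcal{D}_0$ must be chosen so that $\rho_1$ genuinely fills out an $(n-1)$-dimensional subspace around $\rho_0$. Making this work---either by randomizing $\mathcal{D}_0$ to match $\mathcal{D}_1$ coordinate-by-coordinate, or by spreading the rank-$(r+1)$ defect across all cuts with random signs so that the pairwise overlaps exhibit an $O(\delta^2/\sqrt n)$ fluctuation around $1-\delta^2$---together with verifying $\delta$-farness of the resulting states, is where I expect the bulk of the technical work to lie.
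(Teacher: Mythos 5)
Your high-level framing---Le Cam's two-point method, build an MPS ensemble $\mathcal{D}_0$ and a far ensemble $\mathcal{D}_1$, bound $\|\rho_0-\rho_1\|_1$---is the right shape, and you correctly diagnose the two obstacles (the singleton $\mathcal{D}_0$ kills you, and the pairwise overlap $1-\epsilon^2$ between your $\ket{\psi_X}$'s gives no $n$-dependence). But the construction you propose cannot be patched to deliver $\Omega(\sqrt{n}/\delta^2)$, and the specific fix you gesture at is not the one that works.

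The core issue is that a perturbation concentrated at a single (even random) cut is \emph{testable} with $O(r^2/\delta^2)$ copies, independent of $n$: if $\ket{\psi_X}$ has overlap $1-\Theta(\delta^2)$ with $\mps(r)$ and all the excess rank lives at cut $X$, then the reduced density matrix across cut $X$ is $\Theta(\delta^2)$-far from rank $r$, and the MPS tester of the paper (which runs the rank tester at every cut in parallel) will catch it without needing to know $X$. So no matter how you randomize $\mathcal{D}_0$, the bound you can extract from a one-cut defect is at best $\Omega(1/\delta^2)$. The ``$\sqrt{n}$ gain'' has to come from the hard instance itself looking harder, not just from the two-norm-to-one-norm conversion. (And indeed that conversion is a dead end here: even if you fix issue (1), the $\|\rho_0-\rho_1\|_2^2$ computation yields a $(1-(1-\epsilon^2)^m)^2$ term that dominates and is independent of $n$.)

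The paper's construction is structurally different and is what you need. It takes the far instance to be a tensor product $\ket{\Phi_n}=\ket{\varphi}^{\otimes n/2}$ where $\ket{\varphi}\in\bbC^d\ot\bbC^d$ is a bipartite state whose Schmidt spectrum is $\bigl(1-\theta,\tfrac{\theta}{d-1},\dots,\tfrac{\theta}{d-1}\bigr)$ with $d\geq 2r-1$ and $\theta=\Theta(\delta^2/n)$---so each tensor factor is only $O(\delta^2/n)$-far from rank $r$, and the defect is spread over all $n/2$ factors rather than localized at one cut. Both ensembles are then twirled by independent Haar-random local unitaries on every site, with the MPS ensemble built from the analogous state $\ket{\gamma}$ whose Schmidt spectrum is supported on only $r$ levels. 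The $\sqrt{n}$ then emerges from two ingredients your proposal lacks: (a) multiplicativity of fidelity over tensor products, $F(\rho_{\mathrm{far}},\rho_{\mathrm{MPS}})=F(\sigma_{\mathrm{far}},\sigma_{\mathrm{MPS}})^{n/2}$, so a per-factor trace distance of $O(m^2\theta^2)$ converts into an overall trace distance of $O(\sqrt{n}\,m\theta)$ after the square-root conversion back; and (b) a Schur-Weyl/Schur's-lemma argument (the paper's \thmref{reduction to one subsystem}) showing the optimal discriminator may ignore one half of each bipartite pair, reducing the single-factor problem to distinguishing two spectra, which is what makes the $O(m^2\theta^2)$ bound tractable. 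You would also need a separate argument (the paper's \propref{tensor-overlap}, an induction on the number of factors) to certify that $\ket{\Phi_n}$ really is $\delta$-far from $\mps(r)$---this is not automatic from per-factor farness and occupies a nontrivial chunk of the proof. In short: your plan identifies the right kind of obstacle, but filling the gap requires replacing your localized perturbation with a tensor-product perturbation and swapping the rank-based $\ell_1/\ell_2$ bookkeeping for fidelity multiplicativity plus a representation-theoretic reduction.
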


\thmref{mps-lower} shows that a polynomial dependence on~$n$,
as in \thmref{mps-tester},
\emph{is} required,
even for the case of bond dimension $r = 2$.
This in sharp contrast to the $r = 1$ case of product testing,
in which a constant number of copies suffice, independent of~$n$.
This leaves open the following question: what is the optimal copy complexity for $\mps(r)$ testing, for $r \geq 2$?

The proof of the lower bound consists of two parts. We consider a quantum state~$\ket{\Phi_n}=\ket{\varphi}^{\ot \frac{n}{2}}$ where $\ket{\varphi}\in \bbC^d \ot \bbC^d$ for some $d\geq 2r-1$ and  $\distance_r\left(\ket{\varphi}\right)=\Omega(\d/\sqrt{n})$. In the first step, we use an inductive argument to prove that $\distance_r\left(\ket{\Phi_n}\right)\geq \d$.  In the second step, we consider the density matrix corresponding to the ensemble of states obtained by applying random local unitaries to the subsystems of $\ket{\Phi_n}$. Since all the states in this ensemble are $\d$-far from $\mps(r)$, a tester should reject this density matrix with probability at least $2/3$. We show that without sufficiently large number of copies, no $\mps(r)$ tester that accepts states in $\mps(r)$ with probability $\geq 2/3$ can also reject this density matrix with probability $\geq 2/3$.  

We prove our MPS tester upper bound (\thmref{mps-tester}) in \secref{testing algorithm}
and our MPS testing lower bound (\thmref{mps-lower}) in \secref{lower}.

\section{The product test}\label{sec:two-copy product test}

\subsection{A simple analysis of the product test}\label{sec:prod-simple}

We begin with a simple analysis of the product test
which shows that it rejects with constant probability if $\ket{\psi}$ is a constant distance from the set of product states.
This is sufficient to show $\QMA(2) = \QMA(k)$ via the proof of~\cite{Harrow2013_product_testing}.

\begin{thm}[Product test, simple bound]\label{thm:product test}
For all $n\geq 1$, $\operatorname{PT}_{n}(\omega)\leq \frac{1}{3} \omega^2 +\frac{2}{3}$.
\end{thm}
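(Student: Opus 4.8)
The plan is to prove the bound by induction on the number of qudits~$n$, using the decomposition $\piprod = (\iden \otimes \piswap^{\otimes n-1}) \cdot (\piswap \otimes \iden)$ from the introduction: run the SWAP test on the first pair of qudits and, conditioned on success, run the $(n-1)$-qudit product test $\Pi' := \iden \otimes \piswap^{\otimes n-1}$ on qudits $2,\dots,n$. The base case $n=1$ is immediate: a single-qudit state is a product state, so $\omega=1$, and $\pt_1(\ket\psi)=1=\tfrac13\cdot 1+\tfrac23$.

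For the inductive step I would Schmidt-decompose $\ket\psi = \sum_j \sqrt{\lambda_j}\,\ket{a_j}\ket{v_j}$ across the cut $\{1\}\mid\{2,\dots,n\}$, with $\lambda_1 \geq \lambda_2 \geq \cdots$ and $\lambda_1 = \|\psi_1\|_{\mathrm{op}}$. Applying $\piswap$ to the first pair of qudits of $\ket{\psi}^{\otimes 2}$ and regrouping, the register-$1$ components that appear are the mutually orthogonal vectors $\ket{a_j a_j}$ and $\tfrac{1}{\sqrt2}(\ket{a_j a_k}+\ket{a_k a_j})$; pairing with $\Pi'$ (which does not touch register~$1$) then gives the exact identity
\[
\pt_n(\ket\psi) \;=\; \sum_j \lambda_j^2\,\pt_{n-1}(\ket{v_j}) \;+\; \tfrac12 \sum_{j<k} \lambda_j \lambda_k\,\big\|\Pi'(\ket{v_j v_k}+\ket{v_k v_j})\big\|^2 .
\]
I would bound each cross term crudely by $\|\Pi'(\ket{v_jv_k}+\ket{v_kv_j})\|^2 \leq \|\ket{v_jv_k}+\ket{v_kv_j}\|^2 = 2$, the last equality using the orthogonality $\ket{v_j}\perp\ket{v_k}$. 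Since $\sum_{j<k}\lambda_j\lambda_k = \tfrac12\big(1-\sum_j\lambda_j^2\big)$, this collapses to the clean recursion
\[
\pt_n(\ket\psi) \;\leq\; \tfrac12 + \sum_j \lambda_j^2\Big(\pt_{n-1}(\ket{v_j}) - \tfrac12\Big).
\]

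To finish I would feed in two further ingredients: the inductive hypothesis $\pt_{n-1}(\ket{v_j}) \leq \tfrac13\omega_j^2+\tfrac23$ with $\omega_j := \overlap_1(\ket{v_j})$, and the overlap estimate $\omega \geq \lambda_j\,\omega_j$ for every~$j$ — obtained by testing the product state $\ket{a_j}\otimes(\text{best product approximation of }\ket{v_j})$ against $\ket\psi$. The remaining argument is elementary and splits on $\lambda_1$. If $\lambda_1 \leq \tfrac12$: keep the $j=1$ term via $\lambda_1^2\omega_1^2 \leq \omega^2$, bound $\omega_j\leq 1$ for $j\geq 2$, and use $\sum_j \lambda_j^2 \leq \lambda_1$; the target inequality reduces to $(2\lambda_1-1)(\lambda_1-1)\geq 0$. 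If $\lambda_1 > \tfrac12$: bound $\pt_{n-1}(\ket{v_j})-\tfrac12 \leq \tfrac12$ for $j\geq 2$ and use $\sum_{j\geq 2}\lambda_j^2 \leq \lambda_2\sum_{j\geq 2}\lambda_j \leq (1-\lambda_1)^2$, so it reduces to $(2\lambda_1-1)(\lambda_1-1)\leq 0$. Both cases hold, giving $\pt_n(\ket\psi)\leq\tfrac13\omega^2+\tfrac23$.

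The step I expect to be the main obstacle is setting up the recursion without overspending on the cross terms. It is tempting to control $\|\Pi'(\ket{v_jv_k}+\ket{v_kv_j})\|^2$ via Cauchy--Schwarz, bounding it in terms of $\sqrt{\pt_{n-1}(\ket{v_j})\,\pt_{n-1}(\ket{v_k})}$, but this discards the orthogonality of the Schmidt vectors and is too lossy when the $\ket{v_j}$ are near-product: already for the GHZ state $\tfrac{1}{\sqrt 2}(\ket{0\cdots0}+\ket{1\cdots1})$ it yields only $\pt_n \leq 1$ instead of the needed $\tfrac34$, whereas the trivial bound $\leq 2$ together with the case split on $\|\psi_1\|_{\mathrm{op}}$ is exactly sharp enough (and tight at the GHZ state). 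A secondary point needing care is the register-$1$ expansion of $\piswap\ket\psi^{\otimes2}$, in particular correctly tracking the squared norm $\tfrac12$ of the off-diagonal vectors $\tfrac12(\ket{a_ja_k}+\ket{a_ka_j})$, which is what produces the factor $\tfrac12$ in front of the cross-term sum.
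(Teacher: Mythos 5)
Your proposal is correct and follows essentially the same route as the paper's proof: the same decomposition of $\piprod$ into a first-register SWAP test followed by the $(n-1)$-qudit product test, the same Schmidt-decomposition expansion of $\ket{\psi}^{\otimes 2}$ with orthogonal register-$1$ components, the same crude bound of $1$ on the contribution of the cross terms, and the same overlap estimate $\lambda_1\omega_1 \leq \omega$ to close the induction. The only cosmetic difference is in the final algebra: the paper combines the inductive hypothesis with the Schmidt ordering $\lambda_1 \geq \cdots \geq \lambda_d$ and the normalization $\bigl(\sum_i\lambda_i\bigr)^2=1$ to obtain the bound directly without a case split, whereas you split on $\lambda_1 \leq \tfrac12$ versus $\lambda_1 > \tfrac12$ and reduce each case to the sign of $(2\lambda_1-1)(\lambda_1-1)$; both are elementary and equally valid.
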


\begin{proof}
By induction, the $n = 1$ case being trivial. For the inductive step, let us assume \thmref{product test}
holds for $(n-1)$-partite states. Let $\ket{\psi} \in \bbC^{d_1} \ot \cdots \otimes \bbC^{d_n}$ be a state with $\operatorname{Overlap}(\ket{\psi}) = \omega$. For shorthand we write $d := d_1$. Note that the product test measurement can be written as
$\piswap^{\otimes n}=\left(\iden \otimes \piswap^{\otimes n-1}\right) \cdot\left(\piswap \otimes \iden\right)$. We can therefore view the test as first applying $\piswap$ to the first subsystem, and, if it succeeds, then applying $\piswap^{\otimes n-1}$ (i.e.~the product test) to the resulting
reduced state on the last $n-1$ subsystems. The probability this succeeds we bound by induction.

We begin by taking the Schmidt decomposition of $\ket{\psi}$ into subsystems $\{1\}$ and $\{2,\dots,n\}$:
\begin{align*}
|\psi\rangle=\sqrt{\lambda_{1}}\left|a_{1}\right\rangle\left|b_{1}\right\rangle+\cdots+\sqrt{\lambda_{d}}\left|a_{d}\right\rangle\left|b_{d}\right\rangle,
\end{align*}
where $\lambda_{1} \geq \cdots \geq \lambda_{d}$, $\left|a_{i}\right\rangle \in \mathbb{C}^{d},$ and $\left|b_{i}\right\rangle \in \mathbb{C}^{d_{2}} \otimes \cdots \otimes \mathbb{C}^{d_{n}}$. As a result,
\begin{align*}
|\psi\rangle^{\otimes 2}=\sum_{i \in[d]} \lambda_{i}\left|a_{i}\right\rangle^{\otimes 2}\left|b_{i}\right\rangle^{\otimes 2}+\sum_{i<j} \sqrt{\lambda_{i} \lambda_{j}}\left(\left|a_{i} a_{j}\right\rangle\left|b_{i} b_{j}\right\rangle+\left|a_{j} a_{i}\right\rangle\left|b_{j} b_{i}\right\rangle\right).
\end{align*}
The result of applying the first projector to $\ket{\psi}^{\ot 2}$ is therefore
\begin{align}\label{eq:p1}
\piswap \otimes I \cdot|\psi\rangle^{\otimes 2}=\sum_{i \in[d]} \lambda_{i}\left|a_{i}\right\rangle^{\otimes 2}\left|b_{i}\right\rangle^{\otimes 2}+\sum_{i<j} \sqrt{\lambda_{i} \lambda_{j}}\left(\frac{\left|a_{i} a_{j}\right\rangle+\left|a_{j} a_{i}\right\rangle}{\sqrt{2}}\right)\left(\frac{\left|b_{i} b_{j}\right\rangle+\left|b_{j} b_{i}\right\rangle}{\sqrt{2}}\right).
\end{align}
We note that this vector's two-norm, and hence the probability the test passes in the first step, is  $\mu:=\sum_{i} \lambda_{i}^{2}+\sum_{i<j} \lambda_{i} \lambda_{j}$. Conditioned on this, the mixed state of subsystems $2, \ldots, n$ is
\begin{align*}
\left|b_{i}\right\rangle^{\otimes 2} \text { with prob. } \frac{\lambda_{i}^{2}}{\mu}, \quad \frac{1}{\sqrt{2}}\left(\left|b_{i} b_{j}\right\rangle+\left|b_{j} b_{i}\right\rangle\right) \text { with prob. } \frac{\lambda_{i} \lambda_{j}}{\mu}.
\end{align*}
This is by \eqref{eq:p1} and the fact that the $\ket{a_i}^{\otimes 2}$'s and the $\left(\left|a_{i} a_{j}\right\rangle+\left|a_{j} a_{i}\right\rangle\right)$'s are orthogonal. We must now bound the probability that the product test on $n-1$ subsystems succeeds in each of these cases. In the first case this is $\mathrm{PT}_{n-1}\left(\left|b_{1}\right\rangle\right)$, and in the rest of the cases we will charitably bound the probability by $1$. This gives us:
\begin{align}\label{eq:to-be-used-later}
\mathrm{PT}_{n}(|\psi\rangle) \leq \mu \cdot\left(\frac{\lambda_{1}^{2}}{\mu} \cdot \mathrm{PT}_{n-1}\left(\left|b_{1}\right\rangle\right)+\sum_{i>1} \frac{\lambda_{i}^{2}}{\mu}+\sum_{i<j} \frac{\lambda_{i} \lambda_{j}}{\mu}\right)=\lambda_{1}^{2} \cdot \mathrm{PT}_{n-1}\left(\left|b_{1}\right\rangle\right)+\sum_{i>1} \lambda_{i}^{2}+\sum_{i<j} \lambda_{i} \lambda_{j}.
\end{align}
 Writing $\phi=\operatorname{Overlap}\left(\left|b_{1}\right\rangle\right)$, the inductive hypothesis gives us
\begin{align}
\mathrm{PT}_{n}(|\psi\rangle) &\leq \lambda_{1}^{2} \cdot\left(\frac{1}{3} \phi^{2}+\frac{2}{3}\right)+\sum_{i>1} \lambda_{i}^{2}+\sum_{i<j} \lambda_{i} \lambda_{j}\nonumber\\
&=\left(\frac{1}{3}\left(\lambda_{1} \phi\right)^{2}+\frac{2}{3}\right)-\frac{1}{3} \sum_{i>1}\left(\lambda_{1}-\lambda_{i}\right) \lambda_{i}-\frac{1}{3} \sum_{1<i<j} \lambda_{i} \lambda_{j} \leq \frac{1}{3}\left(\lambda_{1} \phi\right)^{2}+\frac{2}{3},\label{eq:p2}
\end{align}
where the last inequality follows because $\lambda_{1} \geq \cdots \geq \lambda_{d}$. Now, by definition of $\phi$, there exists a product state $|v\rangle \in \mathbb{C}^{d_2} \otimes \cdots \otimes \mathbb{C}^{d_{n}}$ such that $\phi=\left|\left\langle b_{1} | v\right\rangle\right|^{2}$. But then $\left|a_{1}\right\rangle \otimes|v\rangle$, also a product state, has squared-inner-product $\lambda_{1} \phi$ with $|\psi\rangle$, meaning $\lambda_{1} \phi \leq$ $\operatorname{Overlap}(\psi)=\omega$, and so by \eqref{eq:p2} the test succeeds with probability at most $\frac{1}{3} \omega^{2}+\frac{2}{3}$.\qedhere
 \end{proof}

\subsection{A tight analysis of the product test for $\omega \geq \tfrac{1}{2}$}\label{sec:prod-tight}

Next, we sharpen our upper-bound from \thmref{product test} in the $\omega \geq \tfrac{1}{2}$ case.

\begin{thm}[Product test, sharpened bound; \thmref{product-test-main} restated]\label{thm:product-test-tight}
For all $n\geq 1$,
\begin{equation*}
\operatorname{PT}_{n}(\omega)\leq
\left\{\begin{array}{cl}
\omega^2 - \omega + 1 & \text{if $\omega \geq \tfrac{1}{2}$},\\
\tfrac{1}{3} \omega^2 + \tfrac{2}{3} & \text{otherwise.}
\end{array}
\right.
\end{equation*}
\end{thm}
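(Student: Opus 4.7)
The plan is to prove the sharpened bound by induction on $n$, refining the argument of \thmref{product test}. Since \thmref{product test} already establishes $\pt_n(\omega) \leq \tfrac{1}{3}\omega^2 + \tfrac{2}{3}$ unconditionally, it suffices to prove the strictly tighter estimate $\pt_n(\omega) \leq \omega^2 - \omega + 1$ in the regime $\omega \geq \tfrac{1}{2}$. Write the combined upper bound as $F(\phi) := \min\{\phi^2 - \phi + 1,\, \tfrac{1}{3}\phi^2 + \tfrac{2}{3}\}$ and the corresponding rejection-probability bound as $G(\phi) := 1 - F(\phi)$, so that $G(\phi) = \phi(1-\phi)$ for $\phi \geq \tfrac{1}{2}$ and $G(\phi) = (1-\phi^2)/3$ for $\phi \leq \tfrac{1}{2}$. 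The inductive hypothesis, combining the present theorem with \thmref{product test} at level $n-1$, is $\pt_{n-1}(\ket{b}) \leq F(\overlap_1(\ket{b}))$ for every $(n-1)$-partite state $\ket{b}$.

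Repeating the Schmidt-decomposition step from the proof of \thmref{product test} verbatim yields
\[\pt_n(\ket{\psi}) \;\leq\; \lambda_1^2 \, F(\phi_1) + \sum_{i>1}\lambda_i^2 + \sum_{i<j}\lambda_i\lambda_j,\]
where $\phi_1 = \overlap_1(\ket{b_1})$. Setting $s := \sum_i \lambda_i^2$ and passing to rejection probabilities, the goal reduces to showing
\[\lambda_1^2\, G(\phi_1) + \tfrac{1-s}{2} \;\geq\; \omega(1-\omega) \quad \text{whenever } \omega \geq \tfrac{1}{2}.\]
Three ingredients drive the analysis: (a) the familiar lower bound $\omega \geq \lambda_1 \phi_1$ from the Schmidt decomposition; (b) a complementary upper bound $\omega \leq \lambda_1$, which follows from Cauchy--Schwarz applied to the Schmidt expansion (equivalently, the maximum overlap of $\ket{\psi}$ with any product state is at most the largest eigenvalue of any reduced density matrix); and (c) the identity $\tfrac{1-s}{2} = \sum_{i<j}\lambda_i\lambda_j \geq \lambda_1(1-\lambda_1)$. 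Fact (b), combined with $\omega \geq \tfrac{1}{2}$, yields the crucial consequence $\lambda_1 \geq \tfrac{1}{2}$.

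The case analysis then splits on the size of $\lambda_1\phi_1$. When $\lambda_1\phi_1 \geq \tfrac{1}{2}$, the interval $[\lambda_1\phi_1, \lambda_1]$ of possible $\omega$ lies inside $[\tfrac{1}{2},1]$, on which $\omega(1-\omega)$ is decreasing, so $\omega(1-\omega) \leq \lambda_1\phi_1(1-\lambda_1\phi_1)$. Substituting $G(\phi_1) = \phi_1(1-\phi_1)$ together with (c) reduces the claim to $\lambda_1(1-\lambda_1)(1-\phi_1) \geq 0$, which is immediate. When $\lambda_1\phi_1 < \tfrac{1}{2}$, the assumption $\omega \geq \tfrac{1}{2}$ only yields $\omega(1-\omega) \leq \tfrac{1}{4}$, and I split further by $\phi_1$: for $\phi_1 \geq \tfrac{1}{2}$, the constraint $\phi_1 < 1/(2\lambda_1)$ combined with the monotonicity of $\phi(1-\phi)$ on $[\tfrac{1}{2},1]$ gives $\lambda_1^2\phi_1(1-\phi_1) \geq (2\lambda_1-1)/4$, and the claim reduces to $(2\lambda_1-1)(\lambda_1-1) \leq 0$; for $\phi_1 < \tfrac{1}{2}$, one has $G(\phi_1) = (1-\phi_1^2)/3 \geq \tfrac{1}{4}$, so the claim becomes $(3\lambda_1-1)(\lambda_1-1) \leq 0$. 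Both quadratic inequalities hold throughout $\lambda_1 \in [\tfrac{1}{2}, 1]$, completing the induction.

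The main obstacle is precisely the regime $\lambda_1\phi_1 < \tfrac{1}{2}$, where simple monotonicity cannot be used to bound $\omega(1-\omega)$ above by $\lambda_1\phi_1(1-\lambda_1\phi_1)$, and the naive argument fails. The key insight closing this case is the upper bound $\omega \leq \lambda_1$ in (b), a fact absent from the proof of \thmref{product test} and only needed once the stronger target $\omega^2 - \omega + 1$ is in view; in tandem with $\omega \geq \tfrac{1}{2}$ it forces $\lambda_1 \geq \tfrac{1}{2}$, and the elementary quadratic estimates above then close the argument.
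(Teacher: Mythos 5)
Your proof is correct, and I verified each of the three case computations: the reductions to $\lambda_1(1-\lambda_1)(1-\phi_1)\geq 0$, to $(2\lambda_1-1)(\lambda_1-1)\leq 0$, and to $(3\lambda_1-1)(\lambda_1-1)\leq 0$ all check out, as does the Cauchy--Schwarz bound $\omega\leq\lambda_1$. The overall skeleton matches the paper's: the same induction, the same Schmidt-decomposition inequality $\pt_n(\ket{\psi})\leq\lambda_1^2\,\pt_{n-1}(\ket{b_1})+\sum_{i>1}\lambda_i^2+\sum_{i<j}\lambda_i\lambda_j$, and effectively the same three-way case split (your cases A, B1, B2 correspond exactly to the paper's cases on which of the two intervals $\phi$ and $\lambda_1\phi$ land in). Where you genuinely diverge is in how each case is closed. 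The paper never uses an upper bound on $\omega$: it proves the algebraic inequality $\lambda_1^2 f_\alpha(\phi)+\cdots\leq f_\beta(\lambda_1\phi)=\mathsf{UB}(\lambda_1\phi)$ and then finishes with the single observation that $\mathsf{UB}$ is nondecreasing and $\lambda_1\phi\leq\omega$. You instead keep $\omega$ in the target inequality and control it from both sides via the sandwich $\lambda_1\phi_1\leq\omega\leq\lambda_1$; the new ingredient $\omega\leq\lambda_1$ is essential to your argument (it is what forces $\lambda_1\geq\tfrac12$, without which your quadratic inequalities in cases B1 and B2 would fail), and it is absent from the paper. The trade-off: the paper's route through $\mathsf{UB}(\lambda_1\phi)$ is slightly slicker in that it needs only the one-sided bound and a single monotonicity observation, while your route makes the mechanism more transparent---it isolates exactly why $\omega\geq\tfrac12$ is the regime where the sharper bound holds, namely that it pins $\lambda_1$ away from small values.
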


We begin by showing that \thmref{product-test-tight} is tight for $\omega \geq \tfrac{1}{2}$ using a simple example from \cite[Page 31]{Harrow2013_product_testing}).

\begin{prop}[Product test lower-bound; \propref{product-test-lower} restated]\label{prop:product-test-lower-restated}
For $n = 2$ and $\omega \geq \tfrac{1}{2}$, consider the state $\ket{\psi} = \sqrt{\omega}\ket{11} + \sqrt{1-\omega}\ket{22}$.
Then $\operatorname{Overlap}_1(\ket{\psi}) = \omega$ and
\begin{equation*}
\operatorname{PT}_2(\ket{\psi}) = \omega^2 - \omega + 1.
\end{equation*}
In addition, for $n > 2$, consider $\ket{\psi} \otimes \ket{\phi}$,
where $\ket{\phi}$ is any product state in $\bbC^{d_3} \ot \cdots \otimes \bbC^{d_n}$.
Then this has the same overlap and probability of success as $\ket{\psi}$.
\end{prop}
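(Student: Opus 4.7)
This proposition is a direct verification for a concrete family of states, so the plan is essentially just to carry out the relevant computations; there is no real conceptual obstacle. I would split the argument into three parts: the overlap computation for $n=2$, the product-test probability computation for $n=2$, and a reduction from $n>2$ to $n=2$. For the overlap, I would appeal to the standard fact that the maximum squared overlap of a bipartite pure state with any product state equals the square of its largest Schmidt coefficient. The expression $\sqrt{\omega}\ket{11} + \sqrt{1-\omega}\ket{22}$ is already in Schmidt form, and since $\omega \geq \tfrac{1}{2}$ the larger Schmidt coefficient is $\sqrt{\omega}$, yielding $\overlap_1(\ket{\psi}) = \omega$.

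For the product test probability, I would expand $\piswap \otimes \piswap = \tfrac{1}{4}(\iden + S_A + S_B + S_A S_B)$ using $\piswap = (\iden+\mathsf{SWAP})/2$, where $S_A$ denotes the SWAP applied to the first qudit of each of the two copies of $\ket{\psi}$ and $S_B$ denotes the SWAP applied to the second qudit of each copy. Two standard identities then handle all four terms: $\bra{\psi}^{\otimes 2} S_A \ket{\psi}^{\otimes 2} = \Tr[\psi_1^2]$ (and similarly $\bra{\psi}^{\otimes 2} S_B \ket{\psi}^{\otimes 2} = \Tr[\psi_2^2] = \Tr[\psi_1^2]$ for a bipartite pure state), while $S_A S_B$ implements a full swap of the two copies of $\ket{\psi}$ and therefore has expectation $1$. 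The reduced density matrix $\psi_1$ has eigenvalues $\omega$ and $1-\omega$, so $\Tr[\psi_1^2] = \omega^2 + (1-\omega)^2$, and assembling the pieces gives $\pt_2(\ket{\psi}) = \tfrac{1}{2}(1 + \omega^2 + (1-\omega)^2) = \omega^2 - \omega + 1$.

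For the $n > 2$ case, since $\ket{\phi} = \ket{\phi_3} \otimes \cdots \otimes \ket{\phi_n}$ is itself a product state, the vector $\ket{a_1} \otimes \ket{a_2} \otimes \ket{\phi}$ remains a product state on all $n$ qudits for any choice of single-qudit states $\ket{a_1}, \ket{a_2}$ that achieve the optimal overlap with $\ket{\psi}$; this gives $\overlap_1(\ket{\psi} \otimes \ket{\phi}) \geq \omega$, and the matching upper bound follows from the multiplicativity of $|\langle \cdot \mid \cdot \rangle|^2$ across independent tensor factors. For the product test, the SWAP tests on qudits $3, \ldots, n$ are applied to identical pure states $\ket{\phi_i} \otimes \ket{\phi_i}$ and succeed with probability $1$, so the overall success probability reduces to $\pt_2(\ket{\psi}) = \omega^2 - \omega + 1$.
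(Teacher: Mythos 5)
Your proposal is correct, and both of your $n=2$ computations take a recognizably different (though equally elementary) route from the paper's. For the overlap, the paper maximizes $|\braket{\psi}{ab}|^2$ directly over product vectors $\ket{a}\otimes\ket{b}$ via a Cauchy--Schwarz step, whereas you invoke the largest-Schmidt-coefficient characterization (the $r=1$ case of the Eckart--Young lemma, which the paper records separately as its Lemma 3.1); both need $\omega \ge \tfrac12$ to identify $\sqrt{\omega}$ as the top coefficient. For the test probability, the paper applies the projector $\piswap^{\otimes 2}$ term by term to the expanded state $\ket{\psi}^{\otimes 2}$ and computes the squared norm of the result, while you expand the projector itself as $\tfrac14(\iden + S_A + S_B + S_AS_B)$ and evaluate each swap expectation with the purity identities $\langle S_A\rangle = \Tr[\psi_1^2]$, $\langle S_AS_B\rangle = 1$. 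Your route buys a cleaner general formula, $\pt_2(\ket{\psi}) = \tfrac12(1+\Tr[\psi_1^2])$ for \emph{any} bipartite pure state, which makes transparent that the acceptance probability depends only on the purity of the one-sided marginal; the paper's direct projection is more self-contained and mirrors the decomposition it reuses in the inductive proof of its Theorem 2.1. Your handling of the $n>2$ reduction (tensoring with a product state preserves the overlap by factoring any $n$-partite product vector across the $\{1,2\}$ versus $\{3,\dots,n\}$ cut, and the extra SWAP tests pass with probability $1$) correctly fills in what the paper dismisses as immediate.
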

\begin{proof}
First, we show $\operatorname{Overlap}_1(\ket{\psi}) = \omega$.
This is because if $\ket{a} = \sum_{i=1}^{d_1} \alpha_i \ket{i}$
and $\ket{b} = \sum_{i=1}^{d_2} \beta_i \ket{i}$,
\begin{equation*}
|\braket{\psi}{a b}|^2
= |\sqrt{\omega}\cdot \alpha_1 \beta_1 + \sqrt{1-\omega}\cdot\alpha_2 \beta_2|^2
\leq \omega \cdot |\alpha_1 \beta_1|^2 + (1-\omega) \cdot|\alpha_2 \beta_2|^2.
\end{equation*}
This is maximized by taking $\alpha_1 = \beta_1 = 1$, in which case it equals~$\omega$.
Next, the probability of success is $\operatorname{PT}_2(\ket{\psi})=\Vert \piswap^{\otimes 2} \ket{\psi}^{\otimes 2}\Vert^2$,
 and so we first compute $\piswap^{\otimes 2} \ket{\psi}^{\otimes 2}$:
\begin{align*}
& \left(\frac{\iden + \mathrm{SWAP}}{2}\right)^{\otimes 2} \cdot (\sqrt{\omega}\ket{11} + \sqrt{1-\omega}\ket{22})^{\otimes 2}\\
={}& \left(\frac{\iden + \mathrm{SWAP}}{2}\right)^{\otimes 2}
	\cdot (\omega \ket{11} \ket{11} + \sqrt{\omega (1-\omega)} (\ket{11} \ket{22} +\ket{22} \ket{11}) + (1-\omega) \ket{22} \ket{22})\\
={}& \omega \ket{11} \ket{11}
	+ \sqrt{\omega(1-\omega)}\left(\frac{\ket{12}+\ket{21}}{\sqrt{2}}\right)\otimes \left(\frac{\ket{12}+\ket{21}}{\sqrt{2}}\right)
	+ (1-\omega) \ket{22} \ket{22}.
\end{align*}
The squared length of this is $\omega^2 + \omega(1-\omega) + (1-\omega)^2 = \omega^2-\omega+1$,
and so this equals $\operatorname{PT}_2(\ket{\psi})$.

The $n > 2$ case is an immediate consequence of the $n = 2$ case.
\end{proof}

Now we prove \thmref{product-test-tight}.

\begin{proof}[Proof of \thmref{product-test-tight}]
By induction, where again the $n = 1$ case is trivial. For the inductive step,
let $f_0(\omega) = \tfrac{1}{3} \omega^2 + \tfrac{2}{3}$ and $I_0 = [0, \tfrac{1}{2}]$. 
Let $f_1(\omega) = \omega^2-\omega + 1$ and $I_1 = [\tfrac{1}{2}, 1]$. 
The upper bound on $\operatorname{PT}(\omega)$ we are trying to show is
\begin{equation*}
\mathsf{UB}(\omega)=
\left\{\begin{array}{cl}
f_0(\omega) & \text{if $\omega \in I_0$},\\
f_1(\omega) & \text{if $\omega \in I_1$}.
\end{array}
\right.
\end{equation*}
Note that $\mathsf{UB}(\omega)$ is a non-decreasing function of~$\omega$,
and that $\mathsf{UB}(\omega) = \min\{f_0(\omega), f_1(\omega)\}$
because $f_0(\omega) \leq f_1(\omega)$ for $\omega \in I_0$
and $f_1(\omega) \leq f_0(\omega)$ for $\omega \in I_1$.
Recalling the proof of \thmref{product test}, we showed in \eqref{eq:to-be-used-later} that
\begin{align*}
\mathrm{PT}_{n}(|\psi\rangle)
\leq \lambda_{1}^{2} \cdot \mathrm{PT}_{n-1}\left(\left|b_{1}\right\rangle\right)+\sum_{i>1} \lambda_{i}^{2}+\sum_{i<j} \lambda_{i} \lambda_{j}.
\end{align*}
Write $\phi=\operatorname{Overlap}\left(\left|b_{1}\right\rangle\right)$,
and suppose that $\phi \in I_\alpha$, for $\alpha \in \{0, 1\}$.
Then the inductive hypothesis gives us
\begin{equation}\label{eq:slightly-more-general-inductive-hypothesis}
\mathrm{PT}_{n}(|\psi\rangle)
\leq \lambda_{1}^{2} \cdot f_\alpha(\phi) +\sum_{i>1} \lambda_{i}^{2}+\sum_{i<j} \lambda_{i} \lambda_{j}.
\end{equation}
Recall also that $\lambda_1 \phi \leq \operatorname{Overlap}\left(\ket{\psi}\right) = \omega$,
and suppose that $\lambda_1 \phi \in I_\beta$, for $\beta \in \{0, 1\}$.
Our goal will be to show the inequality $\eqref{eq:slightly-more-general-inductive-hypothesis} \leq f_\beta(\lambda_1 \phi)$.
Then because $\lambda_1\phi \in I_\beta$ we have $f_\beta(\lambda_1 \phi) = \mathsf{UB}(\lambda_1 \phi)$,
and because $\lambda_1 \phi \leq \omega$ and $\mathsf{UB}(\cdot)$ is a nondecreasing function,
we have $\mathsf{UB}(\lambda_1 \phi) \leq \mathsf{UB}(\omega)$, completing the inductive step.
Note that we only have to show $\eqref{eq:slightly-more-general-inductive-hypothesis} \leq f_\beta(\lambda_1 \phi)$
in the case that $\beta \leq \alpha$,
as $\lambda_1 \phi \leq \phi$, so we will never have $\beta > \alpha$.
In particular, we need not consider the case $\alpha = 0, \beta = 1$.
\newline

\noindent
\textit{Case 1: $\alpha = 1, \beta = 1$.} 
This case can be shown as follows.
\begin{align*}
&\lambda_{1}^{2} \cdot f_1(\phi)+\sum_{i>1} \lambda_{i}^{2}+\sum_{i<j} \lambda_{i} \lambda_{j}\\
={}& \lambda_1^2  \cdot (\phi^2 - \phi + 1) +\sum_{i>1} \lambda_{i}^{2}+\sum_{i<j} \lambda_{i} \lambda_{j}\\
={}& ((\lambda_1 \phi)^2 - \lambda_1 \phi + 1) -1 + \lambda_1 (1-\lambda_1) \phi +\sum_{i} \lambda_{i}^{2}+\sum_{i<j} \lambda_{i} \lambda_{j}\\
\leq{}& f_1(\lambda_1 \phi) -1 + \lambda_1 (1-\lambda_1)  +\sum_{i} \lambda_{i}^{2}+\sum_{i<j} \lambda_{i} \lambda_{j} \tag{because $\phi \leq 1$}\\
={}& f_1(\lambda_1 \phi) -1 + \sum_{1 < j} \lambda_1 \lambda_j  +\sum_{i} \lambda_{i}^{2}+\sum_{i<j} \lambda_{i} \lambda_{j} \\
\leq{}& f_1(\lambda_1 \phi) -1 + \sum_{i} \lambda_{i}^{2}+2 \sum_{i<j} \lambda_{i} \lambda_{j} \\
={}&f_1(\lambda_1 \phi) -1 + \Big(\sum_{i} \lambda_{i}\Big)^2 \\
={}& f_1(\lambda_1 \phi). \tag{because $\lambda_1 + \cdots + \lambda_d = 1$}
\end{align*}
This completes the proof.
\newline

\noindent
\textit{Case 2: $\alpha = 1, \beta = 0$.} 
This case can be shown as follows.
\begin{align*}
&\lambda_{1}^{2} \cdot f_1(\phi)+\sum_{i>1} \lambda_{i}^{2}+\sum_{i<j} \lambda_{i} \lambda_{j}\\
={}& \lambda_1^2  \cdot (\phi^2 - \phi + 1) +\sum_{i>1} \lambda_{i}^{2}+\sum_{i<j} \lambda_{i} \lambda_{j}\\
={}& (\tfrac{1}{3}(\lambda_1 \phi)^2 +\tfrac{2}{3}) -\tfrac{2}{3}
	+ \lambda_1^2 (\tfrac{2}{3} \phi^2 - \phi + \tfrac{1}{3}) +\tfrac{2}{3} \lambda_1^2+\sum_{i>1} \lambda_{i}^{2}+\sum_{i<j} \lambda_{i} \lambda_{j}\\
\leq{}& f_0(\lambda_1 \phi) -\tfrac{2}{3}+\tfrac{2}{3} \lambda_1^2
	+\sum_{i>1} \lambda_{i}^{2}+\sum_{i<j} \lambda_{i} \lambda_{j} \tag{because $\tfrac{1}{2} \leq \phi \leq 1$}\\
\leq{}& f_0(\lambda_1 \phi) -\tfrac{2}{3}+\tfrac{2}{3} \sum_i \lambda_i^2
	+\tfrac{1}{3} \sum_{1 < j} \lambda_{1} \lambda_j+\sum_{i<j} \lambda_{i} \lambda_{j} \tag{because $\lambda_j \leq \lambda_1$}\\
\leq{}& f_0(\lambda_1 \phi) -\tfrac{2}{3}+\tfrac{2}{3} \sum_i \lambda_i^2
	+\tfrac{4}{3} \sum_{i<j} \lambda_{i} \lambda_{j}\\
={}& f_0(\lambda_1 \phi) -\tfrac{2}{3}+\tfrac{2}{3} \Big(\sum_i \lambda_i\Big)^2\\
={}& f_0(\lambda_1 \phi). \tag{because $\lambda_1 + \cdots + \lambda_d = 1$}
\end{align*}
This completes the proof.
\newline

\noindent
\textit{Case 3: $\alpha = 0, \beta = 0$.} 
In this case, we aim to show that
\begin{equation*}
\lambda_{1}^{2} \cdot f_0(\phi)+\sum_{i>1} \lambda_{i}^{2}+\sum_{i<j} \lambda_{i} \lambda_{j}
\leq f_0(\lambda_1 \phi).
\end{equation*}
This was already shown in Equation~\eqref{eq:p2} in the proof of \thmref{product test} for all $\phi$ and $\lambda_1$.
This concludes case 3 and the proof of the theorem.
\end{proof}


\section{Preliminaries for MPS testing}\label{sec:Preliminaries}

\subsection{Low rank approximation to MPS}

\begin{lem}[Young-Eckart Theorem~\cite{eckart1936approximation}]\label{lem:Eckart}
Consider a bipartite state $\ket{\psi} \in \bbC^{d_1} \ot \bbC^{d_2}$ with $d_1 \geq d_2$,
and let
\begin{equation*}
\ket{\psi} =\sum_{i=1}^{d_2} \sqrt{\lambda_i} \ket{a_i}\ket{b_i}
\end{equation*}
be its Schmidt decomposition, where $\lambda_1 \geq \cdots \geq \lambda_{d_2}$.
Then the maximum overlap of $\ket{\psi}$  with a state in $\mps(r)$
is $\overlap_r\left(\ket{\psi}\right)=\sum_{i=1}^r \l_i$,
and it is achieved by the state
\begin{equation*}
\ket{\phi}=\frac{1}{\sqrt{\sum_{i=1}^r \l_i}}\sum_{i=1}^r\sqrt{\lambda_i} \ket{a_i}\ket{b_i}.
\end{equation*}
\end{lem}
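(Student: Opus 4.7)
The plan is a short optimization argument that reduces the bound to the standard fact that a rank-$r$ projector extracts at most the sum of the top $r$ eigenvalues of a density matrix. The achievability half is a direct calculation.

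First, I would fix an arbitrary normalized $\ket{\phi}\in \mps_2(r)$ and let $P$ denote the projector onto the support of its reduced density matrix $\phi_1$ on the first subsystem. Since $\ket{\phi}$ has Schmidt rank at most $r$, we have $\operatorname{rank}(P)\leq r$ and the identity $(P\otimes I)\ket{\phi}=\ket{\phi}$. Using this together with Cauchy--Schwarz,
\begin{equation*}
|\braket{\psi}{\phi}|^2
=\bigl|\bra{\psi}(P\otimes I)\ket{\phi}\bigr|^2
\leq \bigl\|(P\otimes I)\ket{\psi}\bigr\|^2\cdot \|\ket{\phi}\|^2
=\operatorname{Tr}(P\,\psi_1).
\end{equation*}

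Second, I would note that $\psi_1=\sum_{i=1}^{d_2}\lambda_i\ket{a_i}\bra{a_i}$ is already diagonalized, so by the Ky Fan/variational characterization of eigenvalues (or simply because $\operatorname{Tr}(P\psi_1)=\sum_i \lambda_i \bra{a_i}P\ket{a_i}$ with $\sum_i\bra{a_i}P\ket{a_i}\leq r$ and each summand in $[0,1]$), the maximum of $\operatorname{Tr}(P\psi_1)$ over rank-$\leq r$ orthogonal projectors is exactly $\sum_{i=1}^r \lambda_i$, attained by $P=\sum_{i=1}^r\ket{a_i}\bra{a_i}$. Combining with the previous display yields the upper bound $\overlap_r(\ket{\psi})\leq \sum_{i=1}^r\lambda_i$.

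Third, for tightness I would evaluate $\braket{\psi}{\phi}$ for the claimed optimizer. With $\ket{\phi}=\bigl(\sum_{i=1}^r\lambda_i\bigr)^{-1/2}\sum_{i=1}^r\sqrt{\lambda_i}\ket{a_i}\ket{b_i}$, the Schmidt decomposition of $\ket{\psi}$ and orthonormality of the $\ket{a_i}\ket{b_i}$ give
\begin{equation*}
\braket{\psi}{\phi}=\frac{1}{\sqrt{\sum_{i=1}^r\lambda_i}}\sum_{i=1}^r \lambda_i=\sqrt{\sum_{i=1}^r \lambda_i},
\end{equation*}
so $|\braket{\psi}{\phi}|^2=\sum_{i=1}^r\lambda_i$, matching the upper bound. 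One should also verify that this $\ket{\phi}$ lies in $\mps_2(r)$, which is immediate: its reduced state on either subsystem has exactly $r$ nonzero eigenvalues $\lambda_i/\sum_{j\leq r}\lambda_j$.

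There is no genuine obstacle here; the only subtle point is picking the right witness for Cauchy--Schwarz, namely the support projector $P$ of $\phi_1$. Everything else is bookkeeping, and the argument works for any $r\leq d_2$ (with the convention that $\lambda_i=0$ for $i>d_2$, so the bound is trivial for $r\geq d_2$).
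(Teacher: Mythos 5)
Your proof is correct, and it is a clean version of the standard argument. Note that the paper does not actually prove this lemma: it is stated with a citation to Eckart--Young \cite{eckart1936approximation} and used as a black box, so there is no internal proof to compare against. Your argument --- insert the support projector $P$ of $\phi_1$ (which satisfies $(P\otimes I)\ket{\phi}=\ket{\phi}$ and $\operatorname{rank}(P)\leq r$), apply Cauchy--Schwarz to get $|\braket{\psi}{\phi}|^2\leq \Tr(P\psi_1)$, and then bound $\Tr(P\psi_1)=\sum_i\lambda_i\bra{a_i}P\ket{a_i}$ using $\bra{a_i}P\ket{a_i}\in[0,1]$ and $\sum_i\bra{a_i}P\ket{a_i}\leq\Tr P\leq r$ --- is the canonical route, and the achievability computation is routine. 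All the steps check out: $(P\otimes I)$ is Hermitian so $\bra{\psi}(P\otimes I)\ket{\phi}=\langle (P\otimes I)\psi\,|\,\phi\rangle$, and the final maximization over $c_i\in[0,1]$ with $\sum c_i\leq r$ correctly gives $\sum_{i\leq r}\lambda_i$ since the $\lambda_i$ are sorted. The only cosmetic point is that what you call $\mps_2(r)$ is, by the paper's definition, exactly the set of bipartite states whose reduced state on the first factor has rank at most $r$, i.e.\ Schmidt rank at most $r$; you use this implicitly, and it is worth stating once so the reader sees why ``$\operatorname{rank}(\phi_1)\leq r$'' is the entire content of the membership condition.
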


While \lemref{Eckart} gives the closest $\mps(r)$  approximation to a bipartite state,
a general lower bound on the overlap between an $n$-partite state and $\mps(r)$ can be also derived. This is stated in the following lemma.
\begin{lem}[Low-rank Approximation,~Lemma 1 of~\cite{VerstraeteTruncationMPS}]\label{lem:low_rank_app}
Consider an $n$-partite state~$\ket{\psi} \in \bbC^{d_1} \otimes \cdots \otimes \bbC^{d_n}$.
For each $i \in \{1, \ldots, n-1\}$, write the  Schmidt decomposition of $\ket{\psi}$ across the subsystems $\{1, \ldots, i\}$ and $\{i+1, \ldots, n\}$ as
\begin{equation*}
\ket{\psi} = \sum_{j=1}^{D_i} \sqrt{\lambda^{(i)}_j} \ket{a^{(i)}_j} \ket{b^{(i)}_j},
\end{equation*}
where $D_i = \min\{d_1 \cdots d_i, d_{i+1} \cdots d_n\}$
and $\lambda_1^{(i)} \geq \cdots \geq \lambda_{D_i}^{(i)}$.
Then there exists a state $\ket{\phi} \in \mps(r)$ with (unsquared) overlap
\begin{equation*}
|\braket{\phi}{\psi}|\geq 1-\sum_{i=1}^{n-1} \sum_{j = r+1}^{D_i}\lambda^{(i)}_{j}.
\end{equation*}
\end{lem}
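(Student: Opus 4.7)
The plan is to construct $\ket{\phi}$ by iteratively truncating $\ket{\psi}$'s Schmidt decomposition at each cut $i=1,\dots,n-1$ down to bond dimension $r$, and to argue that the per-cut errors $\epsilon_i := \sum_{j>r}\lambda^{(i)}_j$ accumulate additively.

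For the construction, set $\ket{\psi_0} := \ket{\psi}$ and, for $i=1,\dots,n-1$, obtain $\ket{\psi_i}$ from $\ket{\psi_{i-1}}$ by taking the Schmidt decomposition of the latter across cut $i$ and retaining only its top $r$ components, as in \lemref{Eckart}. A short preliminary check shows that $\ket{\psi_i}$ inherits the rank-$\le r$ constraints at all earlier cuts $1,\dots,i-1$: after the truncation the state has the form $\sum_{l=1}^r c_l\ket{u_l}\otimes\ket{v_l}$ with $\ket{u_l}$ on the first $i$ qudits and $\ket{v_l}$ on the remaining qudits, and such a ``sum of $r$ products across cut $i$'' form is preserved under any operator supported on a strictly smaller set of qudits $\{1,\dots,k\}$ with $k<i$. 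Hence $\ket{\phi} := \ket{\psi_{n-1}}$ lies in $\mps(r)$.

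For the overlap bound, each truncation step satisfies $|\braket{\psi_i}{\psi_{i-1}}|^2 = 1 - \widetilde{\epsilon}_i$, where $\widetilde{\epsilon}_i$ is the tail of the Schmidt spectrum of the intermediate state $\ket{\psi_{i-1}}$ at cut $i$. The proof then splits into two technical pieces: (i)~a spectral-perturbation bound $\widetilde{\epsilon}_i \le \epsilon_i$ relating the intermediate tail to the original one, derivable from Weyl-type inequalities applied to $(\psi_{i-1})_{1,\dots,i}$ versus $\psi_{1,\dots,i}$; and (ii)~a \emph{squared}-fidelity additivity
\begin{equation*}
1 - |\braket{\phi}{\psi}|^2 \;\le\; \sum_{i=1}^{n-1} \widetilde{\epsilon}_i.
\end{equation*}
Combining (i) and (ii), together with $\sqrt{1-x}\ge 1-x$ for $x\in[0,1]$, gives $|\braket{\phi}{\psi}|\ge 1 - \sum_i \epsilon_i$.

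I expect (ii) to be the main obstacle. A naive triangle inequality on Euclidean distances between successive $\ket{\psi_i}$'s would only yield a bound of the form $(\sum_i\sqrt{\epsilon_i})^2$, which can be much worse than $\sum_i\epsilon_i$. Obtaining the additive bound requires exploiting that the ``discarded'' direction at each step lies in the orthogonal complement of the retained subspaces at the current and previous cuts, so that the per-cut error vectors accumulate in squared-norm (Pythagorean-style) rather than in norm; this orthogonality is precisely what the choice of top-$r$ spectral projectors guarantees, and it is the technical heart of the Verstraete--Cirac--Latorre truncation bound cited in the lemma statement.
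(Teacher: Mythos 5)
First, a point of reference: the paper does not prove this lemma---it is imported directly from Verstraete--Cirac---so the comparison is against that standard truncation argument, whose strategy (successive Schmidt truncations, one per cut) your outline correctly mirrors. You also correctly identify that the entire content of the lemma is making the per-cut errors add linearly rather than as $(\sum_i\sqrt{\epsilon_i})^2$. But the two technical claims you defer are genuine gaps, and one of them is false as stated. Claim (i), that the \emph{normalized} intermediate state $\ket{\psi_{i-1}}$ has cut-$i$ tail $\widetilde{\epsilon}_i\le\epsilon_i$, does not hold in general: the \emph{unnormalized} projected state does satisfy such a monotonicity (via the Ky Fan characterization $\sum_{j\le r}\lambda_j(\rho)=\max_{\mathrm{rank}\text{-}r\,Q}\Tr[Q\rho]$ applied to $M\rho M^{\dagger}$ for a contraction $M$), but renormalizing divides by $\|\Pi_{i-1}\cdots\Pi_1\ket{\psi}\|^2\le 1$, which can inflate the tail. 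Claim (ii) is the heart of the lemma and is left entirely unproven.

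Both difficulties disappear if you use the \emph{fixed} projectors $P_i=\Pi_i\ot\iden$, where $\Pi_i$ projects onto the span of the top $r$ Schmidt vectors of the \emph{original} $\ket{\psi}$ at cut $i$, instead of re-diagonalizing the intermediate states. Setting $\ket{\chi_0}=\ket{\psi}$ and $\ket{\chi_i}=P_i\ket{\chi_{i-1}}$ (unnormalized), the decomposition $\ket{\psi}-\ket{\chi_i}=(\iden-P_i)\ket{\psi}+P_i(\ket{\psi}-\ket{\chi_{i-1}})$ is an orthogonal sum, so $\|\psi-\chi_i\|^2\le\epsilon_i+\|\psi-\chi_{i-1}\|^2$, and induction gives $\|\psi-\chi_{n-1}\|^2\le\sum_i\epsilon_i$; this is precisely the Pythagorean accumulation you are looking for, it requires no analogue of (i), and the polarization identity $\mathrm{Re}\braket{\psi}{\chi_{n-1}}=\tfrac12\bigl(1+\|\chi_{n-1}\|^2-\|\psi-\chi_{n-1}\|^2\bigr)$ then converts it into the stated bound on the normalized overlap. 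Separately, your MPS-membership argument has the two cuts in the wrong roles: what must be shown is that the truncation at cut $i$ preserves the rank bounds already established at cuts $k<i$, which requires realizing that truncation as an operator supported on one side of cut $k$ (e.g.\ on $\{i+1,\dots,n\}$, or equivalently applying the fixed projectors in the order $P_{n-1}$ first, $P_1$ last, so that every projector applied after $P_k$ is supported on $\{1,\dots,j\}\subseteq\{1,\dots,k\}$). The statement you give---that the rank-$r$ form at cut $i$ is preserved by operators on $\{1,\dots,k\}$ with $k<i$---is true but is not the implication the induction needs.
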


\subsection{Representation theory and weak Schur sampling}\label{sec:representation theory}
A common scenario involves having i.i.d.\ copies of a state~$\r$,
which in this work could be the state $\ket{\psi_{1, \ldots, n}}$ or one of its marginals.
These copies are invariant under permutation.
The spectrum of the state $\r$, including its rank, is also invariant under the action of any unitary operator $U$ that maps $\r$ to $U \r U^{\dagger}$.
In this section, we study these symmetries and discuss how we can exploit them in the analysis of our MPS tester.
The content of this section is already covered in detail in previous works (see \cite[Section 2.5]{Wright18}, \cite[Section 5.3]{harrow2005thesis}, or \cite[Section 2]{Childs_weak_schur_sampling} for example).
Here we briefly review these topics.

\begin{definition}[Partitions]\label{def:partitions}
A partition of $m$, denoted by $\m \vdash m$, is a list of nonnegative integers $\m=(\m_1,\m_2,\dots,\m_k)$ that satisfy
$\m_1 \geq \m_2 \geq \dots \geq \m_k$ and $\m_1+\m_2+\dots+\m_k=m$. We call the number of nonzero elements $\m_i$ in $\m$ the length of the partition and denote it by $\ell(\m)$.
\end{definition}
The group of all permutations of $\{1, \ldots, m\}$ is known as the \emph{symmetric group}, and we denote it by $\symm{m}$.
In addition, we denote the group of $d\times d$ unitary operators by $\GL{d}$.
Two natural representations of the groups $\symm{m}$ and $\GL{d}$ over the space $\cdn$ are given as follows.
\begin{align*}
\srep(\pi)\; |a_1\rangle \otimes |a_2\rangle \otimes \ldots \otimes |a_m \rangle
&= |a_{\pi^{-1}(1)}\rangle \otimes |a_{\pi^{-1}(2)}\rangle \otimes \ldots \otimes |a_{\pi^{-1}(m)} \rangle,\\
\GLrep(U)\; |a_1\rangle \otimes |a_2\rangle \otimes \ldots \otimes |a_m \rangle
&= \ \,\!(U |a_1\rangle) \;\!\otimes\, (U |a_2\rangle) \,\otimes\,\! \ldots \otimes (U |a_m \rangle),
\end{align*}
where $\{\bigotimes_{i=1}^m \ket{a_i}\}$ with $a_i\in [d]$ is a basis for $\cdn$, and $\pi \in \symm{m},\ U\in \GL{d}$.
The irreducible representations (\emph{irreps}) of the symmetric group, denoted $\sirrep_\m$, are indexed by partitions $\mu \vdash m$.
Similarly, the polynomial irreps of the unitary group, denoted $\GLirrep_\m^d$, are indexed by partitions $\mu$ with $\ell(\m) \leq d$.
The dimension of the symmetric group irrep $\sirrep_\m$ is denoted $\dim(\m)$,
and its corresponding character $\chi_\mu$ is given by $\chi_{\mu}(\pi) = \Tr[\srep(\pi)]$.

The representations $\srep(\pi)$ and $\GLrep(U)$ commute, meaning that $\srep(\pi)\GLrep(U) =\GLrep(U) \srep(\pi)$.
Hence, we can consider $\srep(\pi)\GLrep(U)$ as a representation of the direct product group $\symm{m}\times \GL{d}$.
\emph{Schur-Weyl duality}, stated as follows, establishes a strong connection between these representations. 

\begin{thm}[Schur--Weyl duality] \label{thm:Schur Weyl}
The space $(\bbC^d)^{\ot m}$ decomposes as
\begin{equation*}
\srep \GLrep \mathop{\cong}^{\symm{m} \times \GL{d}} \Op_{\substack{\m \vdash m\\ \ell(\m)\leq d}} \sirrep_\m \ot \GLirrep_\m^d.
\end{equation*}
In other words, there exist a unitary $\Usch \in \unitary{d^m}$ such that for all $\pi \in \symm{m}$ and $U \in \GL{d}$,
\ba
\schurbasis\srep(\pi)\GLrep(U)\schurbasis^\dagger
= \sum_{\substack{\m \vdash m\\ \ell(\m)\leq d}} \ketbra{\m}{\m}\ot\sirrep_\lambda(\pi)\ot\GLirrep_\lambda^d(U).\label{eq:schur-basis}
\ea
\end{thm}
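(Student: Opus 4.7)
The plan is to prove Schur--Weyl duality via the double commutant theorem for semisimple algebras. First I would observe that the two actions commute: since $\GLrep(U)$ acts as $U$ on every tensor factor while $\srep(\pi)$ only permutes the tensor factors, one has $\srep(\pi)\GLrep(U)=\GLrep(U)\srep(\pi)$ for all $\pi\in\symm{m}$ and $U\in\GL{d}$. Let $\mathcal{A}\subseteq \mathrm{End}((\bbC^d)^{\ot m})$ denote the algebra generated by $\{\srep(\pi):\pi\in\symm{m}\}$ and let $\mathcal{B}$ denote the one generated by $\{\GLrep(U):U\in\GL{d}\}$; the commutation gives $\mathcal{B}\subseteq\mathcal{A}'$ and $\mathcal{A}\subseteq\mathcal{B}'$, where $(\cdot)'$ denotes the commutant inside $\mathrm{End}((\bbC^d)^{\ot m})$.

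The crux will be the reverse inclusion $\mathcal{A}'\subseteq\mathcal{B}$, which I would establish in two standard steps. First, under the natural isomorphism $\mathrm{End}((\bbC^d)^{\ot m})\cong \mathrm{End}(\bbC^d)^{\ot m}$, the commutant $\mathcal{A}'$ corresponds to the symmetric subspace of $\mathrm{End}(\bbC^d)^{\ot m}$ under the $\symm{m}$-action that permutes tensor factors. Second, this symmetric subspace is spanned by the diagonal tensor powers $\{A^{\ot m}:A\in M_d(\bbC)\}$, by polarization: extracting the coefficient of $t_1\cdots t_m$ in $(t_1 A_1+\cdots+t_m A_m)^{\ot m}$ yields a nonzero scalar multiple of the symmetrization of $A_1\ot\cdots\ot A_m$, and these symmetrizations span. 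Since $\GL{d}$ is Zariski-dense in $M_d(\bbC)$ and $A\mapsto A^{\ot m}$ is polynomial, the span of $\{U^{\ot m}:U\in\GL{d}\}$ coincides with that of $\{A^{\ot m}:A\in M_d(\bbC)\}$, whence $\mathcal{A}'=\mathcal{B}$. Because $\mathcal{A}$ is semisimple (Maschke's theorem applied to $\bbC[\symm{m}]$), the double commutant theorem then yields $\mathcal{B}'=\mathcal{A}$ as well.

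Given these mutual commutant relations, the general structure theory of commuting semisimple actions on a finite-dimensional space delivers a decomposition $(\bbC^d)^{\ot m}\cong \bigoplus_{\m} \sirrep_\m\ot W_\m$ indexed by partitions $\m\vdash m$, where each $W_\m$ is either zero or an irreducible $\mathcal{B}$-module, and distinct nonzero $W_\m$ are pairwise inequivalent. To identify which $\m$ contribute, I would use the Young symmetrizer $c_\m\in\bbC[\symm{m}]$: the $\sirrep_\m$-isotypic component is nonzero iff $c_\m(\bbC^d)^{\ot m}\neq 0$, and since $c_\m$ antisymmetrizes over $\ell(\m)$ of the tensor factors, its image vanishes precisely when $\ell(\m)>d$. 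The surviving $W_\m$ are then named $\GLirrep_\m^d$, and $\schurbasis$ is the orthonormal change-of-basis realizing this direct sum.

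The main obstacle will be the polarization-plus-Zariski-density step, which is the only genuinely nontrivial input; everything else is structural bookkeeping once the double commutant relations are in hand. A cleaner alternative that avoids algebraic geometry is to restrict to the compact unitary subgroup $\unitary{d}\subset\GL{d}$ and invoke Weyl's unitary trick, using averaging against Haar measure to show directly that $\{U^{\ot m}:U\in\unitary{d}\}$ already spans the full symmetric subspace; this keeps the argument entirely within functional analysis at the cost of invoking Haar measure.
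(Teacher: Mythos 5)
The paper does not prove this theorem at all: Schur--Weyl duality is presented as classical background, with the section explicitly deferring to standard references (\cite{Wright18}, \cite{harrow2005thesis}, \cite{Childs_weak_schur_sampling}), so there is no in-paper argument to compare against. Your double-commutant proof is the standard textbook route and is correct in outline: commutation of the two actions, identification of the commutant of $\bbC[\symm{m}]$ with the span of diagonal tensor powers via polarization, semisimplicity plus the double commutant theorem, and the Young-symmetrizer criterion $\ell(\m)\leq d$ for which isotypic components survive. Two small points. First, note that the paper's $\GL{d}$ already denotes the group of $d\times d$ \emph{unitary} operators, so the ``cleaner alternative'' you mention at the end is not optional here --- you must show that $\{U^{\ot m}: U \text{ unitary}\}$ spans the same algebra as $\{A^{\ot m}: A\in M_d(\bbC)\}$, which follows from your Zariski-density/polynomial-identity observation (a polynomial functional vanishing on the unitary group vanishes on all of $M_d(\bbC)$). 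Second, the ``precisely'' in the Young-symmetrizer step hides the nonvanishing direction: for $\ell(\m)\leq d$ you should exhibit a tensor not killed by $c_\m$ (e.g.\ the basis tensor assigning $e_i$ to the cells of row $i$), and you should remark that the change of basis $\Usch$ can be taken unitary because the isotypic decomposition of a $*$-closed algebra of operators is orthogonal. Neither gap is serious; both are routine to fill.
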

The unitary operator $\Usch$ transforms the standard basis into a basis that called the \emph{Schur basis} and label by $\ket{\m}\ket{q_\mu}\ket{p_\mu}$.
In this basis,
\begin{align*}
\GLrep(U)\cdot \ket{\mu}\ket{q_\mu}\ket{p_\mu}&=\ket{\m} (\GLirrep_\mu^d(U)\ket{q_\mu})\ket{p_\mu},\\
\sirrep(\pi)\cdot \ket{\mu}\ket{q_\mu}\ket{p_\mu}&=\ket{\m} \ket{q_\mu}(\sirrep_\mu(\pi)\ket{p_\mu}).
\end{align*}
Because $\GLirrep_\mu^d$ is a polynomial irrep,
it is well-defined for any $d \times d$ matrix.
For example, when applied to invertible matrices it gives the $\mu$-irrep of the general linear group $\mathsf{GL}_d$.
We can also apply it to (possibly) non-invertible matrices, like the state~$\rho$.
In this case, if we set $\pi=e$ in \eqref{eq:schur-basis}, where $e$ is the identity permutation,
we see that the operator $\GLrep(\r)=\r^{\ot m}$ is block-diagonalized in the Schur basis.
\begin{cor}
Given a $d \times d$ density operator $\r$,
\ba
\Usch \r^{\ot m} \Usch=\sum_{\substack{\m \vdash m\\ \ell(\m)\leq d}} \ketbra{\m}{\m}\ot \iden_{\dim( \m)}\ot\GLirrep_\m^d(\r).\label{eq:p4}
\ea
\end{cor}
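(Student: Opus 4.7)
The plan is to derive this corollary as a direct specialization of Schur--Weyl duality (\thmref{Schur Weyl}). First, I would substitute $\pi = e$, the identity permutation, into the block-diagonalization identity~\eqref{eq:schur-basis}. Since $\sirrep_\mu$ is a representation of $\symm{m}$, the identity element maps to the identity operator, so $\sirrep_\mu(e) = \iden_{\dim(\mu)}$; this collapses the middle tensor factor to exactly the identity block appearing in the corollary statement.

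Second, I would use the definition $\GLrep(U) = U^{\otimes m}$, so that setting $U = \rho$ would turn the conjugated left-hand side into $\schurbasis \rho^{\otimes m} \schurbasis^\dagger$ and thereby produce the desired equation.

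The one subtlety is that \thmref{Schur Weyl} is stated for $U \in \GL{d}$ (the unitary group, in the paper's convention), while a density operator $\rho$ is generally not unitary or even invertible. As emphasized immediately before the corollary, $\GLirrep_\mu^d$ is a \emph{polynomial} irrep and hence well-defined on any $d \times d$ matrix; the map $\rho \mapsto \rho^{\otimes m}$ is likewise polynomial in the matrix entries. Both sides of the $\pi = e$ specialization are therefore polynomial functions of the argument and agree on the unitary group, which is a real form of the complex general linear group; by analytic continuation they then agree on all of $\bbC^{d \times d}$, in particular on $\rho$. I do not expect any substantive technical obstacle beyond recording this polynomial-extension step carefully.
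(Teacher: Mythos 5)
Your proposal is correct and matches the paper's own derivation: the paper likewise obtains the corollary by setting $\pi = e$ in \eqref{eq:schur-basis} (so that $\sirrep_\mu(e) = \iden_{\dim(\mu)}$) and invoking the fact that $\GLirrep_\mu^d$ is a polynomial irrep, hence well-defined on the possibly non-invertible matrix $\rho$. Your explicit polynomial-identity/analytic-continuation justification simply makes precise a step the paper leaves implicit.
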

The equality \eqref{eq:p4} shows that there is a unitary $\Usch$ \emph{independent} of the state $\r$ which puts this state in the block-diagonal form.
We can therefore interpret the density matrix $\Usch \r^{\ot m} \Usch$
as corresponding to a mixed state with one element in the mixture for each block $\mu$.
In this case, measuring the block~$\mu$ can be done without loss of generality,
as it does not perturb the state.
This gives rise to the following measurement.
\begin{definition}[Weak Schur sampling]\label{def:WSS}
\emph{Weak Schur sampling (WSS)}
refers to the projective measurement $\{\Pi_\mu\}_{\mu \vdash m, \ell(\mu) \leq d}$
in which $\Pi_\mu$ projects onto the subspace specified by the partition $\mu$ in the Schur basis.
\end{definition}

The distribution of $\m$ measured by WSS only depends on the spectrum of the state $\r$. 
In fact, if one is only interested learning some property of $\r$'s spectrum,
it can be shown that WSS is the optimal measurement,
and that further measuring within the $\m$-irrep (e.g.\ measuring $\GLirrep_\m^d(\r)$)
yields no additional information about $\r$'s spectrum.

In the analysis of the MPS tester in \secref{testing algorithm}, we will use the following expression for the projector $\Pi_{\m}$ using the characters $\chi_{\m}$.
\begin{thm}[Weak Schur sampling projector, cf.\ {\cite[Equation 7]{Childs_weak_schur_sampling}}]\label{thm:representations of Pi_Lambda}
The weak Schur sampling projectors $\Pi_{\m}$ can be expressed as
\begin{align}
\Pi_{\m}=\dim(\m) \cdot \E_{\boldsymbol{\pi} \in \symm{m}}\left[ \chi_{\m}(\boldsymbol{\pi})\srep(\boldsymbol{\pi})\right]\label{eq:g11}.
\end{align}
\end{thm}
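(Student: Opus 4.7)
The plan is to recognize the right-hand side as the standard character-based projector onto the $\mu$-isotypic component and then identify this component via Schur--Weyl duality. Set
\[
A_{\mu} := \dim(\mu)\cdot \E_{\boldsymbol{\pi} \in \symm{m}}\left[ \chi_{\mu}(\boldsymbol{\pi})\,\srep(\boldsymbol{\pi})\right].
\]
The goal is to show $A_\mu = \Pi_\mu$. By Schur--Weyl duality (\thmref{Schur Weyl}), conjugation by $\Usch$ block-diagonalizes $\srep(\pi)$ as $\bigoplus_{\nu}\ket{\nu}\!\bra{\nu}\otimes \sirrep_\nu(\pi)\otimes \iden_{\dim(\GLirrep_\nu^d)}$, so it suffices to compute each block of $A_\mu$ in the Schur basis.

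First I would compute the $\nu$-block of $A_\mu$ to be
\[
\dim(\mu)\cdot \E_{\boldsymbol{\pi}}\bigl[\chi_\mu(\boldsymbol{\pi})\,\sirrep_\nu(\boldsymbol{\pi})\bigr]\otimes \iden_{\dim(\GLirrep_\nu^d)},
\]
and then invoke the operator-valued form of Schur orthogonality for characters of $\symm{m}$, namely
\[
\E_{\boldsymbol{\pi}\in\symm{m}}\bigl[\chi_\mu(\boldsymbol{\pi})\,\sirrep_\nu(\boldsymbol{\pi})\bigr] \;=\; \frac{\delta_{\mu\nu}}{\dim(\mu)}\,\iden_{\dim(\mu)}.
\]
(This is the standard consequence of Schur's lemma applied to the equivariant operator $\E_{\boldsymbol{\pi}}[\chi_\mu(\boldsymbol{\pi})\,\sirrep_\nu(\boldsymbol{\pi})]$, with the scalar pinned down by taking the trace and using $\tfrac{1}{m!}\sum_\pi \chi_\mu(\pi)\overline{\chi_\nu(\pi)}=\delta_{\mu\nu}$; here $S_m$ characters are real so the complex conjugate is unnecessary.) Substituting this back, the $\nu$-block of $A_\mu$ becomes $\delta_{\mu\nu}\,\iden_{\dim(\mu)}\otimes \iden_{\dim(\GLirrep_\mu^d)}$, which is exactly the Schur-basis expression for $\Pi_\mu$ coming from \defref{WSS}. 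Therefore $A_\mu = \Pi_\mu$.

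The only nontrivial step is the Schur orthogonality identity above; everything else is bookkeeping in the Schur basis together with the linearity of $\srep$ and the expectation. If one wanted to avoid quoting this identity, it would suffice to verify it by expanding $\chi_\mu(\pi) = \Tr[\sirrep_\mu(\pi)]$ and applying the matrix-entry orthogonality $\E_{\boldsymbol{\pi}}[(\sirrep_\mu)_{ij}(\boldsymbol{\pi})\,\overline{(\sirrep_\nu)_{k\ell}(\boldsymbol{\pi})}] = \tfrac{\delta_{\mu\nu}\delta_{ik}\delta_{j\ell}}{\dim(\mu)}$ entry-by-entry, but this is just a repackaging of the same content. I do not anticipate any further obstacles: the result is essentially the central-idempotent formula $e_\mu=\frac{\dim(\mu)}{|G|}\sum_{g}\overline{\chi_\mu(g)}\,g$ from the representation theory of finite groups, specialized to $G=\symm{m}$ and then pushed through the representation $\srep$.
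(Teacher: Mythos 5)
Your proof is correct. The paper itself does not prove this statement — it cites it as a known fact (the ``cf.\ \cite{Childs_weak_schur_sampling}'' points to an external reference), so there is no in-paper argument to compare against. Your derivation is the standard one: recognize the right-hand side as $\srep$ applied to the primitive central idempotent $e_\mu$, block-diagonalize $\srep(\pi)$ via Schur--Weyl duality (\thmref{Schur Weyl}), and kill all $\nu\neq\mu$ blocks with Schur orthogonality, using the fact that $S_m$ characters are real so $\chi_\mu(\pi)=\chi_\mu(\pi^{-1})$ and the conjugate in the idempotent formula can be dropped. All steps are sound and the identity $\E_{\boldsymbol{\pi}}[\chi_\mu(\boldsymbol{\pi})\,\sirrep_\nu(\boldsymbol{\pi})]=\tfrac{\delta_{\mu\nu}}{\dim(\mu)}\iden$ is applied correctly; this matches what one finds in the cited reference.
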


\section{An algorithm for testing matrix product states}\label{sec:testing algorithm}

In this section we introduce the MPS tester.
To begin, we introduce the rank tester of O'Donnell and Wright~\cite{Wright2015_spectrum_testing},
which is meant to test whether a mixed state~$\rho$ is rank~$r$.
This refers to the following problem.

\begin{definition}[Rank testing]
Given a mixed stated state $\rho \in \bbC^{d \times d}$,
let $\rho = \sum_{i=1}^d \alpha_i\cdot  \ket{u_i}\bra{u_i}$ be its eigendecomposition,
where $\alpha_1 \geq \cdots \geq \alpha_d$.
Then $\rho$ is \emph{$\d$-far from rank~$r$} if $\lambda_{r+1} + \cdots + \lambda_d \geq \d$.

An algorithm $\mathcal{A}$
is a \emph{property tester for rank $r$ matrices using $m = m(r, \d)$ copies}
if, given $\d > 0$ and $m$ copies of $\rho \in \bbC^{d \times d}$, it acts as follows.
If $\rho$ is rank-$r$, then it accepts with probability at least $\tfrac{2}{3}$.
(If instead it accepts with probability~$1$ in this case, we say that it has \emph{perfect completeness}.)
And if $\rho$ is $\d$-far from rank-$r$, then it accepts with probability at most $\tfrac{1}{3}$.
\end{definition}

The rank tester of~\cite{Wright2015_spectrum_testing} is motivated by the fact that if $\rho$ is indeed rank~$r$,
then weak Schur sampling (as in \defref{WSS}) always returns a Young diagram $\boldsymbol{\m}$ with $\ell(\boldsymbol{\m}) \leq r$.

\begin{definition}[The rank tester]
Let $r \geq 1$.
Given $\rho^{\otimes n}$, the \emph{rank tester}
performs weak Schur sampling and receives a random $\boldsymbol{\m}$.
The rank tester accepts if $\ell(\boldsymbol{\m}) \leq r$ and rejects otherwise.
Equivalently, it performs the two-outcome projective measurement $\{\Pi_{\leq r}, \iden-\Pi_{\leq r}\}$,
where $\Pi_{\leq r} = \sum_{\m: \ell(\m) \leq r} \Pi_\m$, and accepts if it observes the first outcome.
\end{definition}

The next theorem states the copy complexity of the rank tester.

\begin{thm}[{\cite[Lemma 6.2]{Wright2015_spectrum_testing}}]
The rank tester tests whether~$\rho$ has rank~$r$ with $O(r^2/\d)$ copies.
\end{thm}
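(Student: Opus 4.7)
The theorem has two parts: perfect completeness (every rank-$r$ state is accepted with probability $1$) and soundness ($\delta$-far states are accepted with probability $\leq 1/3$) given $m = O(r^2/\delta)$ copies. I would address them in order.

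Completeness follows directly from the Schur--Weyl corollary in the preliminaries. In the Schur basis, $\rho^{\otimes m}$ is block-diagonal with the $\mu$-th block equal to $\iden_{\dim(\mu)} \otimes \GLirrep^d_\mu(\rho)$, and the polynomial $\GL{d}$-irrep $\GLirrep^d_\mu$ vanishes when $\rho$ has rank $\leq r < \ell(\mu)$. Concretely, $\Tr[\GLirrep^d_\mu(\rho)] = s_\mu(\alpha)$, where $s_\mu$ is the Schur polynomial, and $s_\mu$ vanishes on spectra with fewer than $\ell(\mu)$ nonzero entries. Hence weak Schur sampling deterministically outputs $\ell(\boldsymbol{\mu}) \leq r$, so the test accepts with probability~$1$.

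For soundness I would recast the problem combinatorially via the Robinson--Schensted--Knuth correspondence. The measurement distribution $\Pr[\boldsymbol{\mu} = \mu] = \dim(\mu) \cdot s_\mu(\alpha)$ coincides with the distribution of the RSK shape of $m$ i.i.d.\ samples $X_1, \ldots, X_m$ drawn from the spectrum distribution $\alpha$; by Greene's theorem, $\ell(\boldsymbol{\mu})$ equals the length of the longest \emph{strictly} decreasing subsequence of the $X_i$'s. Thus soundness reduces to a combinatorial statement: when $\beta := \sum_{i > r} \alpha_i \geq \delta$, a draw of $m = Cr^2/\delta$ i.i.d.\ samples from $\alpha$ admits a strictly decreasing subsequence of length $r+1$ with probability at least $2/3$, for a suitable absolute constant $C$.

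I would attack this combinatorial step via the first-moment / variance method. The expected number of strictly decreasing $(r+1)$-tuples of positions in $X_1, \ldots, X_m$ equals $\binom{m}{r+1} \cdot e_{r+1}(\alpha)$, where $e_{r+1}$ is the elementary symmetric polynomial in $\alpha$. Using the tail constraint $\sum_{i > r}\alpha_i \geq \delta$ together with a Maclaurin-type inequality, I would lower-bound $e_{r+1}(\alpha)$ so that the expected count grows (exponentially in $r$, after Stirling) at $m = \Theta(r^2/\delta)$, and a Janson inequality---overlapping $(r+1)$-tuples that share $j$ positions have correlation only through the factor $e_{r+1-j}(\alpha)$---promotes this to $\Pr[\text{count} \geq 1] \geq 2/3$. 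The main obstacle I anticipate is the variance bookkeeping: the extremal spectrum minimizing $e_{r+1}(\alpha)$ differs between the regimes $\delta \leq 1/(r+1)$ (optimal has one head atom of mass $1 - r\delta$ together with $r$ atoms of mass $\delta$) and $\delta > 1/(r+1)$ (optimal is nearly uniform on $O(r/(1-\delta))$ atoms), and unifying these regimes cleanly may require a Schur-concavity reduction on the map $\rho \mapsto \Tr[\Pi_{\leq r}\rho^{\otimes m}]$ as a preprocessing step.
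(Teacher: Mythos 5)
First, a note on scope: the paper does not prove this statement at all --- it is imported verbatim as \cite[Lemma 6.2]{Wright2015_spectrum_testing} and used as a black box in the analysis of the MPS tester. So there is no in-paper proof to compare against; what follows is an assessment of your proposed route on its own terms. Your completeness argument is complete and correct: $\Tr[\GLirrep^d_\mu(\rho)]=s_\mu(\alpha)$ vanishes when $\ell(\mu)>\operatorname{rank}(\rho)$, and since $\GLirrep^d_\mu(\rho)$ is PSD this forces the block to be zero, so weak Schur sampling never outputs $\ell(\boldsymbol\mu)>r$. The reduction of soundness to a statement about the longest strictly decreasing subsequence of an i.i.d.\ word, via the identity $\Pr[\boldsymbol\mu=\mu]=\dim(\mu)s_\mu(\alpha)$, RSK, and Greene's theorem, is also correct and is exactly the framework of the cited source.

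The gap is in the combinatorial soundness step. Janson's inequality does not apply here: it requires the bad events to be monotone (up-sets) in a product space of independent indicators, and ``positions $t_1<\cdots<t_{r+1}$ carry strictly decreasing values'' is not monotone in any natural product structure on $[d]^m$. Moreover, your description of the correlations is not right: two decreasing $(r+1)$-tuples sharing $j$ positions must place \emph{consistent} values at the shared positions, compatible with both interleavings, so the joint probability is a sum over compatible value assignments rather than a clean factor $e_{r+1-j}(\alpha)$; this is precisely where a naive second moment can blow up. You would be better served by splitting into the two regimes you already identified and arguing each directly: when the tail mass $\delta$ sits on few atoms of non-negligible individual mass, a greedy/waiting-time argument (find a tail value, then values $r, r-1,\dots,1$ in order, each in $O(m/(r+1))$ fresh positions) succeeds; when the tail is diffuse, the $\approx m\delta$ tail samples form a near-uniformly-random word on a large alphabet, and the Erd\H{o}s--Szekeres/reversal-symmetry argument gives a strictly decreasing subsequence of length $\Omega(\sqrt{m\delta})\geq r+1$ once $m=\Omega(r^2/\delta)$ --- this diffuse case is also the one that explains why the $r^2/\delta$ scaling is necessary. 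Interpolating between the regimes (and boosting the constant success probability by partitioning the $m$ positions into $O(1)$ independent blocks) is where the remaining work lies; a majorization/Schur-concavity reduction to extremal spectra, as you suggest, is a reasonable way to organize it but is itself a nontrivial lemma you would need to prove.
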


O'Donnell and Wright also show that the rank tester requires $\Omega(r^2/\d)$ copies~\cite[Lemma 6.2]{Wright2015_spectrum_testing}.
The rank tester has perfect completeness,
and in fact it is the optimal algorithm for rank testing with perfect completeness~\cite[Proposition 6.1]{Wright2015_spectrum_testing}.
However, among algorithms with imperfect completeness,
the best known lower bound states that $\Omega(r/\d)$
are necessary~\cite[Theorem 1.11]{Wright2015_spectrum_testing}.
It remains an open question whether the rank tester is indeed the optimal algorithm for this task,
or whether it can be improved upon.

Now we state the MPS tester.
It is motivated by the fact that $\ket{\psi_{1, \ldots, n}}$ is in $\mps(r)$
if and only if $\psi_{1, \ldots, i}$ has rank~$r$ for each $i \in [n]$.

\begin{definition}[The MPS tester]\label{def:mps-tester}
Given $m$ copies of the state $\ket{\psi_{1, \ldots, n}} \in \bbC^{d_1} \ot \dots \ot \bbC^{d_n}$, the \emph{MPS tester} acts as follows.
For all $i \in [n]$, it runs the rank tester on $\psi_{1, \ldots, i}^{\otimes m}$. It accepts if each of them accepts, and rejects otherwise.

Equivalently, for each $i \in [n]$, let $\cH_{1, \ldots i}$ be the Hilbert space $\cH_{1, \ldots i} = \bbC^{d_1} \ot \cdots \ot \bbC^{d_i}$,
and define $\cH_{i+1, \ldots, n}$ analogously.
Let $\{\Pi_{\leq r, 1, \ldots, i}, \iden - \Pi_{\leq r, 1, \ldots, i}\}$ be the rank tester's measurement when performed on $\cH_{1, \ldots, i}^{\otimes m}$.
Then the MPS tester performs the two-outcome projective measurement $\{\Pi_{\mathrm{MPS}}, \iden - \Pi_{\mathrm{MPS}}\}$,
where
\begin{equation*}
\Pi_{\mathrm{MPS}} = \prod_{i=1}^n (\Pi_{\leq r, 1, \ldots, i} \otimes \iden_{\cH_{i+1, \ldots, n}^{\otimes m}}),
\end{equation*}
and accepts if it observes the first outcome.
\end{definition}

Before analyzing the copy complexity of the MPS tester,
we first show that it is well-defined.
In particular, we will show that the different rank tester measurements commute with each other,
which implies that they can be simultaneously measured
and that $\{\Pi_{\mathrm{MPS}}, \iden - \Pi_{\mathrm{MPS}}\}$ is indeed a two-outcome projective measurement,
as claimed in \defref{mps-tester}.
We first prove the following lemma, which shows that two overlapping weak Schur sampling measurements commute.

\begin{lem}[Overlapping weak Schur sampling commutes]\label{lem:commuting_WSS}
Consider a bipartite system with Hilbert space $\cH_L \ot \cH_R$, where $\cH_L = \bbC^{d_L},\cH_R = \bbC^{d_R}$.
Let $\{\Pi_{\l,L}\}$ and $\{\Pi_{\mu,LR}\}$ denote the weak Schur sampling measurements
when applied to $\cH_L^{\otimes m}$ and $\cH_{LR}^{\otimes m}$, respectively.
Then these two measurements commute,
meaning that for any two partitions $\lambda$ and~$\mu$,
\begin{equation*}
(\Pi_{\lambda, L} \otimes \iden_{\cH_R^{\otimes m}}) \cdot \Pi_{\mu, LR}
= \Pi_{\mu, LR} \cdot (\Pi_{\lambda, L} \otimes \iden_{\cH_R^{\otimes m}}).
\end{equation*}
 \end{lem}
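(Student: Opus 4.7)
The plan is to use the character formula from the theorem on weak Schur sampling projectors, namely
\[
\Pi_{\lambda,L} \;=\; \dim(\lambda)\cdot \E_{\boldsymbol{\sigma}\in\symm{m}}\!\left[\chi_\lambda(\boldsymbol{\sigma})\,\srep_L(\boldsymbol{\sigma})\right],
\qquad
\Pi_{\mu,LR} \;=\; \dim(\mu)\cdot \E_{\boldsymbol{\pi}\in\symm{m}}\!\left[\chi_\mu(\boldsymbol{\pi})\,\srep_{LR}(\boldsymbol{\pi})\right],
\]
and reduce commutativity of the two projectors to commutativity of the underlying permutation actions.

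The first key observation is that under the canonical isomorphism $(\cH_L\otimes \cH_R)^{\otimes m}\cong \cH_L^{\otimes m}\otimes \cH_R^{\otimes m}$ obtained by gathering all $L$-registers and all $R$-registers, the permutation representation on $\cH_{LR}^{\otimes m}$ is the \emph{diagonal} one, $\srep_{LR}(\pi) = \srep_L(\pi)\otimes \srep_R(\pi)$, since permuting the $m$ tensor factors of $\cH_L\otimes\cH_R$ simultaneously permutes the $L$-registers and the $R$-registers by the same~$\pi$. Substituting this into the character formula shows that $\Pi_{\mu,LR}$ is a linear combination (over $\pi\in\symm{m}$) of operators of the form $\srep_L(\pi)\otimes\srep_R(\pi)$.

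The second key step is to show that $\Pi_{\lambda,L}$ lies in the center of the group algebra of $\symm{m}$ acting on $\cH_L^{\otimes m}$, i.e.\ $\srep_L(\pi)\,\Pi_{\lambda,L}=\Pi_{\lambda,L}\,\srep_L(\pi)$ for every $\pi\in\symm{m}$. This follows directly from the character formula by conjugating:
\[
\srep_L(\pi)\,\Pi_{\lambda,L}\,\srep_L(\pi)^{-1}
= \dim(\lambda)\cdot \E_{\boldsymbol{\sigma}}\!\left[\chi_\lambda(\boldsymbol{\sigma})\,\srep_L(\pi\boldsymbol{\sigma}\pi^{-1})\right]
= \dim(\lambda)\cdot \E_{\boldsymbol{\sigma}'}\!\left[\chi_\lambda(\pi^{-1}\boldsymbol{\sigma}'\pi)\,\srep_L(\boldsymbol{\sigma}')\right]
= \Pi_{\lambda,L},
\]
where in the last equality I use that $\chi_\lambda$ is a class function and that averaging over $\symm{m}$ is invariant under the change of variables $\boldsymbol{\sigma}'=\pi\boldsymbol{\sigma}\pi^{-1}$.

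Combining these two steps, $\Pi_{\lambda,L}\otimes \iden_{\cH_R^{\otimes m}}$ commutes with $\srep_L(\pi)\otimes \srep_R(\pi)$ for every $\pi\in\symm{m}$, and hence commutes with any linear combination of such operators---in particular with $\Pi_{\mu,LR}$. This yields the desired identity. There is no real obstacle here; the only care needed is to spot that $\srep_{LR}(\pi)$ acts diagonally across the $L$/$R$ bipartition, which is precisely what lets centrality on the $L$ side propagate to commutativity with the joint $LR$ projector.
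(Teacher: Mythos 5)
Your proof is correct and follows essentially the same route as the paper's: both rest on the character formula for the projectors, the class-function property of $\chi_\lambda$, and the fact that $\srep_{LR}(\pi)$ acts as $\srep_L(\pi) \ot \srep_R(\pi)$ across the $L$/$R$ bipartition. The paper packages the latter two ingredients as the twisted relation $\srep_L(\pi)\,\srep_{LR}(\sigma) = \srep_{LR}(\sigma)\,\srep_L(\sigma^{-1}\pi\sigma)$ followed by a change of variables inside the double expectation, whereas you first isolate the centrality of $\Pi_{\lambda,L}$ and then propagate it through the diagonal action; the mathematical content is identical.
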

\begin{proof}
Throughout this proof, we will omit the ``$ \iden_{\cH_R^{\otimes m}}$''
when writing $\Pi_{\lambda, L} \otimes \iden_{\cH_R^{\otimes m}}$
or 
$\srep_{L} \otimes \iden_{\cH_R^{\otimes m}}$,
for simplicity.

First, we note that $\srep_L(\pi) \cdot \srep_{LR}(\sigma) = \srep_{LR}(\sigma) \srep_L(\sigma^{-1} \pi \sigma)$.
To show this, let $\ket{\ell_1}, \ldots, \ket{\ell_m}$ be~$m$ standard basis vectors in $\cH_L$,
and let $\ket{r_1}, \ldots, \ket{r_m}$ be~$m$ standard basis vectors in $\cH_R$.
Then each $\ket{\ell_i}\otimes \ket{r_i}$ is a standard basis vector in $\cH_L \ot \cH_R$,
and so
\begin{align*}
&\srep_L(\pi) \cdot \srep_{LR}(\sigma) \cdot \ket{\ell_1 \cdots \ell_m} \ot \ket{r_1 \cdots r_m}\\
 ={}& \srep_L(\pi) \cdot \ket{\ell_{\sigma^{-1}(1)} \cdots \ell_{\sigma^{-1}(m)}} \ot \ket{r_{\sigma^{-1}(1)} \cdots r_{\sigma^{-1}(m)}}\\
 ={}& \ket{\ell_{\sigma^{-1}(\pi^{-1}(1))} \cdots \ell_{\sigma^{-1}(\pi^{-1}(m))}} \ot \ket{r_{\sigma^{-1}(1)} \cdots r_{\sigma^{-1}(m)}}\\
 ={}& \srep_{LR}(\sigma) \cdot \ket{\ell_{\sigma^{-1}(\pi^{-1}(\sigma(1)))} \cdots \ell_{\sigma^{-1}(\pi^{-1}(\sigma(m)))}} \ot \ket{r_1 \cdots r_m}\\
={}&\srep_{LR}(\sigma) \cdot \srep_L(\sigma^{-1} \pi \sigma) \cdot \ket{\ell_1 \cdots \ell_m} \ot \ket{r_1 \cdots r_m}.
\end{align*}
Extending this to all of $\cH_L \ot \cH_R$ via linearity proves the equality.
Next, we note that $\chi_{\lambda}(\sigma^{-1} \pi \sigma) = \chi_\lambda(\pi)$
because $\chi_\lambda(\cdot)$ is a class function.
Putting these together, we have
\begin{align*}
\Pi_{\l,L}\Pi_{\mu,LR}&=\dim(\l)\dim(\mu)\cdot\E_{\boldsymbol{\pi}, \boldsymbol{\sigma} \sim \symm{m}}\left[ \chi_{\l}(\boldsymbol{\pi})\chi_{\mu}(\boldsymbol{\sigma})\srep_L(\boldsymbol{\pi})\srep_{LR}(\boldsymbol{\sigma})\right]\nonumber\\
&=\dim(\l)\dim(\mu) \cdot\E_{\boldsymbol{\pi}, \boldsymbol{\sigma}\sim \symm{m}}\left[ \chi_{\l}(\boldsymbol{\sigma}^{-1} \boldsymbol{\pi} \boldsymbol{\sigma})\chi_{\mu}(\boldsymbol{\sigma})\srep_{LR}(\boldsymbol{\sigma})\srep_{L}(\boldsymbol{\sigma}^{-1} \boldsymbol{\pi} \boldsymbol{\sigma}) \right]\nonumber\\
&=\dim(\l)\dim(\mu)\cdot \E_{\boldsymbol{\pi}, \boldsymbol{\sigma}\sim \symm{m}}\left[ \chi_{\l}(\boldsymbol{\pi})\chi_{\mu}(\boldsymbol{\sigma})\srep_{LR}(\boldsymbol{\sigma})\srep_{L}(\boldsymbol{\pi}) \right]\nonumber\\
&=\Pi_{\mu,LR}\Pi_{\l,L},
\end{align*}
where the third line uses the fact that $\boldsymbol{\sigma}^{-1} \boldsymbol{\pi} \boldsymbol{\sigma}$ is distributed as a uniformly random element of $\symm{m}$,
even conditioned on the value of $\boldsymbol{\sigma}$.
This completes the proof.
\end{proof}

As an immediate corollary, we get that the MPS tester is well-defined.

\begin{prop}[The MPS tester is well-defined]
The matrices
\begin{equation*}\Pi_{\leq r, 1, \ldots, i} \otimes \iden_{\cH_{i+1, \ldots, n}^{\otimes m}}
\end{equation*}
commute for all $i \in [n]$.
As a result, the MPS tester measurement $\{\Pi_{\mathrm{MPS}}, \iden - \Pi_{\mathrm{MPS}}\}$ is a two-outcome projective measurement.
\end{prop}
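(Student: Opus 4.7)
The plan is to reduce the commutation of the $n$ projectors defining $\Pi_{\mathrm{MPS}}$ to the two-projector case already handled by \lemref{commuting_WSS}, and then appeal to the standard fact that a product of pairwise commuting orthogonal projectors is itself an orthogonal projector.

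First I would establish pairwise commutation: for any $1 \leq i < j \leq n$, the operators $\Pi_{\leq r, 1,\ldots,i} \otimes \iden_{\cH_{i+1,\ldots,n}^{\otimes m}}$ and $\Pi_{\leq r, 1,\ldots,j} \otimes \iden_{\cH_{j+1,\ldots,n}^{\otimes m}}$ commute. To see this, apply \lemref{commuting_WSS} with the bipartition $\cH_L := \cH_{1,\ldots,i}$ and $\cH_R := \cH_{i+1,\ldots,j}$, so that $\cH_L \otimes \cH_R = \cH_{1,\ldots,j}$. The lemma directly gives that $\Pi_{\lambda, 1,\ldots,i}$ and $\Pi_{\mu, 1,\ldots,j}$ commute as operators on $\cH_{1,\ldots,j}^{\otimes m}$ for any partitions $\lambda, \mu$. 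Since $\Pi_{\leq r, 1,\ldots,k} = \sum_{\mu : \ell(\mu) \leq r} \Pi_{\mu, 1,\ldots,k}$ is a linear combination of such WSS projectors, linearity extends the commutation to the truncated projectors, and finally tensoring with $\iden_{\cH_{j+1,\ldots,n}^{\otimes m}}$ on both sides preserves commutation on the full $m$-copy Hilbert space.

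With pairwise commutation in hand, I would conclude by invoking the standard algebraic fact that for a pairwise commuting family $P_1, \ldots, P_n$ of orthogonal projectors, the product $P_1 \cdots P_n$ is again an orthogonal projector. Self-adjointness follows by taking the adjoint, which reverses the order of the factors, and then restoring the original order via commutativity; idempotence follows by grouping each $P_i$ with its duplicate using commutativity and applying $P_i^2 = P_i$. Hence $\Pi_{\mathrm{MPS}}$ is a genuine projector and $\{\Pi_{\mathrm{MPS}}, \iden - \Pi_{\mathrm{MPS}}\}$ is a valid two-outcome projective measurement, as claimed in \defref{mps-tester}.

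The only subtle step is the choice of bipartition in the first stage: one must verify that \lemref{commuting_WSS}, stated for a single ``$L$ versus $LR$'' pair, does apply to the nested cuts $\{1,\ldots,i\} \subset \{1,\ldots,j\}$ by absorbing the trailing qudits $\{j+1,\ldots,n\}$ into the identity factor on the outside. Once that is observed, the remainder is routine.
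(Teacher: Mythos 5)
Your argument is correct and follows the paper's approach: the paper also derives the proposition as an immediate corollary of Lemma~\ref{lem:commuting_WSS}, and you have simply spelled out the details that the paper leaves implicit (choosing the bipartition $\cH_L = \cH_{1,\ldots,i}$, $\cH_R = \cH_{i+1,\ldots,j}$ for each pair $i < j$, summing over $\mu$ with $\ell(\mu) \leq r$, tensoring with identity on the trailing factors, and then invoking the standard fact that a product of pairwise commuting projectors is a projector).
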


Now we analyze the copy complexity of the MPS tester.
Because it runs a separate rank tester on each cut of $\ket{\psi_{1, \ldots, n}}$ simultaneously,
the outcome of one rank tester can affect the rank of the remaining cuts,
and therefore the outcomes of the remaining rank testers.
This complicates the analysis of this collective set of measurements.
Instead, we will do a pessimistic analysis and just show that the MPS tester does well on at least one cut.
This analysis uses the following proposition.

\begin{prop}[Far from MPS implies a cut is far from low-rank]\label{prop:far-from-mps}
Suppose $\ket{\psi_{1, \ldots, n}}$ is $\delta$-far from $\mps(r)$.
Then there exists an $i \in [n-1]$ such that $\psi_{1, \ldots, i}$ is $(\delta^2/2n)$-far from rank-$r$.
\end{prop}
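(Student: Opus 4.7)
The plan is to prove the contrapositive of the statement by combining the low-rank approximation bound of \lemref{low_rank_app} with an averaging (pigeonhole) argument over the $n-1$ cuts.

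First I would set up notation: for each cut $i \in [n-1]$, write the Schmidt decomposition of $\ket{\psi}$ across $\{1,\dots,i\}$ vs.\ $\{i+1,\dots,n\}$ with coefficients $\lambda^{(i)}_1 \geq \cdots \geq \lambda^{(i)}_{D_i}$, and let $\eps_i := \sum_{j=r+1}^{D_i} \lambda^{(i)}_j$ denote the tail weight. The crucial observation is that the eigenvalues of the reduced density matrix $\psi_{1,\dots,i}$ are exactly the $\lambda^{(i)}_j$, so by the rank-testing definition $\psi_{1,\dots,i}$ is $\eps_i$-far from being rank-$r$. Thus proving the proposition reduces to showing that some $\eps_i \geq \delta^2/(2n)$.

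Next I would assume for contradiction that $\eps_i < \delta^2/(2n)$ for every $i \in [n-1]$, so that $\sum_{i=1}^{n-1} \eps_i < \delta^2/2 \leq 1/2$. Then $1 - \sum_i \eps_i > 1/2 > 0$, so \lemref{low_rank_app} produces a state $\ket{\phi}\in\mps(r)$ with $|\braket{\phi}{\psi}| \geq 1 - \sum_i \eps_i > 1 - \delta^2/2$. Squaring gives
\begin{equation*}
\overlap_r(\ket{\psi}) \;\geq\; |\braket{\phi}{\psi}|^2 \;>\; (1-\delta^2/2)^2 \;=\; 1 - \delta^2 + \delta^4/4 \;>\; 1-\delta^2.
\end{equation*}
On the other hand, the hypothesis $\distance_r(\ket{\psi}) \geq \delta$ rearranges to $\overlap_r(\ket{\psi}) \leq 1 - \delta^2$, contradicting the previous inequality. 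Hence some $\eps_i \geq \delta^2/(2n)$, which is the desired conclusion.

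I do not expect any serious obstacle here: the work has essentially been done by \lemref{low_rank_app}, and the argument is a clean combination of that bound with pigeonhole and the $\distance_r$/$\overlap_r$ conversion. The only technical point worth flagging is that one must verify the bound from \lemref{low_rank_app} is nontrivial (i.e.\ $1 - \sum_i \eps_i > 0$) before squaring, which is why it is cleaner to proceed by contradiction and assume all $\eps_i$ are small. The slight slack between $1/(2(n-1))$ (what strict pigeonhole gives) and $1/(2n)$ (what the statement asks for) is swallowed by the contradiction argument, so no tightness issues arise.
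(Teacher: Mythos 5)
Your proof is correct and uses essentially the same approach as the paper: both invoke \lemref{low_rank_app} to relate the overlap to the sum of Schmidt tails across the $n-1$ cuts, both use the inequality $\sqrt{1-\delta^2}\le 1-\delta^2/2$ (you apply its squared contrapositive), and both conclude by pigeonhole that some cut has tail at least $\delta^2/(2n)$. The only cosmetic difference is that you argue by contradiction while the paper chains the inequalities directly.
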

\begin{proof}
$\distance_r(\ket{\psi_{1, \ldots, n}}) \geq \delta$ implies $\overlap_r(\ket{\psi_{1, \ldots, n}}) \leq 1 - \delta^2$.
By \lemref{low_rank_app}, there exists a state $\ket{\phi} \in \mps(r)$ such that
\begin{equation*}
|\braket{\phi}{\psi}|\geq 1-\sum_{i=1}^{n-1} \sum_{j = r+1}^{D_i}\lambda^{(i)}_{j}.
\end{equation*}
Then $|\braket{\phi}{\psi}| \leq \sqrt{1 - \delta^2} \leq 1 - \delta^2/2$. Rearranging, we have
\begin{equation*}
\delta^2/2
\leq 1 - \sqrt{1 - \delta^2}
\leq 1 - |\braket{\phi}{\psi}|
\leq \sum_{i=1}^{n-1} \sum_{j = r+1}^{D_i}\lambda^{(i)}_{j}
\leq n \cdot \max_{i \in [n-1]} \left\{\sum_{j = r+1}^{D_i}\lambda^{(i)}_{j}\right\}.
\end{equation*}
Letting $i$ be the maximizing coordinate,
this implies that $\psi_{1, \ldots, i}$ is $(\delta^2/2n)$-far from rank-$r$,
which completes the proof.
\end{proof}

There are two ways that \propref{far-from-mps} ``loses'' in going from $\ket{\psi_{1, \ldots, n}}$ being $\delta$-far to $\psi_{1, \ldots, i}$ being $(\delta^2/2n)$-far.
The first is the factor of $1/n$ which is unavoidable since we are ignoring all but one cut.
The second ``loss'' is the fact that $\delta$ is squared in the conclusion.
However, this turns out to just be a quirk in the different ways we measure distance to MPS and distance to rank-$r$.
For example, even for a bipartite state $\ket{\psi_{1, 2}}$, 
\lemref{Eckart} tells us that $\ket{\psi_{1, 2}}$ is $\delta$-far from $\mps(r)$
if and only if $\psi_1$ is $\delta^2$-far from rank-$r$.

Now we prove the main theorem of this section.

\begin{thm}\label{thm:upper_bound}
Given $m=O(nr^2/\delta^2)$ copies of a state $\ket{\psi_{1, \ldots, n}} \in \bbC^{d_1}\ot \dots \ot \bbC^{d_n}$,
the MPS tester tests whether $\ket{\psi}$ is in $\mps(r)$ with perfect completeness.
\end{thm}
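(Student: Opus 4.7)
The plan is to handle completeness and soundness separately, using \lemref{commuting_WSS} and \propref{far-from-mps} respectively. Both parts will exploit the fact that $\Pi_{\mathrm{MPS}}$, being a product of pairwise commuting orthogonal projectors, is itself the projector onto the intersection of the individual rank-tester subspaces; consequently its acceptance probability is sandwiched between zero and the acceptance probability of any single rank tester in the product.

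For completeness, I would argue as follows. If $\ket{\psi} \in \mps(r)$ then by the alternative characterization of MPS stated in the introduction, every reduced state $\psi_{1, \ldots, i}$ has rank at most $r$, so the rank tester of~\cite{Wright2015_spectrum_testing}, which has perfect completeness, accepts $\psi_{1,\ldots,i}^{\otimes m}$ with probability exactly $1$ for each $i$. This means $(\Pi_{\leq r, 1, \ldots, i} \otimes \iden_{\cH_{i+1,\ldots,n}^{\otimes m}}) \ket{\psi}^{\otimes m} = \ket{\psi}^{\otimes m}$ for each $i$. Since these projectors commute (by \lemref{commuting_WSS}), their product $\Pi_{\mathrm{MPS}}$ also fixes $\ket{\psi}^{\otimes m}$, so the MPS tester accepts with probability $1$.

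For soundness, suppose $\distance_r(\ket{\psi}) \geq \delta$. By \propref{far-from-mps} there is an index $i \in [n-1]$ such that $\psi_{1,\ldots,i}$ is $\delta'$-far from rank-$r$ with $\delta' = \delta^2/(2n)$. Because $\Pi_{\mathrm{MPS}}$ is the projector onto an intersection containing the range of $\Pi_{\leq r, 1, \ldots, i} \otimes \iden$, we have the operator inequality $\Pi_{\mathrm{MPS}} \preceq \Pi_{\leq r, 1, \ldots, i} \otimes \iden$, and therefore
\begin{equation*}
\Pr[\text{MPS tester accepts}] = \|\Pi_{\mathrm{MPS}} \ket{\psi}^{\otimes m}\|^2 \leq \|(\Pi_{\leq r, 1, \ldots, i} \otimes \iden) \ket{\psi}^{\otimes m}\|^2 = \Pr[\text{rank tester accepts } \psi_{1,\ldots,i}^{\otimes m}].
\end{equation*}
Plugging $\delta' = \delta^2/(2n)$ into the rank tester copy complexity $O(r^2/\delta')$ yields $O(nr^2/\delta^2)$ copies, which matches our budget; hence the chosen rank tester accepts with probability at most $1/3$, and so does the MPS tester.

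The only delicate step is the operator inequality $\Pi_{\mathrm{MPS}} \preceq \Pi_{\leq r, 1, \ldots, i} \otimes \iden$, and this is precisely where the commutativity from \lemref{commuting_WSS} is essential: a product of pairwise commuting orthogonal projectors is the projector onto the intersection of their ranges, which is contained in the range of every single factor. The obviously pessimistic nature of this argument---we throw away $n-1$ of the $n$ cuts once we locate a single bad one---accounts for the extra factor of $n$ in the copy complexity and is what motivates the later discussion of the bunny state, where a more refined inductive analysis should in principle avoid this loss.
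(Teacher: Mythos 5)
Your proof is correct and takes essentially the same route as the paper: completeness follows from perfect completeness of each rank tester on the cuts, and soundness follows by invoking \propref{far-from-mps} to exhibit a single $\delta^2/(2n)$-far cut and then discarding the other $n-1$. The one place you add detail is in justifying the operator inequality $\Pi_{\mathrm{MPS}} \preceq \Pi_{\leq r,1,\ldots,i}\otimes\iden$ via commutativity from \lemref{commuting_WSS}; the paper simply asserts that the MPS tester's acceptance probability is at most that of a single rank tester, leaving this implicit.
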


\begin{proof}
If $\ket{\psi_{1, \ldots, n}}$ is in $\mps(r)$,
then $\psi_{1, \ldots, i}$ is rank-$r$ for each $i \in [n]$.
As a result, the rank tester applied to each cut always accepts because the rank tester has perfect completeness,
and so the MPS tester always accepts.
On the other hand, if $\rho$ is $\delta$-far from $\mps(r)$,
then \propref{far-from-mps} implies there exists an $i \in [n-1]$
such that $\psi_{1, \ldots, i}$ is $\delta' = (\delta^2/2n)$-far from rank $r$.
The probability the MPS tester accepts $\ket{\psi_{1, \ldots, n}}$
is at most the probability the rank tester accepts $\psi_{1, \ldots, i}$,
and since we are using $O(nr^2/\delta^2) = O(r^2/\delta')$ copies,
the rank tester will accept with probability at most  $\tfrac{1}{3}$.
Thus, the MPS tester tests whether $\ket{\psi_{1, \ldots, n}}$ is in $\mps(r)$, and this completes the proof.
\end{proof}

\section{A lower bound for testing matrix product states}\label{sec:lower}
We now derive a lower bound on the sample complexity of testing whether a state is in $\mps(r)$, for $r \geq 2$.
Let $d\geq 0$ satisfy $d-1 \geq 2\cdot (r-1)$, and consider the bipartite state $\ket{\varphi}\in \bbC^d \ot \bbC^d$ defined as
\begin{equation*}
\ket{\varphi}=\sqrt{1 - \theta} \cdot \ket{1}\ket{1}+\sum_{i=2}^{d} \sqrt{\frac{\t}{d-1}}\cdot \ket{i}\ket{i},
\end{equation*}
where $0 \leq \theta\leq 1$ is a parameter that we set later.
By the Young-Eckart Theorem (\lemref{Eckart}),
\begin{equation*}
\overlap_r(\ket{\varphi}) = (1-\theta) + (r-1) \cdot \frac{\theta}{d-1}.
\end{equation*}
Let $n$ be an even integer, and define
\begin{equation*}
\ket{\Phi_n}=\ket{\varphi}^{\otimes \frac{n}{2}}.
\end{equation*}
To compute the overlap of $\ket{\Phi_n}$ with $\mps(r)$, we use the following proposition.

\begin{prop}[Overlap of tensor products]\label{prop:tensor-overlap}
Let $\ket{\varphi} \in \bbC^{d_1} \ot \cdots \ot \bbC^{d_k}$ be a $k$-partite state
with $\overlap_r(\ket{\varphi}) = \omega$.
Then for each $\ell \geq 1$, $\overlap_{r}(\ket{\varphi}^{\otimes \ell}) = \omega^{\ell}$.
\end{prop}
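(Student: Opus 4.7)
The $\geq$ direction is immediate: if $\ket{\phi^\star} \in \mps(r)$ attains $|\braket{\varphi}{\phi^\star}|^2 = \omega$, then $(\ket{\phi^\star})^{\otimes \ell}$ lies in $\mps_{k\ell}(r)$ (cuts falling between copies have Schmidt rank~$1$, while cuts inside a single copy inherit Schmidt rank~$\leq r$ from the corresponding cut of $\ket{\phi^\star}$) and attains overlap $\omega^\ell$ with $\ket{\varphi}^{\otimes \ell}$.

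For the $\leq$ direction I plan to use a matrix-product transfer-matrix argument. Given any $\ket{\phi} \in \mps_{k\ell}(r)$ attaining the supremum, first gauge transform it to right-canonical form
\begin{equation*}
\ket{\phi} = \sum_{\vec{i}} A_1^{i_1} \cdots A_{k\ell}^{i_{k\ell}} \ket{i_1 \cdots i_{k\ell}}, \qquad \sum_i A_j^i (A_j^i)^\dagger = I_{r_{j-1}} \text{ for every } j,
\end{equation*}
with trivial boundary dimensions $r_0 = r_{k\ell} = 1$. Then group the tensors by copy: for each $c \in \{1,\dots,\ell\}$ and each $\vec{i} = (i_1, \ldots, i_k) \in [d_1] \times \cdots \times [d_k]$, set $B_c(\vec{i}) = A_{(c-1)k+1}^{i_1} \cdots A_{ck}^{i_k}$ and define the transfer matrix $M_c = \sum_{\vec{i}} \overline{\varphi_{\vec{i}}}\, B_c(\vec{i}) \in \bbC^{r_{(c-1)k} \times r_{ck}}$. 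A direct expansion shows $\braket{\varphi^{\otimes \ell}}{\phi} = M_1 M_2 \cdots M_\ell$, which is a scalar thanks to the trivial boundary dimensions.

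The heart of the proof will be the estimate $\|M_c\|_{\mathrm{op}} \leq \sqrt{\omega}$ for every $c$. For unit vectors $x, y$ in the appropriate bond spaces, the scalar $x^\dagger M_c y$ equals $\braket{\varphi}{\psi^{(c)}_{x,y}}$, where $\ket{\psi^{(c)}_{x,y}} = \sum_{\vec{i}} \bigl(x^\dagger B_c(\vec{i}) y\bigr) \ket{\vec{i}}$ is a $k$-partite vector whose MPS structure has bond dimension $\leq r$; after normalization it lies in $\mps(r)$, so $|x^\dagger M_c y|^2 \leq \|\psi^{(c)}_{x,y}\|^2 \cdot \omega$ by the definition of $\omega$. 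The right-canonical condition then forces $\|\psi^{(c)}_{x,y}\|^2 \leq 1$: setting $R_{k+1} = y y^\dagger$ and $R_j = \sum_i A^i_{(c-1)k+j} R_{j+1} (A^i_{(c-1)k+j})^\dagger$, the PSD inequality $R_j \leq I$ propagates inductively from $R_{k+1} \leq I$ using right-canonicity, so $\|\psi^{(c)}_{x,y}\|^2 = x^\dagger R_1 x \leq 1$. Submultiplicativity of the operator and Euclidean norms finally yields $|\braket{\varphi^{\otimes \ell}}{\phi}| \leq \prod_c \|M_c\|_{\mathrm{op}} \leq \omega^{\ell/2}$, i.e., $|\braket{\varphi^{\otimes \ell}}{\phi}|^2 \leq \omega^\ell$.

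The main obstacle I anticipate is that a direct inductive Cauchy--Schwarz argument on $\ell$, based only on the Schmidt decomposition of $\ket{\phi}$ at the cut between the first copy and the remaining $\ell-1$ copies, delivers a bound of order $r \cdot \omega^\ell$ rather than $\omega^\ell$; extracting the tight power requires each inter-copy bond to contribute exactly a factor $\sqrt{\omega}$, which is precisely what the right-canonical transfer-matrix bookkeeping achieves. The boundary tensors $M_1$ (a $1 \times r_k$ row) and $M_\ell$ (an $r_{(\ell-1)k} \times 1$ column) are handled by the same right-canonical recursion, with the trivial boundary values $r_0 = r_{k\ell} = 1$ making the recursion collapse to $\|M_1\|_2, \|M_\ell\|_2 \leq \sqrt{\omega}$.
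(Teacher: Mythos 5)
Your proposal is correct and takes a genuinely different route from the paper's. The paper proves the $\leq$ direction by induction on $\ell$: given any $\ket{\beta}\in\mps_{(\ell+1)k}(r)$, it contracts $\bra{\varphi}^{\otimes\ell}$ against the first $\ell k$ registers to obtain a vector $\ket{\Gamma}$, factors $|\braket{\varphi^{\ell+1}}{\beta}|^2 = \norm{\ket{\Gamma}}^2\cdot |\braket{\varphi}{\widetilde{\Gamma}}|^2$, and then separately bounds $\norm{\ket{\Gamma}}^2\leq\omega^{\ell}$ (via the inductive hypothesis, after checking that the Schmidt vectors of $\ket{\beta}$ across the cut each lie in $\mps_{\ell k}(r)$) and $|\braket{\varphi}{\widetilde{\Gamma}}|^2\leq\omega$ (after checking that $\ket{\widetilde{\Gamma}}\in\mps_k(r)$). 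Both rank claims are established by noting that the relevant partial traces of $\ketbra{\beta}{\beta}$ decompose as sums of PSD terms, so each summand inherits the rank bound. Your transfer-matrix argument is a non-inductive repackaging of the same phenomenon: right-canonical form localizes the norm bookkeeping so that each copy contributes an operator-norm factor of exactly $\sqrt{\omega}$, with the inter-copy bonds playing the role of the inductive peeling. What your approach buys is a uniform per-copy bound and no explicit recursion, and it makes transparent why the naive Cauchy--Schwarz estimate at a single cut loses a factor of $r$ while the tight bound does not; what the paper's approach buys is that it avoids invoking existence of canonical forms entirely and runs purely off the Schmidt-rank characterization of $\mps(r)$ together with rank-monotonicity of PSD sums. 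Two small notes on your write-up: (i) there is no need to take $\ket{\phi}$ attaining the supremum --- the bound holds for every $\ket{\phi}\in\mps_{k\ell}(r)$, which also sidesteps worrying about attainment; (ii) the degenerate case $\ket{\psi^{(c)}_{x,y}}=0$ is handled vacuously since then $x^\dagger M_c y = 0$.
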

\begin{proof}
Let $\ket{\psi}$ be the state in $\mps_k(r)$ with $|\braket{\psi}{\varphi}|^2 = \omega$
guaranteed by the assumption. Then $\ket{\psi}^{\otimes \ell}$ is in $\mps_{\ell k}(r)$
and has $|\bra{\psi}^{\otimes \ell} \cdot \ket{\varphi}^{\otimes \ell}|^2 = |\braket{\psi}{\varphi}|^{2\ell} = \omega^{\ell}$.
This proves the lower-bound $\overlap_{\ell k}(\ket{\psi}^{\otimes \ell}) \geq \omega^{\ell}$,

As for the upper-bound,
the proof is by induction on~$\ell$,
the base case being trivial.
For the inductive step,
write $\ket{\varphi^{\ell}}$ as shorthand for $\ket{\varphi}^{\otimes \ell}$,
and suppose that the inductive hypothesis holds for $\ket{\varphi^{\ell}}$,
i.e.\ that $\overlap_r(\ket{\varphi^{\ell}}) \leq \omega^\ell$.
Then we show that 
\ba
\forall\ket{\b}\in \mps_{(\ell + 1) k}(r),\quad |\braket{\varphi^{\ell+1}}{\b}|^2\leq \omega^{\ell+1}.\nn
\ea
Since $\ket{\varphi^{\ell+1}}=\ket{\varphi^{\ell}}\ot \ket{\varphi}$, this is equivalent to proving that for all $\ket{\b}\in \mps_{(\ell+1)k}(r)$,
\ba
\braket{\varphi^{\ell+1}}{\b}\braket{\b}{\varphi^{\ell+1}}&=\bra{\varphi}\cdot(\bra{\varphi^{\ell}} \ot \iden) \cdot \ketbra{\b}{\b} \cdot (\ket{\varphi^{\ell}} \ot \iden) \cdot \ket{\varphi} \nn\\
&=\norm{\ket{\G}}^2 \cdot |\braket{\varphi}{\widetilde{\G}}|^2\leq\omega^{\ell+1}, \label{eq:g34}
\ea
where we define
\ba
\ket{\G}=(\bra{\varphi^{\ell}} \ot \iden) \cdot \ket{\b},\quad \ket{\widetilde{\G}}=\frac{\ket{\G}}{\norm{\ket{\G}}}.\label{eq:gg7}
\ea
From here, our proof of \eqref{eq:g34} breaks into two steps: step 1, showing that $\norm{\ket{\G}}^2\leq \omega^{\ell}$,
and  step 2, showing that $|\braket{\varphi}{\widetilde{\G}}|^2\leq \omega$. We begin with the former.
\newline

\noindent
\textit{Step 1:  bounding $\norm{\ket{\G}}^2$.} Let the Schmidt decomposition of the state $\ket{\b}\in \mps_{(\ell+1) k}(r)$ across the subsystems $\{1,\dots, \ell k\}$ and $\{\ell k + 1,\cdots, (\ell+1) k\}$ be $\ket{\b}=\sum_{i=1}^r \sqrt{\mu_i} \ket{e_i}\ket{f_i}$. Then $\ket{\G}=\sum_{i=1}^r \sqrt{\mu_i} \braket{\varphi^{\ell}}{e_i}\ket{f_i}$, and
\begin{equation*}
\norm{\ket{\G}}^2=\sum_{i=1}^r \mu_i |\braket{\varphi^{\ell}}{e_i}|^2.
\end{equation*}
Suppose it holds that $\ket{e_i}$ is in $\mps_{\ell k}(r)$, for each $i$. Then the inductive hypothesis implies that
\ba
\norm{\ket{\G}}^2\leq \sum_{i=1}^r \mu_i \omega^{\ell}=\omega^{\ell}, \nn
\ea
which is the desired bound on $\norm{\ket{\G}}^2$. It remains to show that $\ket{e_i} \in\mps_{\ell k}(r)$.
Consider partitioning the $(\ell+1)k$ subsystems into $\{1,\dots,q\}$ and $\{q+1,\dots,(\ell+1)k\}$ for any integer $1\leq q \leq \ell k$. We have
\begin{align}
\Tr_{q+1,\dots,(\ell+1)k} \ketbra{\b}{\b}
&=\Tr_{q+1,\dots,(\ell+1)k} \left[\sum_{i,j=1}^r \sqrt{\m_i \m_j}\ketbra{e_i}{e_j}\ot \ketbra{f_i}{f_j}\right]\nn\\
&=\sum_{i,j=1}^r \sqrt{\m_i \m_j} \cdot\Tr_{q+1,\dots,(k+1)\ell} \left[\ketbra{e_i}{e_j}\ot \ketbra{f_i}{f_j}\right]\nn\\
&=\sum_{i=1}^r \m_i \cdot \Tr_{q+1,\dots,k\ell} \ketbra{e_i}{e_i}.\label{eq:gg9}
\end{align}
Equation \eqref{eq:gg9} is a sum of PSD operators and rank does not decrease by adding such operators.
Since $\ket{\b}\in \mps_{(\ell+1)k}(r)$, we have $\Tr_{q+1,\dots,(\ell+1)k}\ketbra{\b}{\b}\leq r$.
For this to happen, the rank of each of the terms in Equation \eqref{eq:gg9} must also be $\leq r$.
This implies $\ket{e_i}\in \mps_{\ell k} (r)$, as we claimed. 
\newline

\noindent
\textit{Step 2: bounding $|\braket{\varphi}{\widetilde{\G}}|^2$.}
By the base case of the induction, $\overlap_r(\ket{\varphi})\leq \omega$.
Thus, to complete step 2, it is sufficient to show that $\ket{\widetilde{\G}}$ is in $\mps_k(r)$.
Let $D = (d_1 \cdots d_k)^\ell$,
and let $\ket{A_1}, \ket{A_2},\dots, \ket{A_{D}}$ be an orthonormal basis for the first $\ell k$ qudits such that $\ket{A_1}=\ket{\varphi^{\ell}}$.
In addition, let $i \in \{1, \ldots, k\}$.
By tracing out the first $\ell k + i$ qudits in the state $\ket{\b}\in \mps_{(\ell + 1) k}(r)$ we get
\ba
\Tr_{1,\dots,\ell k + i} \ketbra{\b}{\b}
= \Tr_{\ell k+1, \dots, \ell k + i} \left[\sum_{i\in[D]} (\bra{A_i} \ot \iden) \cdot \ketbra{\b}{\b} \cdot (\ket{A_i} \ot \iden)\right]. \label{eq:g36}
\ea
By the definition of $\ket{\G}$ in \eqref{eq:gg7} and our choice of the state $\ket{A_1}$, the $i = 1$ part of this sum is
\begin{equation*}
\Tr_{\ell k+1, \dots, \ell k + i} \left[(\bra{\varphi^{\ell}} \ot \iden) \cdot \ketbra{\b}{\b} \cdot (\ket{\varphi^{\ell}} \ot \iden)\right]
= \Tr_{1, \dots, i}\ketbra{\G}{\G}.
\end{equation*}
Since $\ket{\b}$ is in $\mps_{(\ell+1) k}(r)$,  $\Tr_{1,\dots,\ell k+i}\ketbra{\b}{\b}$ has rank at most~$r$.
But Equation \eqref{eq:g36} is a sum of PSD operators, so in order to have $\Tr_{\ell k + i}\ketbra{\b}{\b}$ be rank $\leq r$,
the $i = 1$ part of the sum in Equation \eqref{eq:g36} must also be rank~$\leq r$.
This implies that $\Tr_{1, \ldots, i}\ketbra{\G}{\G}$, and therefore also $\Tr_{1, \ldots, i}\ketbra{\widetilde{\G}}{\widetilde{\G}}$, is rank $r$.
As a result, $\ket{\widetilde{\G}}\in \mps_k (r)$, which concludes the proof.
 \end{proof}
 
Applying \propref{tensor-overlap} to $\ket{\Phi_n}$,
we can compute its overlap as
\begin{equation*}
\overlap_r(\ket{\Phi_n})
= \left((1-\theta) + (r-1) \cdot \frac{\theta}{d-1}\right)^{n/2}
\leq \left(1-\frac{\theta}{2}\right)^{n/2},
\end{equation*}
where the first inequality uses $d-1 \geq 2\cdot (r-1)$.
Now if we pick $\theta$ to be
\begin{equation}\label{eq:pick-theta}
\theta =  \frac{8\delta^2}{n},
\end{equation}
we get
\begin{equation*}
\left(1-\frac{4\delta^2}{n}\right)^{n/2}
\leq 1 -\delta^2,
\end{equation*}
for $\delta\leq \tfrac{1}{\sqrt{2}}$, where we have used the inequality $(1-x)^n \leq 1 - \tfrac{1}{2} x n$ for $x \leq \tfrac{1}{n}$.
As a result, the distance of $\ket{\Phi_n}$ to $\mps(r)$ is
\begin{equation*}
\distance_r(\ket{\Phi_n})
= \sqrt{1 - \overlap_r(\ket{\Phi_n})}
\geq \delta.
\end{equation*}
Therefore, $\ket{\Phi_n}$ is far from $\mps(r)$,
and any $\mps(r)$ testing algorithm should reject it with probability at least $\tfrac{2}{3}$.
(Note that because $0 \leq \theta \leq 1$,
we must have $8 \delta^2/n \leq 1$, which is satisfied if $n \geq 4$.)

Our hard family of states which are far from $\mps(r)$
will consist of $\ket{\Phi_n}$ and any state which can be computed from $\ket{\Phi_n}$ by a local unitary.
To make this formal, consider the ensemble of pure states in which a random element is sampled as follows:
first, sample $\boldsymbol{U}_1, \ldots, \boldsymbol{U}_n, \boldsymbol{V}_1, \ldots, \boldsymbol{V}_n \sim \GL{d}$,
i.e.\ $2n$ Haar random $d \times d$ unitary matrices,
and output
\begin{equation*}
(\boldsymbol{U}_1\ot \boldsymbol{V}_1) \ot \cdots \ot (\boldsymbol{U}_n\ot \boldsymbol{V}_n) \cdot \ket{\Phi_n}.
\end{equation*}
Local unitaries do not affect the distance to $\mps(r)$, and so each state in this ensemble is distance~$\delta$ from $\mps(r)$.
Thus, if a tester is given~$m$ copies of any of these states, it should reject with probability at least~$\tfrac{2}{3}$.
As a result, it should also reject with probability at least~$\tfrac{2}{3}$ if given the density matrix
\begin{equation*}
\rho_{\mathrm{far}}
= \E \left((\boldsymbol{U}_1\ot \boldsymbol{V}_1) \ot \cdots \ot (\boldsymbol{U}_n\ot \boldsymbol{V}_n) \cdot \ket{\Phi_n}\bra{\Phi_n}
			\cdot  (\boldsymbol{U}_1^\dagger \ot \boldsymbol{V}_1^\dagger) \ot \cdots \ot (\boldsymbol{U}_n^\dagger \ot \boldsymbol{V}_n^\dagger)\right)^{\otimes m}
\end{equation*}
corresponding to~$m$ copies of a random state drawn from this ensemble.
We will show that this is difficult for an $\mps(r)$ tester unless $m$ is sufficiently large.
To do this, we will show that there exists another density matrix $\rho_{\mathrm{MPS}}$
corresponding to a mixture over states in $\mps(r)$ such that the trace distance
between $\rho_{\mathrm{MPS}}$ and $\rho_{\mathrm{far}}$ is small
unless $m$ is sufficiently large.
To define $\rho_{\mathrm{MPS}}$,
let us first define the state
\begin{equation*}
    \ket{\g}=\sqrt{1-\theta} \cdot \ket{1}\ket{1}+\sum_{i=2}^{r} \sqrt{\frac{\t}{r-1}} \cdot \ket{i}\ket{i},
\end{equation*}
and the state $\ket{\Gamma_n} = \ket{\g}^{\otimes n/2}$.
The state $\ket{\gamma}$ is an element of $\mps(r)$,
and therefore so is $\ket{\Gamma_n}$.
Then we define
\begin{equation*}
\rho_{\mathrm{MPS}}
= \E \left((\boldsymbol{U}_1\ot \boldsymbol{V}_1) \ot \cdots \ot (\boldsymbol{U}_n\ot \boldsymbol{V}_n) \cdot \ket{\Gamma_n}\bra{\Gamma_n}
			\cdot  (\boldsymbol{U}_1^\dagger \ot \boldsymbol{V}_1^\dagger) \ot \cdots \ot (\boldsymbol{U}_n^\dagger \ot \boldsymbol{V}_n^\dagger)\right)^{\otimes m}.
\end{equation*}
Each state in this ensemble is in $\mps(r)$,
and so if a tester is given this density matrix, it should accept with probability at least~$\tfrac{2}{3}$.
Our main result is as follows.

\begin{thm}(Lower bound on copy complexity of MPS testing)\label{thm:mps lower bound probability of success}
Suppose there is an algorithm that accepts $\rho_{\mathrm{MPS}}$ with probability at least $\tfrac{2}{3}$
and accepts $\rho_{\mathrm{far}}$ with probability at most $\tfrac{1}{3}$.
Then $m = \Omega(\sqrt{n}/\delta^2)$.

As a result, $\Omega(\sqrt{n}/\delta^2)$ copies are necessary to test whether a state is in $\mps(r)$ for $\delta\leq\tfrac{1}{\sqrt{2}}$.
\end{thm}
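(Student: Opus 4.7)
The plan is to bound the trace distance between $\rho_{\mathrm{MPS}}$ and $\rho_{\mathrm{far}}$ and show it is smaller than any constant whenever $m = o(\sqrt{n}/\delta^2)$. By the Helstrom bound, a successful tester forces $\|\rho_{\mathrm{MPS}} - \rho_{\mathrm{far}}\|_1 \geq 2/3$, so via Fuchs--van de Graaf it suffices to lower bound the fidelity $F(\rho_{\mathrm{MPS}}, \rho_{\mathrm{far}})$. The key structural observation is that because the random unitaries act independently on each of the $n/2$ bipartite qudit pairs in $\ket{\Phi_n} = \ket{\varphi}^{\otimes n/2}$, both density matrices factorize as $\rho_{\mathrm{far}} = \tau_\varphi^{\otimes n/2}$ and $\rho_{\mathrm{MPS}} = \tau_\gamma^{\otimes n/2}$, where $\tau_\varphi$ is the single-pair average $\E_{\boldsymbol{U}, \boldsymbol{V}} [((\boldsymbol{U} \otimes \boldsymbol{V}) \ket{\varphi}\bra{\varphi} (\boldsymbol{U}^\dagger \otimes \boldsymbol{V}^\dagger))^{\otimes m}]$ and $\tau_\gamma$ is defined analogously. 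Multiplicativity of fidelity under tensor products then yields $F(\rho_{\mathrm{MPS}}, \rho_{\mathrm{far}}) = F(\tau_\gamma, \tau_\varphi)^{n/2}$, so it suffices to prove $F(\tau_\gamma, \tau_\varphi) \geq 1 - O(1/n)$ in the regime $m = O(\sqrt{n}/\delta^2)$.

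Next, I would simplify $\tau_\varphi$ and $\tau_\gamma$ via Schur--Weyl duality. Each state commutes with $\boldsymbol{U}^{\otimes m} \otimes \boldsymbol{V}^{\otimes m}$ for every pair $(\boldsymbol{U}, \boldsymbol{V})$ and is symmetric under permutation of the $m$ copies, so it is block-diagonal with respect to the decomposition $\mathrm{Sym}^m(\bbC^d \otimes \bbC^d) \cong \bigoplus_{\lambda \vdash m} V_\lambda^L \otimes V_\lambda^R$ of the symmetric subspace into pairs of $\GL{d}$-irreps. Within each $\lambda$-block, Schur's lemma forces each state to be a constant multiple of the block identity, and the constant is exactly the weak Schur sampling probability for $m$ copies of the left marginal: $p_\varphi(\lambda) = \dim(\sirrep_\lambda)\, s_\lambda(\mathrm{spec}(\Tr_R \ket{\varphi}\bra{\varphi}))$, and similarly for $\gamma$. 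Since $\tau_\varphi$ and $\tau_\gamma$ share the same block/support structure with only the weights differing, their fidelity collapses to the Bhattacharyya coefficient
\[
F(\tau_\varphi, \tau_\gamma) = \sum_{\lambda \vdash m} \sqrt{p_\varphi(\lambda)\,p_\gamma(\lambda)}.
\]

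The heart of the proof is then the quantitative estimate $1 - F(\tau_\varphi, \tau_\gamma) = O(m^2 \theta^2)$. The crucial input is that the left-marginal spectra of $\ket{\varphi}$ and $\ket{\gamma}$ share both the top eigenvalue $1-\theta$ and the total tail mass $\theta$, differing only in how the tail is spread across $d-1$ vs.\ $r-1$ modes. Expanding the Schur polynomials $s_\lambda$ in powers of $\theta$ (e.g.\ via Jacobi--Trudi, so that each $p(\lambda)$ becomes an alternating sum of complete homogeneous symmetric functions evaluated at the spectrum) shows that $p_\varphi(\lambda)$ and $p_\gamma(\lambda)$ agree to zeroth and first order in $\theta$ for every $\lambda$, with discrepancies first appearing at $O(\theta^2)$ from the differing multiplicities of the second-elementary-symmetric contributions of the tail. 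Summing the squared Hellinger increments $(\sqrt{p_\varphi(\lambda)} - \sqrt{p_\gamma(\lambda)})^2$ over all $\lambda$, and using that both distributions concentrate on short Young diagrams when $m\theta$ is small, yields the $O(m^2\theta^2)$ bound.

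Finally, substituting $\theta = 8\delta^2/n$ gives $F(\rho_{\mathrm{MPS}}, \rho_{\mathrm{far}}) \geq (1 - Cm^2\theta^2)^{n/2} \geq 1 - C'm^2\delta^4/n$, which stays above $\sqrt{1 - 1/9}$ so long as $m^2\delta^4/n$ is sufficiently small, contradicting the distinguishing assumption unless $m = \Omega(\sqrt{n}/\delta^2)$. The main obstacle I expect is the perturbative Schur-polynomial calculation in the third step: simultaneously verifying the cancellation of the $O(\theta)$ terms of $p_\varphi(\lambda)$ and $p_\gamma(\lambda)$ for all partitions $\lambda$, and then summing the $O(\theta^2)$ residuals tightly enough to not lose an extra factor of $n$ or $m$ that would spoil the target $\sqrt{n}$ scaling.
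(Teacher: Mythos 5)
Your first two steps match the paper's skeleton: factorize $\rho_{\mathrm{far}} = \sigma_{\mathrm{far}}^{\otimes n/2}$, $\rho_{\mathrm{MPS}} = \sigma_{\mathrm{MPS}}^{\otimes n/2}$, use multiplicativity of fidelity and Fuchs--van de Graaf, and then exploit $U(d)\times U(d)$ and permutation symmetry to conclude that $\sigma_{\mathrm{far}}$ and $\sigma_{\mathrm{MPS}}$ are block-diagonal with the same eigenspaces and only the classical distribution over partitions $\lambda$ differing. The paper arrives at essentially the same structural conclusion, although by a different route: it proves a ``reduction to one subsystem'' lemma (Theorem~\ref{thm:reduction to one subsystem}, built from Schur's lemma plus the $\mathrm{EPR}_\mu$ structure) showing the distinguishing problem for $\sigma_{\mathrm{far}}$ vs.\ $\sigma_{\mathrm{MPS}}$ reduces to distinguishing $\E_{\boldsymbol{U}}(\boldsymbol{U}\tau_{\mathrm{far}}\boldsymbol{U}^{\dagger})^{\otimes m}$ from $\E_{\boldsymbol{U}}(\boldsymbol{U}\tau_{\mathrm{MPS}}\boldsymbol{U}^{\dagger})^{\otimes m}$. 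Your direct decomposition $\mathrm{Sym}^m(\bbC^d\otimes\bbC^d)\cong\bigoplus_\lambda V_\lambda^L\otimes V_\lambda^R$ is a legitimate alternative formulation of the same fact. Up to here, the proposal is sound.

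The genuine gap is your third step, which you yourself flag as the main obstacle. You want to bound $1-F(\tau_\gamma,\tau_\varphi) = H^2(p_\gamma,p_\varphi)$ by a perturbative expansion of the Schur polynomials $s_\lambda$ evaluated at the two spectra, arguing that the $O(1)$ and $O(\theta)$ terms cancel for every $\lambda$ and that the $O(\theta^2)$ residuals sum to $O(m^2\theta^2)$. This is not carried out, and carrying it out would be delicate: one has to control an infinite family of Schur polynomial expansions uniformly in $\lambda$, handle the combinatorics of $\dim(\sirrep_\lambda)$ which grows rapidly, and justify the ``concentration on short diagrams'' heuristic quantitatively, all without losing factors of $m$ or $n$. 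The paper sidesteps this entirely with a short combinatorial coupling argument. Viewing $\E_{\boldsymbol{U}}(\boldsymbol{U}\tau\boldsymbol{U}^{\dagger})^{\otimes m}$ as: pick a Haar-random basis $\ket{\boldsymbol{a}_1},\dots,\ket{\boldsymbol{a}_d}$ and draw $m$ i.i.d.\ samples, each equal to $\ket{\boldsymbol{a}_1}$ with probability $1-\theta$ and a uniform tail vector otherwise. The key observation is that when at most one of the $m$ draws falls in the tail, the two ensembles (tail pool of size $d-1$ vs.\ $r-1$) are \emph{identical}, because a single sampled $\ket{\boldsymbol{a}_j}$ is just a Haar-random unit vector orthogonal to $\ket{\boldsymbol{a}_1}$ regardless of the pool size. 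Hence the trace distance (equivalently, $\mathrm{TV}(p_\varphi,p_\gamma)$, which upper bounds your $H^2$) is at most the probability of $\geq 2$ tail draws, $\leq m(m-1)\theta^2$, and the rest follows as in your final step. You should replace your perturbative Schur-polynomial sketch with this coupling observation; without it, the argument is incomplete precisely at the step that delivers the $m^2\theta^2$ scaling.
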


\begin{proof}
Our goal is to bound $\mathrm{D}_{\mathrm{tr}}(\rho_{\mathrm{far}}, \rho_{\mathrm{MPS}})$.
To do so, it is convenient to also work with the fidelity of these states.
Recall that the fidelity $F(\a,\b)$ of two mixed states $\a,\b$ is defined by $F(\a,\b)=\norm{\sqrt{\a}\sqrt{\b}}_1$.
One useful property of this measure is that it is multiplicative with respect to tensor products, i.e.\ $F(\a_1\ot \a_2,\b_1\ot \b_2)=F(\a_1, \b_1)F(\a_2,\b_2)$.
Another is the bound
 \begin{align}
1-F(\a,\b)\leq \mathrm{D}_{\mathrm{tr}}(\a,\b)\leq \sqrt{1-F(\a,\b)^2}\label{eq:gg11}
 \end{align}
between the trace distance $\mathrm{D}_{\mathrm{tr}}(\a,\b)$ and the fidelity $F(\a,\b)$, which we can use to switch back and forth between these two measures.

We begin by applying the upper-bound in~\eqref{eq:gg11} to switch to fidelity:
\begin{equation*}
\mathrm{D}_{\mathrm{tr}}(\rho_{\mathrm{far}}, \rho_{\mathrm{MPS}})\leq \sqrt{1-F(\rho_{\mathrm{far}}, \rho_{\mathrm{MPS}})^2}.
\end{equation*}
Hence, to upper-bound the trace distance between two states, it is sufficient to lower-bound their fidelity.
We note that since $\ket{\Phi_n} = \ket{\varphi}^{\otimes n/2}$, we can rewrite the state $\rho_{\mathrm{far}}$ as
\begin{equation*}
\rho_{\mathrm{far}} = \left(\E_{\boldsymbol{U}, \boldsymbol{V} \sim \GL{d}}
	(\boldsymbol{U} \otimes \boldsymbol{V} \cdot \ket{\varphi}\bra{\varphi}
		\cdot \boldsymbol{U}^\dagger \otimes \boldsymbol{V}^\dagger)^{\otimes m}  \right)^{\otimes n/2}
		=:\sigma_{\mathrm{far}}^{\otimes n/2}.
\end{equation*}
By similar reasoning, we can rewrite $\rho_{\mathrm{MPS}}$ as
\begin{equation*}
\rho_{\mathrm{MPS}} = \left(\E_{\boldsymbol{U}, \boldsymbol{V} \sim \GL{d}}
	(\boldsymbol{U} \otimes \boldsymbol{V} \cdot \ket{\gamma}\bra{\gamma}
		\cdot \boldsymbol{U}^\dagger \otimes \boldsymbol{V}^\dagger)^{\otimes m}  \right)^{\otimes n/2}
		=:\sigma_{\mathrm{MPS}}^{\otimes n/2}
\end{equation*}
Hence, by the multiplicativity of fidelity, we have
\begin{equation*}
F(\rho_{\mathrm{far}}, \rho_{\mathrm{MPS}})
= F(\sigma_{\mathrm{far}}, \sigma_{\mathrm{MPS}})^{n/2}.
\end{equation*}
Now, by applying~\eqref{eq:gg11} again to switch back to trace distance, we have
\begin{equation*}
F(\sigma_{\mathrm{far}}, \sigma_{\mathrm{MPS}})
\geq 1 - \mathrm{D}_{\mathrm{tr}}(\sigma_{\mathrm{far}}, \sigma_{\mathrm{MPS}}).
\end{equation*}
As a result, we would like to upper-bound the trace distance of $\sigma_{\mathrm{far}}$ and $\sigma_{\mathrm{MPS}}$.

Consider an algorithm trying to distinguish these two states.
For $i \in \{1, \ldots, d\}$, let $\ket{\boldsymbol{a}_i} = \boldsymbol{U}\ket{i}$
and let $\ket{\boldsymbol{b}_i} = \boldsymbol{V}\ket{i}$.
Then when the algorithm is given $\sigma_{\mathrm{far}}$,
we can equivalently view it as the algorithm being given $m$ copies of the random sample
\begin{equation*}
\sqrt{1-\theta} \cdot \ket{\boldsymbol{a}_1}\ket{\boldsymbol{b}_1}+\sum_{i=2}^{d} \sqrt{\frac{\t}{d-1}} \cdot \ket{\boldsymbol{a}_i}\ket{\boldsymbol{b}_i},
\end{equation*}
and when it is given $\sigma_{\mathrm{MPS}}$,
we can equivalently view it as being given $m$ copies of the random sample
\begin{equation*}
\sqrt{1-\theta} \cdot \ket{\boldsymbol{a}_1}\ket{\boldsymbol{b}_1}+\sum_{i=2}^{r} \sqrt{\frac{\t}{r-1}} \cdot \ket{\boldsymbol{a}_i}\ket{\boldsymbol{b}_i}.
\end{equation*}
The only difference between these two mixtures
is whether the state has Schmidt coefficients $1-\theta, \theta/(d-1), \ldots, \theta/(d-1)$
or Schmidt coefficients $1-\theta, \theta/(r-1), \ldots, \theta/(r-1)$.
As we show in~\thmref{reduction to one subsystem}, this means that the algorithm learns everything it needs to learn about which case it is in
simply by measuring the $m$ $\ket{\boldsymbol{a}_i}$ registers,
and it can ignore the $m$ $\ket{\boldsymbol{b}_i}$ registers.
In other words, if we set
\begin{equation*}
\tau_{\mathrm{far}} = \Tr_2 \ket{\varphi}\bra{\varphi} = (1-\theta) \cdot \ket{1}\bra{1} + \theta \cdot \sum_{i=2}^d \frac{1}{d-1}\cdot \ket{i}\bra{i}
\end{equation*}
and
\begin{equation*}
\tau_{\mathrm{MPS}} = \Tr_2 \ket{\gamma}\bra{\gamma} = (1-\theta) \cdot \ket{1}\bra{1} + \theta \cdot \sum_{i=2}^r \frac{1}{r-1}\cdot\ket{i}\bra{i},
\end{equation*}
then
\begin{equation*}
\mathrm{D}_{\mathrm{tr}}(\sigma_{\mathrm{far}}, \sigma_{\mathrm{MPS}})
= \mathrm{D}_{\mathrm{tr}}\left(\E_{\boldsymbol{U} \sim \GL{d}}(\boldsymbol{U}\tau_{\mathrm{far}} \boldsymbol{U^{\dagger}})^{\ot m},
		\E_{\boldsymbol{U} \sim \GL{d}}(\boldsymbol{U}\tau_{\mathrm{MPS}} \boldsymbol{U^{\dagger}})^{\ot m} \right).
\end{equation*}

The density matrix $\E_{\boldsymbol{U}\sim \GL{d}}(\boldsymbol{U}\tau_{\mathrm{far}}\boldsymbol{U^{\dagger}})^{\ot m}$
can be described by the following mixture.
Let $\ket{\boldsymbol{a}_1},\dots,\ket{\boldsymbol{a}_d}$ be a random orthonormal basis for $\bbC^d$ as above. Draw $m$ samples as follows.
\begin{itemize}
	\item[(i)] With probability $1-\t$, output $\ket{\boldsymbol{a}_1}$.
	\item[(ii)] With probability $\t$, output one of the states $\ket{\boldsymbol{a}_2},\dots,\ket{\boldsymbol{a}_d}$ uniformly at random. 
\end{itemize}
The state $\E_{\boldsymbol{U}\sim \GL{d}}(\boldsymbol{U}\tau_{\mathrm{MPS}}\boldsymbol{U^{\dagger}})^{\ot m}$ can be described by a similar mixture
except that now in step (ii), with probability $\t$, the output is one of the states $\ket{\boldsymbol{a}_2},\dots,\ket{\boldsymbol{a}_r}$ chosen uniformly at random. Consider the event that either all the $m$ draws are from step (i) or $m-1$ draws are from step (i) and the remaining sample is from step (ii). The probability of this event occurring is simply $(1-\t)^m+m\ \t(1-\t)^{m-1}$.  In both of these cases, it is not possible to distinguish the two states. In all the other cases, where more than one sample is drawn according to step (ii), we loosely upper bound the distance between the states by $1$. This gives us the following overall upper bound on the distance between the random ensembles:
\begin{align}
	\mathrm{D}_{\mathrm{tr}}\left(\E_{\boldsymbol{U} \sim \GL{d}}(\boldsymbol{U}\tau_{\mathrm{far}} \boldsymbol{U^{\dagger}})^{\ot m},
			\E_{\boldsymbol{U} \sim \GL{d}}(\boldsymbol{U}\tau_{\mathrm{MPS}} \boldsymbol{U^{\dagger}})^{\ot m}\right)&\leq 1-(1-\t)^m-m\t(1-\t)^{m-1}\nn\\
	&\leq 1-(1-m \t)-m\t\left(1-(m-1)\t\right)\nn\\
	&=m(m-1)\t^2\nn.
\end{align}
As a result, this implies that $\rho_{\mathrm{far}}$ and $\rho_{\mathrm{MPS}}$ have distance
\begin{align*}
\mathrm{D}_{\mathrm{tr}}(\rho_{\mathrm{far}}, \rho_{\mathrm{MPS}})
& \leq \left(1 - \left(1 -m(m-1)\t^2\right)^n\right)^{1/2}\\
& \leq \left(1 - \left(1 - n \cdot m(m-1)\t^2\right)\right)^{1/2}\\
& \leq \sqrt{n} m \theta.
\end{align*}
By our choice of $\theta = 8\delta^2/n$ in Equation~\eqref{eq:pick-theta},
this is at most $4 m \delta^2/\sqrt{n}$.
For an algorithm to accept $\rho_{\mathrm{MPS}}$ with probability at least $\tfrac{2}{3}$
and $\rho_{\mathrm{far}}$ with probability at most $\tfrac{1}{3}$,
this trace distance must be at least $\tfrac{1}{3}$.
This implies that $m$ must be at least $\tfrac{1}{24}\sqrt{n}/\delta^2$,
which completes the proof.
\end{proof}
Here we prove the claim in the proof of \thmref{mps lower bound probability of success} that it suffices for any algorithm that tries to distinguish between the states $\s_{\mathrm{MPS}}$ and $\s_{\mathrm{far}}$ to only measure their $m$ $\ket{\boldsymbol{a}_i}$ registers.
The proof is standard and based on repeated applications of Schur's Lemma.
\begin{thm}\label{thm:reduction to one subsystem}
Let $\ket{\varphi}_{AB}$ and $\ket{\g}_{AB}$ be two bipartite states on subsystems $A$ and $B$. Any algorithm for distinguishing between the two mixed states
\begin{align}
	\E_{\boldsymbol{U}_A\sim \GL{d},\boldsymbol{V}_B\sim \GL{d}}\left(\boldsymbol{U}_A\otimes \boldsymbol{V}_B\cdot \ketbra{\varphi}{\varphi} \cdot \boldsymbol{U}_A^{\dagger}\otimes\boldsymbol{V}_B^{\dagger}\right)^{\ot m}
\end{align}
and 
\begin{align}
	\E_{\boldsymbol{U}_A\sim \GL{d},\boldsymbol{V}_B\sim \GL{d}}\left(\boldsymbol{U}_A\otimes \boldsymbol{V}_B\cdot \ketbra{\g}{\g} \cdot \boldsymbol{U}_A^{\dagger}\otimes\boldsymbol{V}_B^{\dagger}\right)^{\ot m}
\end{align}
can without loss of generality leave out the $m$ $B$ registers and only measure the $m$ $A$ registers. 
\end{thm}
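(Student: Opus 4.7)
The plan is to exhibit a quantum channel $\Phi$ from $A^{\otimes m}$ to $(A\otimes B)^{\otimes m}$ that sends the $A$-marginal of each of the two mixtures to the corresponding full mixture. Once such a $\Phi$ exists, the data processing inequality finishes the argument: since tracing out $B$ is itself a channel, distinguishing the two full mixtures is equivalent to distinguishing their $A$-marginals, so measuring only the $A$ registers is without loss of generality.

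The construction of $\Phi$ proceeds via Schur--Weyl duality (\thmref{Schur Weyl}) applied separately to $A^{\otimes m}$ and $B^{\otimes m}$. The crucial observation is that $\ket{\varphi}^{\otimes m}$ is invariant under the diagonal action $\srep_A(\pi) \otimes \srep_B(\pi)$ of $\symm{m}$, because permuting the $m$ copies fixes the tensor power. Since the invariant subspace of $\sirrep_{\mu_A} \otimes \sirrep_{\mu_B}$ under the diagonal $\symm{m}$ action is one-dimensional when $\mu_A = \mu_B$ and zero otherwise (by Schur's lemma, using that symmetric-group irreps are self-dual), the invariance forces a decomposition
\[
\ket{\varphi}^{\otimes m} = \sum_\mu \ket{\mathrm{inv}_\mu} \otimes \ket{\psi_\mu^\varphi},
\]
where $\ket{\mathrm{inv}_\mu} \in \sirrep_\mu \otimes \sirrep_\mu$ is the unique (up to scalar) invariant vector and $\ket{\psi_\mu^\varphi} \in \GLirrep_\mu^d \otimes \GLirrep_\mu^d$ carries the hypothesis-dependent data, with an analogous decomposition for $\gamma$. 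The actions of $\boldsymbol{U}_A^{\otimes m}$ and $\boldsymbol{V}_B^{\otimes m}$ touch only the two $\GLirrep_\mu^d$ factors and commute between the two sides, so independently Haar-averaging both using Schur's lemma depolarizes each factor and yields
\[
\sigma_\varphi^{(m)} = \sum_\mu p_\mu^\varphi \cdot \xi_\mu^{AB}, \qquad p_\mu^\varphi = \|\ket{\psi_\mu^\varphi}\|^2,
\]
where $\xi_\mu^{AB}$ is the hypothesis-independent normalized state $\ket{\mathrm{inv}_\mu}\bra{\mathrm{inv}_\mu}$ tensored with the maximally mixed states on the two $\GLirrep_\mu^d$ factors. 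Tracing out $B$ collapses $\xi_\mu^{AB}$ to $\Pi_\mu^A/\dim(\Pi_\mu^A)$, so the $A$-marginal is exactly the weak Schur sampling mixture with probabilities $(p_\mu^\varphi)_\mu$, and likewise for $\gamma$.

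The channel $\Phi$ is now forced: on input a state on $A^{\otimes m}$, perform weak Schur sampling to obtain a random label $\boldsymbol{\mu}$, discard the post-measurement state, and prepare the fixed bipartite state $\xi_{\boldsymbol{\mu}}^{AB}$. This is manifestly a valid quantum channel, and by the two displays above it maps each of the two $A$-marginals to the corresponding full joint state, completing the argument. The main technical care point is verifying that the two independent Haar averages factor into independent depolarizations of the two $\GLirrep_\mu^d$ factors inside each $\mu$-block; this follows from Schur's lemma applied separately to each commuting factor of the $\GL{d}\times \GL{d}$ action, together with the fact that $\boldsymbol{U}_A$ and $\boldsymbol{V}_B$ act trivially on the $\sirrep_\mu$ factors and therefore leave $\ket{\mathrm{inv}_\mu}$ untouched.
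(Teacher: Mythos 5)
Your proposal is correct and rests on the same Schur--Weyl computation as the paper's proof: both arrive at the block-diagonal form $\sum_\mu c_\mu \,\ketbra{\mu}{\mu}_A \otimes \ketbra{\mu}{\mu}_B \otimes \iden \otimes \iden \otimes \ketbra{\mathrm{EPR}_\mu}{\mathrm{EPR}_\mu}$, where the only hypothesis-dependence is in the weights $c_\mu$, readable off the $A$ register alone. The paper gets there by Haar- and permutation-averaging the mixed state directly (its \lemref{rotate} and \corref{apply-epr}), and concludes informally that after measuring $\mu$ on $A$ the remaining state is hypothesis-independent; you instead derive the same decomposition from the pure state's invariance under the diagonal permutation action and then package the conclusion as an explicit recovery channel plus data processing, which is a cleaner way to make the ``without loss of generality'' rigorous but is not a substantively different argument.
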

We denote the Schur-Weyl basis for $(\bbC^d)^{\ot m}$ (see \thmref{Schur Weyl})  by $\ket{\m}\ket{q}\ket{p}$,
where $q$ is a basis vector for the $\m$-irrep $\GLirrep_{\m}^d$ of $\GL{d}$ and $p$ is a basis vector for the $\m$-irrep $\srep_{\mu}$ of $\symm{d}$.
One technicality is that although any orthonormal basis $\{\ket{q}\}_q$ of $\GLirrep_{\m}^d$ will suffice for our purposes,
we will need to pick a basis $\{\ket{p}\}_p$ of $\srep_{\mu}$ such that the matrix entries of $\srep_{\mu}(\pi)$ are real-valued for each $\pi \in \symm{d}$.
(This is used to establish Equation~\eqref{eq:here's-where-we-used-it} below.)
One basis that satisfies this property is known as the \emph{Gelfand-Tsetlin basis},
and the resulting matrices $\{\srep_{\mu}(\pi)\}_{\pi \in \symm{d}}$ give rise to \emph{Young's orthogonal representation}.
In this basis, the matrix elements $\srep_{\mu}(\pi)_{p, p'}:= \bra{p} \srep_{\mu}(\pi)\ket{p'}$ are real-valued,
and so each matrix $\srep_{\mu}(\pi)$ is an orthogonal matrix.
For an introduction to the Gelfand-Tsetlin basis, see~\cite[Appendix~B]{HGG09} and the citations contained therein.

Before proving \thmref{reduction to one subsystem}, we show some helper lemmas.

\begin{lem} \label{lem:rotate}
Let $\mathcal{H} = (\bbC^d)^{\otimes m}$,
and let $\mathcal{H}'$ be another Hilbert space.
Consider a matrix $N$ acting on $\mathcal{H} \otimes \mathcal{H'}$ of the form  
 \begin{align}
 	N=\sum_{\substack{\m,\m'\\ q,q'}} \ketbra{\m}{\m'} \ot \ketbra{q}{q'} \ot N_{\m,\m',q,q'},\label{eq:gg18}
 \end{align}
 where $N_{\m,\m',q,q'}$ is an operator acting on $\mathcal{H'}$ and  the $\ket{p}$ register of $\mathcal{H}$. Then it holds that 
\begin{align}
	\E_{\boldsymbol{U}\sim \GL{d}}\left(\boldsymbol{U}^{\ot m} \cdot N\cdot (\boldsymbol{U^{\dagger}})^{\ot m}\right)&=\sum_{\m} \ketbra{\m}{\m} \ot \iden \ot N_{\m},\label{eq:gg19}
\end{align}
where $N_{\m}$ is an operator acting on $\mathcal{H'}$ and  the $\ket{p}$ register of $\mathcal{H}$.
\end{lem}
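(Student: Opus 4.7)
The plan is to apply Schur--Weyl duality to bring $U^{\otimes m}$ into block-diagonal form on the Schur basis, expand the conjugation $U^{\otimes m} N (U^{\dagger})^{\otimes m}$ term by term, and then evaluate the Haar average via Schur's lemma.

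First, I would use \thmref{Schur Weyl} to write, in the Schur basis indexed by $\ket{\m}\ket{q}\ket{p}$,
\begin{equation*}
U^{\otimes m} \;=\; \sum_{\m} \ketbra{\m}{\m} \,\otimes\, \GLirrep^d_\m(U) \,\otimes\, \iden_{p},
\end{equation*}
so $U^{\otimes m}$ acts trivially on the $\ket{p}$-register and on $\mathcal{H}'$. Because the operators $N_{\m,\m',q,q'}$ in~\eqref{eq:gg18} live entirely on $\mathcal{H}'$ and on the $\ket{p}$-register, they commute with the nontrivial action of $U^{\otimes m}$, and each summand of $N$ conjugates as
\begin{equation*}
U^{\otimes m}\bigl(\ketbra{\m}{\m'} \otimes \ketbra{q}{q'} \otimes N_{\m,\m',q,q'}\bigr)(U^\dagger)^{\otimes m}
 = \ketbra{\m}{\m'} \otimes \bigl(\GLirrep^d_\m(U)\ketbra{q}{q'}\GLirrep^d_{\m'}(U)^\dagger\bigr) \otimes N_{\m,\m',q,q'}.
\end{equation*}

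Next, I would take the Haar expectation and apply Schur's lemma to each of the middle factors. Since the $\GLirrep^d_\m$ are pairwise non-isomorphic irreducible polynomial representations of $\GL{d}$ (equivalently, irreducible unitary representations of $\unitary{d}$, which is enough because the Haar average is over $\unitary{d}$), Schur's lemma yields
\begin{equation*}
\E_{\boldsymbol{U}\sim \GL{d}}\Bigl[\GLirrep^d_\m(\boldsymbol{U})\ketbra{q}{q'}\GLirrep^d_{\m'}(\boldsymbol{U})^\dagger\Bigr]
 = \begin{cases} \dfrac{\delta_{q,q'}}{\dim\GLirrep^d_\m}\,\iden & \text{if } \m = \m', \\ 0 & \text{if } \m \neq \m'. \end{cases}
\end{equation*}

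Substituting this back into the conjugation, only the diagonal terms $\m=\m'$, $q=q'$ survive, and summing over $q$ gives
\begin{equation*}
\E_{\boldsymbol{U}\sim \GL{d}}\bigl[\boldsymbol{U}^{\otimes m} N (\boldsymbol{U}^\dagger)^{\otimes m}\bigr]
 = \sum_\m \ketbra{\m}{\m} \otimes \iden \otimes N_\m,
 \qquad N_\m := \frac{1}{\dim\GLirrep^d_\m}\sum_q N_{\m,\m,q,q},
\end{equation*}
which is precisely \eqref{eq:gg19}. The only real subtlety is bookkeeping across the three Schur-basis registers and noticing that $N_{\m,\m',q,q'}$ commutes with the action of $U^{\otimes m}$ (because it touches only $\mathcal{H}'$ and the $\ket{p}$-register); once that is in place, the whole statement is a direct Schur's lemma computation and no further estimates are needed.
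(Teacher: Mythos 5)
Your proof is correct and follows essentially the same route as the paper's: conjugate the Schur-block form of $N$ by $\boldsymbol{U}^{\otimes m}$, reduce to the Haar average $\E[\GLirrep_\m^d(\boldsymbol{U})\ketbra{q}{q'}\GLirrep_{\m'}^d(\boldsymbol{U})^\dagger]$, and kill the off-diagonal terms by Schur's lemma. The paper merely spells out the intertwiner verification and the trace computation of the constant, which you state as a known consequence of Schur orthogonality; no substantive difference.
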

\begin{proof}
To begin, we calculate
\begin{equation}\label{eq:thing-im-calculating}
\E_{\boldsymbol{U}\sim \GL{d}}\left(\boldsymbol{U}^{\ot m} \cdot N\cdot (\boldsymbol{U^{\dagger}})^{\ot m}\right)
= \sum_{\substack{\m,\m'\\ q,q'}} \ketbra{\m}{\m'} \ot 
	\left( \E_{\boldsymbol{U}\sim \GL{d}} \GLirrep_\m^d(\boldsymbol{U})\cdot \ketbra{q}{q'}\cdot \GLirrep_{\m'}^d(\boldsymbol{U})^\dagger \right) \ot N_{\m,\m',q,q'}.
\end{equation}
For each $\mu, \mu', q, q'$, the matrix
\begin{equation*}
T_{\mu, \mu', q, q'} = \E_{\boldsymbol{U}\sim \GL{d}}[ \GLirrep_\m^d(\boldsymbol{U})\cdot \ketbra{q}{q'}\cdot \GLirrep_{\m'}^d(\boldsymbol{U})^\dagger ]
\end{equation*} 
is an intertwining operator operator for $\GLirrep_\m^d$ and $\GLirrep_{\m'}^d$, because for each $V \in \GL{d}$,
\begin{align*}
\GLirrep_\m^d(V) \cdot T_{\mu, \mu', q, q'}
&= \GLirrep_\m^d(V) \cdot\E_{\boldsymbol{U}\sim \GL{d}}[ \GLirrep_\m^d(\boldsymbol{U})\cdot \ketbra{q}{q'} \cdot\GLirrep_{\m'}^d(\boldsymbol{U})^\dagger ]\\
&= \E_{\boldsymbol{U}\sim \GL{d}}[ \GLirrep_\m^d(V\boldsymbol{U})\cdot \ketbra{q}{q'}\cdot \GLirrep_{\m'}^d(\boldsymbol{U})^\dagger] \\
&= \E_{\boldsymbol{W}\sim \GL{d}}[ \GLirrep_\m^d(\boldsymbol{W})\cdot \ketbra{q}{q'} \cdot \GLirrep_{\m'}^d(V^\dagger \boldsymbol{W})^\dagger] \\
&= \E_{\boldsymbol{W}\sim \GL{d}}[ \GLirrep_\m^d(\boldsymbol{W})\cdot \ketbra{q}{q'} \cdot \GLirrep_{\m'}^d(\boldsymbol{W})^\dagger] \cdot \GLirrep_{\m'}^d(V)\\
&=T_{\mu, \mu', q, q'} \cdot \GLirrep_{\m'}^d(V).
\end{align*}
As a result, Schur's lemma states that $T_{\mu, \mu', q, q'}$ is zero when $\mu \neq \mu'$,
and a multiple of the identity $c_{\mu, q, q'} \cdot \iden$ when $\mu = \mu'$.
Indeed, we may compute $c_{\mu, q, q'}$ exactly as
\begin{align*}
c_{\mu, q, q'} 
= \frac{1}{\dim(\GLirrep_\m^d)} \cdot \Tr[T_{\mu, \mu, q, q'}]
&= \frac{1}{\dim(\GLirrep_\m^d)} \cdot \E_{\boldsymbol{U}\sim \GL{d}}[ \Tr[\GLirrep_\m^d(\boldsymbol{U})\cdot \ketbra{q}{q'}\cdot \GLirrep_{\m}^d(\boldsymbol{U})^\dagger] ]\\
&= \frac{1}{\dim(\GLirrep_\m^d)} \cdot \E_{\boldsymbol{U}\sim \GL{d}}[\bra{q'}\cdot \GLirrep_{\m}^d(\boldsymbol{U})^\dagger\GLirrep_\m^d(\boldsymbol{U})\cdot \ket{q}]\\
&= \frac{1}{\dim(\GLirrep_\m^d)} \cdot \E_{\boldsymbol{U}\sim \GL{d}}[\braket{q'}{q}]
 = \left\{\begin{array}{cl}
1/ \dim(\GLirrep_\m^d) & \text{if $q = q'$,}\\
0 & \text{otherwise}.
\end{array}\right.
\end{align*}
Overall, then, $T_{\mu, \mu', q, q'}$ is $(1/ \dim(\GLirrep_\m^d)) \cdot \iden$ if $\mu = \mu'$ and $q = q'$ and zero otherwise.
Thus,
\begin{equation*}
\eqref{eq:thing-im-calculating}
= \sum_{{\m, q}} \ketbra{\m}{\m} \ot \Big(\frac{1}{\dim(\GLirrep_\m^d)}\cdot \iden\Big) \ot N_{\m,\m,q,q}
= \sum_{\m} \ketbra{\m}{\m}\ot \iden \ot \Big(\frac{1}{\dim(\GLirrep_\m^d)}\cdot \sum_{q} N_{\mu, \mu, q, q}\Big).
\end{equation*}
The lemma follows by taking $N_\mu =(1/ \dim(\GLirrep_\m^d)) \cdot  \sum_{q} N_{\mu, \mu, q, q}$.
\end{proof}

\begin{lem}[EPR state in an irrep]\label{lem:EPR}
Given $\mu \vdash m$, we define the EPR state corresponding to the permutation irrep $\sirrep_\mu$ as
\begin{equation*}
\ket{\mathrm{EPR}_\mu} = \frac{1}{\sqrt{\dim(\mu)}} \cdot \sum_p \ket{p} \otimes \ket{p},
\end{equation*}
where the sum ranges over basis vectors of $\sirrep_\mu$.
Then
\begin{equation*}
\E_{\boldsymbol{\pi}\sim \symm{d}}[\srep_A(\boldsymbol{\pi}) \ot  \srep_B(\boldsymbol{\pi})]
= \sum_{\mu} \ketbra{\mu}{\mu}_A\otimes \ketbra{\mu}{\mu}_B \otimes \iden_A \otimes \iden_B
		\otimes \ketbra{\mathrm{EPR}_{\mu}}{\mathrm{EPR}_{\mu}}_{A, B},
\end{equation*}
where the two identity matrices act on the $\ket{q}$ registers of Hilbert spaces $A$ and $B$.
\end{lem}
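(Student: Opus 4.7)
The plan is to diagonalize both representations in the Schur basis, then reduce the claim to a standard statement about the invariant subspace of $\sirrep_\mu \otimes \sirrep_\nu$ under the diagonal action of $\symm{m}$. By \thmref{Schur Weyl} applied on each side (with $\Usch$ absorbed into the Schur basis labels),
\begin{equation*}
\srep_A(\pi) = \sum_{\mu} \ketbra{\mu}{\mu}_A \otimes \iden_A \otimes \sirrep_\mu(\pi), \qquad \srep_B(\pi) = \sum_{\nu} \ketbra{\nu}{\nu}_B \otimes \iden_B \otimes \sirrep_\nu(\pi),
\end{equation*}
where each $\iden$ acts on the corresponding $\ket{q}$ register. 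Tensoring and averaging over $\boldsymbol{\pi} \sim \symm{m}$ gives
\begin{equation*}
\E_{\boldsymbol{\pi}}[\srep_A(\boldsymbol{\pi}) \ot \srep_B(\boldsymbol{\pi})] = \sum_{\mu,\nu} \ketbra{\mu}{\mu}_A \ot \ketbra{\nu}{\nu}_B \ot \iden_A \ot \iden_B \ot P_{\mu,\nu},
\end{equation*}
where $P_{\mu,\nu} = \E_{\boldsymbol{\pi}}[\sirrep_\mu(\boldsymbol{\pi}) \ot \sirrep_\nu(\boldsymbol{\pi})]$.

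Next I would observe that $P_{\mu,\nu}$ is the orthogonal projector onto the subspace of $\sirrep_\mu \otimes \sirrep_\nu$ fixed by the diagonal $\symm{m}$ action; left- and right-invariance under $\symm{m}$ follow from the group-averaging trick. Its rank is the multiplicity of the trivial representation in $\sirrep_\mu \otimes \sirrep_\nu$, which by character orthogonality equals
\begin{equation*}
\frac{1}{m!} \sum_{\pi \in \symm{m}} \chi_\mu(\pi) \chi_\nu(\pi) = \delta_{\mu,\nu},
\end{equation*}
where I have used that symmetric-group characters are real-valued. Hence $P_{\mu,\nu} = 0$ whenever $\mu \neq \nu$, which collapses the double sum above to the single sum $\sum_\mu \ketbra{\mu}{\mu}_A \otimes \ketbra{\mu}{\mu}_B \otimes \iden_A \otimes \iden_B \otimes P_{\mu,\mu}$, matching the shape claimed in the lemma.

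The remaining task, and the one piece that is genuinely delicate, is to identify $P_{\mu,\mu}$ with $\ketbra{\mathrm{EPR}_\mu}{\mathrm{EPR}_\mu}$. This is where the reality of the basis matters: in the Gelfand--Tsetlin basis flagged in the preamble, each $\sirrep_\mu(\pi)$ is a real orthogonal matrix, so $\sirrep_\mu(\pi)\sirrep_\mu(\pi)^{T} = \iden$. A direct computation then gives
\begin{equation*}
(\sirrep_\mu(\pi) \ot \sirrep_\mu(\pi)) \ket{\mathrm{EPR}_\mu} = \frac{1}{\sqrt{\dim(\mu)}} \sum_{q,r} (\sirrep_\mu(\pi) \sirrep_\mu(\pi)^{T})_{q,r} \ket{q}\ket{r} = \ket{\mathrm{EPR}_\mu},
\end{equation*}
so $\ket{\mathrm{EPR}_\mu}$ lies in the invariant subspace; since that subspace is one-dimensional by the multiplicity calculation above, $P_{\mu,\mu} = \ketbra{\mathrm{EPR}_\mu}{\mathrm{EPR}_\mu}$ and the lemma follows. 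The main obstacle is really a bookkeeping one: if one worked in a generic unitary basis of $\sirrep_\mu$, the invariant vector would be $\sum_p \ket{p}\otimes \ket{\bar p}$ (complex conjugate on the second factor) rather than the symmetric EPR state, which is why the choice of Young's orthogonal representation is essential and was fixed in advance.
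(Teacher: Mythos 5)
Your proof is correct, but it takes a different route from the paper's. The paper works directly at the level of matrix coefficients: it expands $\E_{\boldsymbol{\pi}}[\sirrep_{\mu_A}(\boldsymbol{\pi}) \otimes \sirrep_{\mu_B}(\boldsymbol{\pi})]$ in the $\ket{p}$ basis, invokes the Schur orthogonality relations for matrix coefficients to evaluate $\E_{\boldsymbol{\pi}}[\sirrep_{\mu_A}(\boldsymbol{\pi})_{p_A,p_A'}^\dagger \sirrep_{\mu_B}(\boldsymbol{\pi})_{p_B,p_B'}]$, and then uses the reality of Young's orthogonal representation to drop the $\dagger$ and recognize the resulting sum $\frac{1}{\dim(\mu)}\sum_{p,p'}\ketbra{p}{p'}\otimes\ketbra{p}{p'}$ as the EPR projector. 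You instead argue structurally: identify the averaged operator as the projector onto the $\symm{m}$-invariant subspace of $\sirrep_\mu\otimes\sirrep_\nu$, compute its rank via orthogonality of characters to get $\delta_{\mu,\nu}$, and then exhibit $\ket{\mathrm{EPR}_\mu}$ as an explicit invariant unit vector (using reality of the basis to get $\sirrep_\mu(\pi)\sirrep_\mu(\pi)^T=\iden$). The two arguments are two faces of the same coin --- character orthogonality is the trace of Schur orthogonality for matrix coefficients --- but yours separates the rank count from the identification of the invariant vector, which makes the role of the Gelfand--Tsetlin/real-basis choice cleaner to see: you correctly flag that in a generic unitary basis the invariant vector would be $\sum_p\ket{p}\otimes\ket{\bar p}$ rather than the symmetric EPR state, which is exactly the subtlety the paper's footnote about Young's orthogonal representation is there to address. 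One small typographical point: the paper's statement and proof average over $\boldsymbol{\pi}\sim\symm{d}$, which appears to be a typo for $\symm{m}$ (the representation $\srep$ permutes the $m$ tensor factors of $(\bbC^d)^{\otimes m}$); you wrote $\symm{m}$, which is the intended group.
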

\begin{proof}
We begin by calculating
\begin{equation}\label{eq:about-to-use-orthogonality}
\E_{\boldsymbol{\pi}\sim \symm{d}}[\sirrep_{\m_A}(\boldsymbol{\pi}) \ot  \sirrep_{\m_B}(\boldsymbol{\pi})]
 = \sum_{\substack{p_A, p_A' \\ p_B, p_B'}} \ketbra{p_A}{p_A'}\ot\ketbra{p_B}{p_B'} \cdot
	\E_{\boldsymbol{\pi}\sim \symm{d}}[\sirrep_{\m_A}(\boldsymbol{\pi})_{p_A, p_A'} \cdot  \sirrep_{\m_B}(\boldsymbol{\pi})_{p_B, p_B'}].
\end{equation}
The Schur orthogonality relations state that
\begin{equation*}
\E_{\boldsymbol{\pi}\sim \symm{d}}[\sirrep_{\m_A}(\boldsymbol{\pi})_{p_A, p_A'}^\dagger \cdot  \sirrep_{\m_B}(\boldsymbol{\pi})_{p_B, p_B'}]
= \left\{\begin{array}{cl}
1/ \dim(\mu_A) & \text{if $\m_A = \m_B$, $p_A = p_B$, and $p_A' = p_B'$,}\\
0 & \text{otherwise}.
\end{array}\right.
\end{equation*}
Recall that we have chosen our basis of $\sirrep_\mu$ so that $\sirrep_\mu(\pi)$ is a real-valued (orthogonal) matrix for each $\pi \in \symm{d}$.
Then
\begin{equation*}
\sirrep_{\m_A}(\pi)_{p_A, p_A'}^\dagger = \sirrep_{\m_A}(\pi)_{p_A, p_A'},
\end{equation*}
and so
\begin{equation}\label{eq:here's-where-we-used-it}
\E_{\boldsymbol{\pi}\sim \symm{d}}[\sirrep_{\m_A}(\boldsymbol{\pi})_{p_A, p_A'} \cdot  \sirrep_{\m_B}(\boldsymbol{\pi})_{p_B, p_B'}]
= \left\{\begin{array}{cl}
1/ \dim(\mu_A) & \text{if $\m_A = \m_B$, $p_A = p_B$, and $p_A' = p_B'$,}\\
0 & \text{otherwise}.
\end{array}\right.
\end{equation}
As a result, \eqref{eq:about-to-use-orthogonality} is zero if $\m_A \neq \mu_B$,
and
\begin{equation*}
\eqref{eq:about-to-use-orthogonality}
= \frac{1}{\dim(\mu_A)} \cdot \sum_{p, p'} \ketbra{p}{p'} \otimes \ketbra{p}{p'}
= \ketbra{\mathrm{EPR}_{\mu_A}}{\mathrm{EPR}_{\mu_A}}.
\end{equation*}
if $\mu_A = \mu_B$. This allows us to express
\begin{align*}
\E_{\boldsymbol{\pi}\sim \symm{d}}[\srep_A(\boldsymbol{\pi}) \ot  \srep_B(\boldsymbol{\pi})]
&= \sum_{\mu_A, \mu_B} \ketbra{\mu_A}{\mu_A}\otimes \ketbra{\mu_B}{\mu_B} \otimes \iden \otimes \iden
		\otimes \E_{\boldsymbol{\pi}\sim \symm{d}}[\sirrep_{\m_A}(\boldsymbol{\pi}) \ot  \sirrep_{\m_B}(\boldsymbol{\pi})]\\
&= \sum_{\mu} \ketbra{\mu}{\mu}\otimes \ketbra{\mu}{\mu} \otimes \iden \otimes \iden
		\otimes \ketbra{\mathrm{EPR}_{\mu}}{\mathrm{EPR}_{\mu}}.
\end{align*}
This completes the proof.
\end{proof}

Next, we have the following immediate corollary of \lemref{EPR}.

\begin{cor}\label{cor:apply-epr}
Consider an operator of the form
\begin{equation*}
O = \sum_{\mu_A, \mu_B} \ketbra{\mu_A}{\mu_A}\ot \ketbra{\mu_B}{\mu_B} \ot \iden_A \ot \iden_B \ot O_{\mu_A, \mu_B},
\end{equation*}
where the two identity matrices act on the $\ket{q}$ registers of Hilbert spaces $A$ and $B$,
and the $O_{\mu_A, \mu_B}$ matrix acts on the $\ket{p}$ registers of $A$ and $B$,
Next, let $Z$ be the matrix
\begin{equation*}
Z = \E_{\boldsymbol{\pi}\sim \symm{d}}[\srep_A(\boldsymbol{\pi}) \ot  \srep_B(\boldsymbol{\pi})].
\end{equation*}
Then
\begin{equation*}
Z \cdot O \cdot Z = \sum_\mu c_\mu \cdot \ketbra{\mu}{\mu}_A \ot \ketbra{\mu}{\mu}_B \ot \iden_A \ot \iden_B \ot \ketbra{\mathrm{EPR}_{\mu}}{\mathrm{EPR}_{\mu}},
\end{equation*}
for some constants $c_\mu$.
\end{cor}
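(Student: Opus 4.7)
The plan is to obtain the corollary as a direct computation from Lemma~\ref{lem:EPR}. By that lemma,
\[
Z \;=\; \sum_{\nu} \ketbra{\nu}{\nu}_A \ot \ketbra{\nu}{\nu}_B \ot \iden_A \ot \iden_B \ot \ketbra{\mathrm{EPR}_{\nu}}{\mathrm{EPR}_{\nu}},
\]
so $Z$ already has essentially the structure claimed on the right-hand side. The strategy is simply to multiply the three factors $Z \cdot O \cdot Z$ and observe how the different registers collapse.

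First, I would analyze the $\ket{\mu}$ registers. The two outer $Z$'s contain projectors $\ketbra{\nu}{\nu}_A \ot \ketbra{\nu}{\nu}_B$ and $\ketbra{\nu'}{\nu'}_A \ot \ketbra{\nu'}{\nu'}_B$, respectively, while the inner $O$ contains $\ketbra{\mu_A}{\mu_A}_A \ot \ketbra{\mu_B}{\mu_B}_B$. Using orthogonality of the $\ket{\mu}$ basis, only terms with $\nu = \nu' = \mu_A = \mu_B$ survive, which reduces the whole triple product to a single index $\mu$ with $\mu_A = \mu_B = \mu$. Next, the $\ket{q}$ registers act through $\iden_A \otimes \iden_B$ on all three factors, so those remain $\iden_A \otimes \iden_B$ after multiplication.

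The only nontrivial contraction is on the $\ket{p}$ registers, where I would compute
\[
\ketbra{\mathrm{EPR}_{\mu}}{\mathrm{EPR}_{\mu}} \cdot O_{\mu,\mu} \cdot \ketbra{\mathrm{EPR}_{\mu}}{\mathrm{EPR}_{\mu}}
\;=\; c_\mu \cdot \ketbra{\mathrm{EPR}_{\mu}}{\mathrm{EPR}_{\mu}},
\]
with the scalar $c_\mu = \bra{\mathrm{EPR}_{\mu}} O_{\mu,\mu} \ket{\mathrm{EPR}_{\mu}}$. This uses nothing more than the fact that $\ketbra{\mathrm{EPR}_\mu}{\mathrm{EPR}_\mu}$ is a rank-one projector, so sandwiching any operator between two copies of it returns a scalar multiple of that projector. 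Assembling the pieces then gives exactly the claimed expression for $Z \cdot O \cdot Z$.

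There is essentially no obstacle here: the corollary is really a bookkeeping consequence of Lemma~\ref{lem:EPR}, and the only care needed is in matching up the three register structures correctly so that the index collapse $\mu_A = \mu_B = \mu$ is visible and the $\ket{q}$ identities pass through. Once that is done, the values of the constants $c_\mu$ can be read off directly, though the corollary only needs to assert their existence.
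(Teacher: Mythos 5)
Your proposal is correct and follows exactly the paper's own argument: apply Lemma~\ref{lem:EPR} to write $Z$ in block form, collapse the $\ket{\mu}$ indices by orthogonality, pass the $\ket{q}$ identities through, and use that $\ketbra{\mathrm{EPR}_\mu}{\mathrm{EPR}_\mu}$ is a rank-one projector to get $c_\mu = \bra{\mathrm{EPR}_\mu} O_{\mu,\mu} \ket{\mathrm{EPR}_\mu}$. No gaps.
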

\begin{proof}
By \lemref{EPR},
\begin{align*}
Z \cdot O \cdot Z
& = \sum_{\mu} \ketbra{\mu}{\mu}_A\ot \ketbra{\mu}{\mu}_B \ot \iden_A \ot \iden_B
	\ot (\ketbra{\mathrm{EPR}_{\mu}}{\mathrm{EPR}_{\mu}} \cdot O_{\mu, \mu} \cdot \ketbra{\mathrm{EPR}_{\mu}}{\mathrm{EPR}_{\mu}})\\
& = \sum_\mu c_\mu \cdot \ketbra{\mu}{\mu}_A \ot \ketbra{\mu}{\mu}_B \ot \iden_A \ot \iden_B \ot \ketbra{\mathrm{EPR}_{\mu}}{\mathrm{EPR}_{\mu}},
\end{align*}
where $c_\mu = \bra{\mathrm{EPR}_\mu}\cdot O_{\mu, \mu}\cdot \ket{\mathrm{EPR}_\mu}$. This completes the proof.
\end{proof}

Now we prove \thmref{reduction to one subsystem}.

\begin{proof}[Proof of \thmref{reduction to one subsystem}]
Given $\psi \in \{\varphi, \gamma\}$, consider the state $M_\psi$ defined as
	\[M_\psi:=\E_{\boldsymbol{U}_A\sim \GL{d},\boldsymbol{V}_B\sim \GL{d}}\left(\boldsymbol{U}_A\otimes \boldsymbol{V}_B\cdot \ketbra{\psi}{\psi} \cdot \boldsymbol{U}_A^{\dagger}\otimes\boldsymbol{V}_B^{\dagger}\right)^{\ot m}.\]
Using the left and right invariance property of the Haar measure and the commutation between $\srep_A$ and $\GLrep_A$ (and likewise for $\srep_B$ and $\GLrep_B$), we can see that the mixed state $M_\psi$ remains invariant under the following permutations and unitary rotations:

\begin{multicols}{2}
\begin{enumerate}
\item $\E_{\boldsymbol{U}\sim \GL{d}}\left(\boldsymbol{U}_A^{\ot m} \cdot M_\psi\cdot (\boldsymbol{U}_A^{\dagger})^{\ot m}\right)=M_\psi$,
\item $\E_{\boldsymbol{V}\sim \GL{d}}\left(\boldsymbol{V}_B^{\ot m}\cdot M_\psi \cdot (\boldsymbol{V}_B^{\dagger})^{\ot m}\right)=M_\psi$,
\item $\E_{\boldsymbol{\pi}\sim \symm{d}}\left(\srep_A(\boldsymbol{\pi}) \ot  \srep_B(\boldsymbol{\pi})\cdot M_\psi \right)=M_\psi$,
\item $\E_{\boldsymbol{\pi}\sim \symm{d}}\left(M_\psi\cdot \srep_A(\boldsymbol{\pi}) \ot  \srep_B(\boldsymbol{\pi})\right)=M_\psi$.
\end{enumerate}
\end{multicols}
We can now apply the results of \lemref{rotate} and \corref{apply-epr} to put the mixed state $M_\psi$ in the following form
 \begin{align}
M_\psi=\sum_{\m} c_{\psi, \mu} \cdot \ketbra{\m}{\m}_A \ot \ketbra{\m}{\m}_B  \ot \iden_A \ot \iden_B \ot \ketbra{\mathrm{EPR}_\m}{\mathrm{EPR}_\m}_{AB}.
 \end{align}
We can therefore interpret the density matrix $M_\psi$ as corresponding to a mixed state with one element in the mixture for each block $\mu$.
In this case, measuring the block $\mu$ can be done without loss of generality, as it does not perturb the state.
That can be done entirely on the~$A$ subsystem by performing the projective measurement $\{\ketbra{\m}{\mu}_A\}$.
Having done this measurement and received outcome $\boldsymbol{\mu}$, the state is equal to
\begin{equation*}
\ketbra{\boldsymbol{\mu}}{\boldsymbol{\mu}}_A \ot \ketbra{\boldsymbol{\mu}}{\boldsymbol{\mu}}_B  \ot \iden_A \ot \iden_B \ot \ketbra{\mathrm{EPR}_{\boldsymbol{\mu}}}{\mathrm{EPR}_{\boldsymbol{\mu}}}_{AB},
\end{equation*}
 regardless of whether $\psi = \varphi$ or $\gamma$.
 Hence, no further information can be learned about~$\psi$ by performing any further measurements,
 and this implies that measuring only the $A$ subsystem is without loss of generality.
\end{proof}

 \section*{Acknowledgements} We thank Rolando La Placa for useful discussions in the early stage of this work. We also thank Aram Harrow for helpful discussions and for helpful feedback on an earlier draft of this work and Ashley Montanaro for helpful discussions. MS was supported by NSF grant CCF-1729369 and a Samsung Advanced Institute of Technology Global Research Partnership. JW was funded by ARO contract W911NF-17-1-0433.

\bibliographystyle{alpha}
\bibliography{main}

\end{document}